\newtheorem{prop}{Proposition}[section]
\newtheorem{corr}[prop]{Corollary}
\newtheorem{theorem}[prop]{Theorem}
\newtheorem{lem}[prop]{Lemma}
\newtheorem{defi}[prop]{Definition}
\newtheorem{rema}[prop]{Remark}
\newtheorem{remas}[prop]{Remarks}
\numberwithin{equation}{section} 
\newcommand{\C}{\mathbb{C}} 
\newcommand{\R}{\mathbb{R}} 
\newcommand{\N}{\mathbb{N}} 
\newcommand{\veps}{\varepsilon} 
\newcommand{\id}{\mathbf{1}}
\newcommand{\jbs}{J_{\bmin}}
\newcommand{\loc}{\mathcal{L}[ \psi]}
\newcommand{\pca}{\mathcal{P}_{\cc_1}}
\newcommand{\pcb}{\mathcal{P}_{\cc_2}}
\newcommand{\pcn}{\mathcal{P}_{\cc_1\cc_2}}
\newcommand{\ol}{\overline}
\newcommand{\lip}{\mathrm{Lip}}  
\newcommand{\dist}{\mathrm{dist}} 
\newcommand{\calL}{\mathcal{L}}
\newcommand{\Ht}{\tilde{H}}
\newcommand{\Hf}{\mathcal{H}_{\mathrm{Fermi}}}
\newcommand{\Hfa}{\mathcal{H}^\alpha}
\newcommand{\ltn}{L^2(\mathbb{R}^{3N})}
\newcommand{\lc}{L^2(\mathbb{R}(\mathcal{C}))}
\newcommand{\Dn}{\mathcal{D}_N^3}
\newcommand{\Da}{\mathcal{D}^{at}}
\newcommand{\cc}{\mathcal{C}}
\newcommand{\resbt}{(\tilde H_\bmin-\mu^\alpha)^{-1}}
\newcommand{\resm}{(\tilde H_{\beta_M}-\mu^\alpha_M)^{-1}}
\newcommand{\Pa}{P^\alpha}
\newcommand{\T}{\mathcal{T}}
\newcommand{\ia}{{\alpha'}}
\newcommand{\ias}{{\alpha^*}}
\newcommand{\U}{\mathcal{U}_\beta}
\newcommand{\Um}{\mathcal{U}_{\beta_M}}
\newcommand{\Wt}{\tilde{\mathcal{W}}_\bmin^\alpha}
\newcommand{\fA}{f_2}
\newcommand{\fB}{f_3}
\newcommand{\fkl}{f^{(k,l)}}
\newcommand{\phiA}{\phi}
\newcommand{\phiB}{\phi_2}
\newcommand{\phiC}{\phi_3}
\newcommand{\gamA}{\gamma_1}
\newcommand{\gamB}{\gamma_2}
\newcommand{\gamC}{\gamma_3}
\newcommand{\las}{\langle}
\newcommand{\ras}{\rangle}
\newcommand{\lk}{\Big(}
\newcommand{\rk}{\Big)}
\newcommand{\sumi}{\sum_{i\in \cc_1}}
\newcommand{\sumj}{\sum_{j\in \cc_2}}
\newcommand{\sumij}{\sum_{\substack{i\in \cc_1\\j\in \cc_2}}}
\newcommand{\re}{\operatorname{Re}}
\newcommand{\gagas}{2 \re \gamA \overline{\gamC}}
\newcommand{\supp}{\operatorname{supp}}
\newcommand{\zerz}{\zeta_{R,Z}}
\newcommand{\tpsi}{\hat \psi}
\newcommand{\bsp}{{{\beta}^0}}
\newcommand{\bmin}{{\beta}}
\newcommand{\D}{|D|}
\newcommand{\nF}{F}  
\newcommand{\nG}{\Upsilon}  
\newcommand{\ga}{\Gamma} 
\title[Van der Waals-London interaction of atoms with
pseudo-relativistic kinetic energy]{Van der Waals-London interaction of atoms with
pseudo-relativistic kinetic energy}
\author{Jean-Marie Barbaroux}
\address{Jean-Marie Barbaroux\\
 Aix Marseille Univ, Universit\'e de Toulon, CNRS, CPT, Marseille, France.}
\email{jean-marie.barbaroux@univ-tln.fr}
\author{Michael C. Hartig}
\address{
Michael C. Hartig\\
 Aix Marseille Univ, Universit\'e de Toulon, CNRS, CPT, Marseille, France.}
\email{michael.hartig@univ-tln.fr}
\author{Dirk Hundertmark}
\address{
Dirk Hundertmark\\
Institute for Analysis, Karlsruhe Institute of Technology (KIT), Englerstraße 2,
76131 Karlsruhe, Germany, and
Department of Mathematics
University of Illinois at Urbana-Champaign
1409 W. Green Street
Urbana, Illinois 61801-2975 
}
\email{dirk.hundertmark@kit.edu}
\author{Semjon Vugalter}
\address{
Semjon Vugalter\\
Institute for Analysis, Karlsruhe Institute of Technology (KIT), Englerstraße 2,
76131 Karlsruhe, Germany.
}
\email{semjon.wugalter@kit.edu}
\subjclass[2010]{Primary 81Q10; Secondary 46N50, 34L15, 47A10, 35P15}
\keywords{van der Waals-London force, Axilrod--Teller--Muto correction, pseudo--relativistic kinetic energy, exponential decay, localization error, Herbst operator}
\thanks{\today, version \jobname}
\begin{document}
\begin{abstract}
We consider a multiatomic system where the nuclei are assumed to be point charges at fixed positions. Particles interact via Coulomb potential and electrons have pseudo--relativistic kinetic energy. We prove the van der Waals-London law, which states that the interaction energy between neutral atoms decays as the sixth power of the distance $\D$ between the atoms. 
In the many atom case, we rigorously compute all the terms in the binding energy up to the order $\D^{-9}$ with error term of order $\mathcal{O}(\D^{-10})$. This yields the first proof of the famous Axilrod--Teller--Muto three--body correction to the van der Waals--London interaction, which plays an important role in 
	atom physics.  As intermediate steps we prove exponential decay of eigenfunctions of multiparticle Schrödinger operators with permutation symmetry imposed by the Pauli principle, and new estimates of the localization error.
\end{abstract}
\maketitle
\tableofcontents
\section{Introduction}\label{sec_int}
The van der Waals--London force plays a vital role in many natural phenomena. Its importance for the structure, stability and function of molecules and materials can hardly be overemphasized.  To give a few examples, the van der Waals--London force is needed to explain the condensation of water from vapor, the structural stability of DNA, and the binding between several layers of graphene to form graphite. \par
The importance of the van der Waals--London force is not restricted to the microscopic scale. The van der Waals--London forces are used to explain some biological processes and there are efforts in nanotechnology to take advantage of this attractive force. For further examples, see the introductory discussion in \cite{ioannis} or \cite{DiStasio-2014} and the references therein. 
\par 
Surprisingly enough there are only few mathematically rigorous results concerning the van der Waals--London force. In \cite{morgansimon1980}, J.~D.~Morgan and B.~Simon proved the existence of an asymptotic expansion of the interaction energy using perturbation theory. They note that this asymptotic series neither converges nor is Borel summable.  Moreover, under the \emph{assumption that individual atoms have no dipole nor  quadrupole moments}, the leading behaviour of their asymptotic series is $\mathcal O(\D^{-6})$, where $\D$ is the distance between two nuclei, but they do not give an explicit expression for the coefficient of the leading order term nor do they prove that the asymptotic starts with the term of order $\D^{-6}$. We will compare their method with ours in more detail later in the introduction.
\par
Another result concerning van der Waals--London interaction is obtained in \cite{lieb86} by E.~H.~Lieb and W.~E.~Thirring where they constructed a trial function to show that attractive energy between two atoms without permanent polarity is at least $-C\D^{-6}$ for some positive constant $C$.
\par 
This result was improved by I.~Anapolitanos and I.~M.~Sigal in \cite{ioannis}, who used the Feshbach--Schur method to obtain under some restrictions, which we will discuss later, the leading term of order $\D^{-6}$ for the intercluster energy in the nonrelativistic case with an error $\mathcal{O}(\D^{-7})$. 
Later the remainder term in the van der Waals--London force was analyzed in \cite{Anapolitanos} using again the Feshbach Schur method.  As our work shows, the bound for the remainder in \cite{Anapolitanos} is far from being optimal, see the discussion after Theorem \ref{thm_pre_03}. In a recent work, I.\ Anapolitanos and M.\ Lewin \cite{Anapolitanos-Lewin} considered the van der Waals--London interaction for molecules. 
	The difference between atoms and molecules is that it is easier for molecules to have a permanent dipole moment in their ground state. This leads to the possibility 
	of interactions decaying slower than the van der Waals--London interaction, or decaying with the same rate but having a different physical origin, see section C in \cite{Anapolitanos-Lewin}.
\par 
Note that all previous rigorous results were only proven for non--relativistic kinetic 
energies. For heavy atoms one should include relativistic effects for 
the electrons. This is one of the main goals of the work at hand. Our 
approach is purely variational and 
similar to the one used in \cite{VuZhis02,VuZhis03,VuZhis01}  to 
obtain the asymptotics of eigenvalues of multiparticle Schr\"odinger operators near the bottom of the essential spectrum and in 
\cite{BaVu04,BaVu03,BaVu01,BaVu02} to get the asymptotics for the binding energy of the Pauli--Fierz operator.

 Most  importantly, in Theorem \ref{thm_pre_03} below we rigorously prove the famous Axilrod--Teller--Muto $D^{-9}$ three body correction to the van der Waals--London interaction  which is a genuine non--additive three body effect and which plays an important role in the case of three or more interacting atoms in 
	atom physics \cite{Lilienfeld-Tkatchenko, Axilrod-Teller, DiStasio-2014, Muto}. 
	To the best of our knowledge, this has never been rigorously shown before.

In addition, we believe that our variational approach has several  advantages over other approaches using  the Feshbach--Schur map: First, to be able to use the Feshbach--Schur map, the authors in \cite{Anapolitanos, ioannis} need to show that the ground state is isolated before they could apply the Feshbach--Schur map. On the other hand, although we do not need this explicitly in this paper, our method also works when the ground state is not isolated from the continuum, see, for example,   \cite{BaVu03,BaVu01,BaVu02}, where this has been carried out in a different situation. Secondly, it is known from physical heuristics that the reason for the van der Waals--London attraction of neutral atoms is due to induced virtual dipole moments, which show up in high enough orders of perturbation theory. So on a heuristic level the origin of the van der Waals--London attraction is well--understood. These calculations are far from rigorous, however.  
Our variational approach uses a construction of trial function which is motivated by the physical intuition gained from second order perturbation theory to get a precise upper bound for the van der Waals--London attraction. 
To get a matching lower bound, we use geometric methods based on suitable partitions of unity of the configuration space which is an extension of ideas in \cite{VuZhis02, VuZhis03, VuZhis01}.  Thus our variational approach is not only motivated by informal calculations based on perturbation theory but it also justifies these calculations. Moreover, the inherent simplicity of our method -- follow perturbation theory and make it rigorous for \emph{upper and lower} bounds -- enables us to relatively get precise results for some higher order terms, given the complexity of the many--body problem. For example, for two atoms, we show that the terms of order $D^{-7}$ and $D^{-9}$ in the van der Waals--London interaction do not exist, while for three or more atoms we rigorously establish the Axilrod--Teller--Muto correction. 

	Comparing our method with the early work  of J.\ D.\ Morgan and B.\ Simon, it is important to mention that they also use trial functions for the upper bound and geometrical methods for the lower bound on the interaction energy. 
	The difference with the approach of Morgan and Simon and our work is that in \cite{morgansimon1980} the ground state energy was estimated with an error of order $D^{-1}$, to show that eigenvalues of a Schr\"odinger operator with inter--cluster interaction  converge to the eigenvalues of the cluster operators without inter--cluster interaction at large distances. They use  this then later to justify a clever perturbation theory approach. 
	In the work at hand, we estimate the ground state much more precisely using corrections terms motivated from formal second order perturbation theory. This allows us to not only obtain the leading order van der Waals--London term  but also rigorously establish higher order corrections, including the Axilrod--Teller--Muto correction.

We consider a molecule with $N$ electrons of charge $-e$ and spin $\frac12$, and $M$ pointwise nuclei with charges $eZ_l$ located at positions $X_l$ in $\R^3$, which we suppose to be fixed (Born-Oppenheimer approximation). We assume that the system is neutral, which means that $\sum_{l=1}^M Z_l=N$. The corresponding Hamiltonian is 
\begin{equation}\label{eq_int_10}
H:=\sum_{i=1}^N\left(T_i-\sum_{l=1}^M\frac{e^2Z_l}{|x_i-X_l|}\right)+\sum_{1\leq i<j\leq N} \frac{e^2}{|x_i-x_j|}+\sum_{1\leq k<l\leq M}\frac{e^2Z_kZ_l}{|X_k-X_l|}
\end{equation}
with $k$-th electron kinetic energy operator
\begin{equation}
T_k:= \left\{\begin{array}{ll}\sqrt{p_k^2+1}-1 &\mbox{ in the pseudo--relativistic case }\\
\quad\ \frac{p_k^2}{2}&\mbox{ in the nonrelativistic case }
\end{array}\right. 
\end{equation}
and form domain $H^{1/2}(\R^{3N})$ in the pseudo--relativistic case and $H^1(\R^{3N})$ in the nonrelativistic case.
As usual $p_k = - i\nabla_{x_k}$ denotes the momentum of the $k$-th electron. If $T_k$ is pseudo--relativistic, we assumed $Z_le^2\leq \frac{2}{\pi},$ which ensures that the Hamiltonian is semi--bounded from below, see  \cite{fefferman1986,lieb1988}.
\par 
In the main part of the paper we will focus on the pseudo--relativistic kinetic energy case $T_k=\sqrt{p_k^2+1}-1$ (see \cite{herbst} and references therein) although all the results hold for $T_k=\frac{p_k^2}{2}$ likewise. Here the Hamiltonian is written in atomic units, i.e. $c=\hbar=m=1$.
\par 
The phase space for a system of $N$ electrons, taking into account the Pauli-principle, is the antisymmetric tensor product of $N$ copies of $L^2(\R^3;\C^2)$, namely the space $\bigwedge^N L^2(\R^3;\C^2)$ of functions in $\bigotimes^N L^2(\R^3;\C^2)$ that are antisymmetric with respect to transpositions of pairs of position and spin particle variables $(x_i, s_i)$ and $(x_j,s_j)$, for $i\neq j$.
\par 
The operator $H$ we consider does only depend on the coordinate variables $x_i$, but not on spin variables $s_i$. Hence we consider $H$ to act on the projection of $\bigwedge^N L^2(\R^3;\C^2)$ onto the space of functions depending on coordinates alone, that is, on the space $\Hf$ defined by
\begin{equation}\label{eq_int_20}
\Hf:=\Big\{ \langle\mathfrak{s},\Psi\rangle_{\mathrm{spin}} \vert \Psi\in \bigwedge ^N L^2(\R^3;\C^2), \mathfrak{s}:\Big\{-\frac{1}{2},\frac{1}{2}\Big\}^N\rightarrow\C \Big\}
\end{equation}
where
\begin{equation}\nonumber
\langle \mathfrak{s},\Psi\rangle_\mathrm{spin} :=
	\sum_s\bar{\mathfrak{s}}(s_1,\cdots,s_N)\Psi(x_1,s_1,\cdots,x_N,s_N).
\end{equation}
Note that $\Hf$ is a subspace of $\ltn$.

The condition of antisymmetry with respect to transposition of the particle variables implies certain symmetry properties for permutations of coordinate variables after decoupling of the spin variables. Namely, permutations of electrons transform the functions according to a Young pattern with at most two columns as described in \cite[§ 7.3.]{hamermesh1962}. Note that for more than two particles a function which is completely symmetric under transposition of coordinate variables can never be antisymmetric under transposition of the full particle variables, since the spin can only attain two values.
\par
More precisely, let $S_N$ be the group of permutations of $N$ electrons. For any $\pi\in S_N$ let $\T_\pi: \Hf \rightarrow \Hf$ with
\begin{equation}\label{eq_int_11}
\T_\pi\psi(x_1,\cdots,x_N):=\psi(x_{\pi^{-1}(1)},\cdots,x_{\pi^{-1}(N)})
\end{equation}
be the operator that realizes a permutation on the particle variables.
\par 
Let $\alpha$ be an irreducible representation of the group $S_N$ and $\Pa$ the projection on the subspace of functions transformed under the action of operators $\T_\pi$ according to the representation $\alpha$. These projections decompose the space $\Hf$ into a finite number of orthogonal subspaces $\Hfa:=\Pa\Hf$ such that
\begin{equation}
\Hf=\bigoplus_{\alpha\in\mathcal{A}}\Hfa ,
\end{equation}
where $\mathcal{A}$ is the set of all irreducible representations of the group $S_N$ corresponding to a Young pattern with at most two columns. Note that for such $\alpha$, we have $P^\alpha\Hf=P^\alpha \ltn$.  In fact, studying the operator $H$ on the subspaces $P^\alpha L^2(\R^{3N})$ gives us complete information on the spectrum of the operator on $\Hf$.
To that end let
\begin{equation}\label{eq_int_04}
H^\alpha:=HP^\alpha
\end{equation}
be the operator $H$ restricted to the space $\Hfa$ and
\begin{equation}\label{eq_pre_01}
E_{(X_1,\cdots,X_M)}^\alpha:=\inf\sigma(H^\alpha).
\end{equation}
In the work at hand, we will compute the interaction energy for fixed positions of the nuclei, which is the difference between $E^\alpha_{(X_1,\cdots,X_M)}$ and the sum of ground state energies of atoms. Let us start with the simplest case of a diatomic molecule, i.e. $M=2$.
\subsection{Diatomic molecules}

\par 
Let $\cc\subsetneq \{1,\cdots,N\},\  \cc\neq \emptyset$ be an arbitrary subsystem of a system of $N$ electrons. We define $\R(\cc)$ as the vector space of position vectors $(x_i)_{i\in \cc}$ of particles in $\cc$. Note that this space is isomorphic to $\R^{3\sharp\cc}$, where $\sharp \cc$ is the number of elements in $\cc$.
We let $L^2(\R(\cc))$ be the space of $L^2$-functions with arguments in $\R(\cc)$. Denote by $L^2(\R(\cc))^\perp$ the orthogonal complement in $L^2(\R^{3N})$ of $L^2(\R(\cc))$.\par 
For particles in $\cc$ interacting via Coulomb potential with a nucleus at the origin of charge $eZ$ we define the Hamiltonian 
\begin{equation}\label{eq_int_02}
 \tilde H_\cc^Z:= \sum_{i\in \cc} T_i - \sum_{i\in \cc} \frac{e^2 Z}{|x_i|}+\sum_{\substack{i,j\in \cc\\ i<j}}\frac{e^2}{|x_i-x_j|}
\end{equation}
acting on $L^2(\R(\cc))$. We extend the operator by the identity in $L^2(\R(\cc))^\perp$ to an operator acting on functions in $\ltn$.
In abuse of notation we will write $\tilde H_\cc^Z$ for both, the one acting on $L^2(\R(\cc))$ and the operator acting on $L^2(\R(\cc))\oplus~L^2(\R(\cc))^\perp$.\par 
Let $S(\cc)$ be the group of permutations within $\cc$. Obviously $S(\cc)$ is a subgroup of $S_N$. Consider $\alpha_\cc$ to be an irreducible representation of $S(\cc)$.
\begin{defi}\label{def_01} For $\alpha$ a type of irreducible representation of $S_N$, we say that $\alpha'_\cc$ is \emph{induced} by $\alpha$ and write
 $\alpha'_\cc\prec\alpha$, if $\alpha'_\cc$ is contained in the restriction of $\alpha$ to $S(\cc)$, see \cite[p. 94-98]{hamermesh1962}.\par 
 \end{defi}
In the same way as the space $\Hf$ can be decomposed into the spaces $\Hfa$, the corresponding Fermi subspace of $L^2(\R(\cc))$ can be decomposed into subspaces $$P^{\alpha_\cc}\lc$$ where $\alpha_\cc$ runs over all irreducible representations of $S(\cc)$ corresponding to a Young pattern of at most two columns.\par 
We will consider a cluster decomposition $\beta=(\cc_1,\cc_2)$ of the original system $\{1,\cdots,N\}$ into clusters $\cc_1$ and $\cc_2$ such that $\cc_1\cup\cc_2=\{1,\cdots,N\}$ and $\cc_1\cap \cc_2=\emptyset$. Define $\mathcal{D}_N^2$ as the set of all such decompositions. Decompositions where the number of electrons in $\cc_1$, $\sharp\cc_1=Z_1$ and the number of electrons in $\cc_2$, $\sharp\cc_2=Z_2$ will be called \emph{atomic} decomposition $\Da \subset \mathcal{D}_N^2$.\par 
For the decomposition $\beta =(\cc_1,\cc_2)$ we define the intercluster interaction
\begin{equation}
I_\beta:=\sum_{i\in \cc_1} \frac{-e^2Z_2}{|x_i-X_2|}+\sum_{j\in \cc_2} \frac{-e^2Z_1}{|x_j-X_1|}+\sumij \frac{e^2}{|x_i-x_j|}+\frac{e^2Z_1Z_2}{|X_2-X_1|}
\end{equation}
and set the cluster Hamiltonian $H_\beta$ to be
\begin{equation}
H_\beta := H-I_\beta.
\end{equation}
In other words, $H_\beta$ is the operator where particles from different subsystems do not interact.
Note that for each $\beta \in \mathcal{D}_N^2$ we have $L^2(\R(\cc_1))^\perp=L^2(\R(\cc_2))$ . The symmetry group of this Hamiltonian we consider is $S_\beta:=S(\cc_1)\times S(\cc_2)\subset S_N$, the group of permutations which leave the cluster decomposition $\beta$ intact.
We use the same notion of inducing of representations as above. Since $S_\beta$ is a direct product of two groups, the irreducible representations $\alpha_\beta$ of $S_\beta$ are direct products too. In particular, for any irreducible representation $\alpha'_\beta\prec \alpha$ of $S_\beta$ there exists a unique pair $\alpha'_{\cc_1}\prec \alpha$ and $\alpha'_{\cc_2} \prec\alpha$ such that
\begin{equation} 
 \ia _{\cc_1}\otimes \ia_{\cc_2}\cong \alpha'_\beta,
 \end{equation}
see \cite[p. 110-114]{hamermesh1962}. We take $P^{\alpha'_\beta}$ to be the projection in $\Hf$ onto functions of symmetry type $\alpha'_\beta$.
Letting 
\begin{equation}
H^{\alpha'_ \beta}_\beta:=H_\beta P^{\alpha'_\beta}
\ \text{ and }\ 
H_\beta^\alpha:= \sum_{\alpha'_\beta\prec\alpha} H_\beta^{\alpha'_\beta},
\end{equation}
we define
\begin{equation}\label{eq_int_08}
\mu_\beta^\alpha:=\min_{\alpha'_\beta\prec\alpha} \inf \sigma(H_\beta^{\alpha'_\beta})
\end{equation}
and
\begin{equation}\label{eq_int_09}
\mu^\alpha:=\min_{\beta\in \mathcal{D}_N^2}\mu^\alpha_\beta.
\end{equation}
By translation and rotation invariance of the Hamiltonian for $M=2$, $E^\alpha_{(X_1,X_2)}$ only depends on $|D|$, where $D:=X_2-X_1$. We will write $E_{|D|}^\alpha$ instead of $E^\alpha_{(X_1,X_2)}$. In both, the pseudo--relativistic and the nonrelativistic, cases it is not difficult to see that $\mu^\alpha=\lim_{|D|\rightarrow \infty} E^\alpha_{|D|}$.\par 
For some fixed point $X\in \R^3$, which will be the position of one of the nuclei, and the variable $x\in \R^{3N}$, we define the unitary shift by $X$ in the $i$-th particle variable as
\begin{equation}\label{eq_int_01}
\mathcal{U}^{(i)}_X:\left\{\begin{array}{ll}
   \ltn\rightarrow\ltn \\
   \mathcal{U}^{(i)}_X\varphi(x) \mapsto\varphi(x_1,\cdots,x_{i-1},x_i+X,x_{i+1},\cdots,x_N).
\end{array}\right.
\end{equation}
For $\beta =(\cc_1,\cc_2)\in \mathcal{D}_N^2$ and $X_1, X_2$ being the positions of the nuclei we define the shift operators 
\begin{equation}\label{def_ubeta}
\U:= \prod_{i\in \cc_1} \mathcal{U}_{X_1}^{(i)}\prod_{j\in \cc_2} \mathcal{U}_{X_2}^{(j)}.
\end{equation}
We set
\begin{equation}\label{def_hbeta}
\tilde H_\beta := \U H_\beta \U^*.
\end{equation}
Note that $\tilde H_\beta$ is unitary equivalent to $H_ \beta$ and
\begin{equation}\nonumber
\tilde H_\beta = \tilde H_ {\cc_1}^{Z_1}+\tilde H_{\cc_2}^{Z_2}.
\end{equation}
We define for $\beta\in \mathcal{D}_N^2$ the functions $f_2,f_3\in L^2(\R^{3N})$ as 
\begin{equation}\label{eq:def-fA}
f_2(x):=\sumij -e^2\big(3(x_i \cdot e_D)(x_j\cdot e_D)-x_i\cdot x_j\big),
\end{equation}
\begin{equation}\label{eq:def-fB}
\begin{split}
 f_3(x) :=\sumij & \frac{e^2}{2} \Big(  3(x_i-x_j)\cdot e_D\big[2(x_i\cdot x_j)-5(x_i\cdot e_D)(x_j\cdot e_D)\big]\\
 & \qquad+ 3|x_i|^2(x_j\cdot e_D)-3|x_j|^2(x_i\cdot e_D)\Big),
\end{split}
\end{equation}
where $e_D:=\frac{D}{|D|}$, a unit vector in the direction from $X_1$ to $X_2$.
Note that the functions $\fA, \fB$ depend on the cluster decomposition $\beta$. These functions stem from a Taylor expansion of the Coulomb interactions representing dipole--dipole, respectively dipole--quadropole type interactions.
\par
For now, let us fix any $\beta\in \Da$.  
We will show in Appendix~\ref{sec_exist} that $\mu^\alpha$ is a discrete eigenvalue of $H_\beta^\alpha$. By unitary equivalence $\mu^\alpha$ is also a discrete eigenvalue of 
\begin{equation}
\tilde H_\beta^\alpha := \sum_{\alpha_\beta'\prec\alpha} \tilde H _\beta ^{\alpha'_\beta} :=\sum_{\alpha_\beta'\prec\alpha} \tilde H _\beta P^{\alpha'_\beta}
\end{equation}
where the sum is over all induced irreducible representations $\alpha'_\beta \prec \alpha$.
Denote by $\tilde {\mathcal{W}}_\beta^\alpha\subset \Hfa$ the eigenspace of $\tilde H^\alpha_\beta$ corresponding to $\mu^\alpha$ and let
\begin{equation}\label{eq_int_06}
a_1(\beta):=\max_{\substack{\phi\in \tilde{\mathcal{W}}_\beta^\alpha\\
\Vert \phi\Vert=1}}\Vert (\tilde H_\beta-\mu ^\alpha)^{-\frac{1}{2}}f_2\phi\Vert^2.
\end{equation}
Although $\mu^\alpha$ is an eigenvalue of $\tilde H_\beta$ the value $a_1(\beta)$ is well-defined since $f_2\phi$ is orthogonal to the corresponding eigenspace, see Lemma~\ref{lem_app_03A}. We define $\tilde{\mathcal{V}}_\beta^\alpha\subset \tilde{\mathcal{W}}_\beta^\alpha$ as the subspace of all $\phi$ such that $\Vert (\tilde H_\beta-\mu^\alpha)^{-\frac{1}{2}}f_2\phi\Vert^2=a_1(\beta)$ and
\begin{equation}\label{eq_int_07}
a_2(\beta):=\max _{\substack{\phi\in \tilde{\mathcal{V}}_\beta^\alpha\\\Vert \phi\Vert=1}}\Vert (\tilde H_\beta-\mu^\alpha)^{-\frac{1}{2}}f_3 \phi \Vert^2.
\end{equation}
Similarly, Lemma~\ref{lem_app_03A} ensures that also $a_2(\beta)$ is well-defined. Due to permutational symmetry, for any $\beta_1,\beta_2\in  \Da$ we have $a_1(\beta_1)=a_1(\beta_2)$ and $a_2(\beta_1)=a_2(\beta_2)$. Hence we omit the argument $\beta$ in the definition and write $a_1$ and $a_2$ throughout the paper. For diatomic molecules our main result is
\begin{theorem}[van der Waals--London interaction for diatomic molecules]\label{thm_pre_02}
Assume that $Z_le^2\leq \frac{2}{\pi},$ for all nuclear charges when the kinetic energy of the electrons is taken to be pseudo--relativistic. 
Let $\alpha$  be an irreducible representation of $S_N$  corresponding to a Young pattern with at most two columns and assume that 
\par 
\noindent 1) For all $\beta\in\mathcal{D}_N^2\setminus\Da$
\begin{equation}\nonumber
\mu_\beta^\alpha > \mu^\alpha .
\end{equation}
\par 
\noindent 2) For each $\beta \in \Da$ and each irreducible representation $\alpha^*_\beta$ of the group $S_\beta$ with $\alpha^*_\beta\prec \alpha$ such that $P^{\alpha^*_\beta}\tilde{\mathcal{W}}_\beta^{\alpha}\neq \emptyset$, $$\ \dim (P^{\alpha^*_\beta}\tilde{\mathcal{W}}_\beta^\alpha)=\dim \alpha^*_\beta.$$
\par 
Then
\begin{equation}\label{eq_pre_03}
E^\alpha_{|D|} - \mu^\alpha=-\frac{a_1}{|D|^6}-\frac{a_2}{|D|^8}+\mathcal{O}(|D|^{-10})
\end{equation}
where $a_1>0$ and $a_2>0$ are defined in \eqref{eq_int_06} and \eqref{eq_int_07} respectively.
\end{theorem}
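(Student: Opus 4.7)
The plan is to match a variational upper and lower bound. The common input is a multipole expansion of the shifted intercluster interaction: after the cluster shift $\U$, the potential $\U I_\beta \Ut$ is a sum of Coulomb terms between particles at relative separation $\sim D$, and Taylor-expanding in $\D^{-1}$ the monopole and dipole contributions cancel by neutrality of each atom, leaving
\begin{equation*}
\U I_\beta \Ut \;=\; \frac{\fA}{\D^3}+\frac{\fB}{\D^4}+\frac{g_5}{\D^5}+\mathcal{O}(\D^{-6})
\end{equation*}
on states localized at scale $o(\D)$, with $\fA,\fB$ the dipole--dipole and dipole--quadrupole tensors of \eqref{eq:def-fA}--\eqref{eq:def-fB}. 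The exponential decay of the eigenfunctions of $\tilde H_\bmin^\alpha$, one of the new intermediate results advertised in the abstract, is then used to promote this pointwise identity to an operator estimate with remainder $\mathcal{O}(\D^{-10})$ after acting on atomic ground states.

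\emph{Upper bound.} Fix $\bmin\in\Da$ and pick a unit vector $\phi\in \tilde{\mathcal{V}}_\bmin^\alpha \subset \Wt$ realising both maxima $a_1$ and $a_2$. Take
\begin{equation*}
\chi \;:=\; \phi-\D^{-3}\resbt\fA\phi-\D^{-4}\resbt\fB\phi,
\end{equation*}
well defined by Lemma~\ref{lem_app_03A}, and use $\Pa\Ut\chi$ as a trial state for $\Ha$. Since $\phi$ already carries the correct induced symmetry, the nontrivial permutations in $\Pa$ only produce exchange overlaps that are exponentially small in $\D$. Expanding $\langle\Pa\Ut\chi,H\Pa\Ut\chi\rangle/\|\Pa\Ut\chi\|^2$ using $\tilde H_\bmin\phi=\mu^\alpha\phi$, the parity vanishings $\langle\phi,\fA\phi\rangle=\langle\phi,\fB\phi\rangle=0$, and the multipole expansion, and invoking the definitions \eqref{eq_int_06}--\eqref{eq_int_07} of $a_1,a_2$, reproduces $\mu^\alpha-a_1/\D^6-a_2/\D^8+\mathcal{O}(\D^{-10})$.

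\emph{Lower bound.} Let $\Psi$ be a ground state of $\Ha$ and introduce an IMS-type partition of unity $\{\jb\}_{\beta\in\Dt}$ subordinate to cluster decompositions, with cutoff widths at an intermediate scale between $1$ and $\D$. Write
\begin{equation*}
\langle\Psi,(H-\mu^\alpha)\Psi\rangle \;=\; \sum_{\beta\in\Dt}\langle\jb\Psi,(H-\mu^\alpha)\jb\Psi\rangle-\mathcal{LE}[\Psi].
\end{equation*}
Hypothesis~1 forces each $\jb\Psi$ with $\beta\notin\Da$ to contribute a non-negative amount of size $\ge(\mu^\alpha_\beta-\mu^\alpha)\|\jb\Psi\|^2$, which combined with the upper bound already proved forces $\|\jb\Psi\|$ to be exponentially small in $\D$. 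On the atomic piece $\jbs\Psi$, split $\jbs\Psi=\Pgs\jbs\Psi+\Pgso\jbs\Psi$ and apply the Feshbach/Schur complement with respect to the atomic spectral gap. The effective form on $\Wt$ becomes
\begin{equation*}
\mu^\alpha + \D^{-3}\Pgs\fA\Pgs + \D^{-4}\Pgs\fB\Pgs - \D^{-6}\Pgs\fA\Pgso\resbt\Pgso\fA\Pgs - \D^{-8}\Pgs\fB\Pgso\resbt\Pgso\fB\Pgs + \mathcal{O}(\D^{-10}),
\end{equation*}
whose first-order part vanishes by parity and whose infimum on $\Wt$ is, by \eqref{eq_int_06}--\eqref{eq_int_07}, exactly $\mu^\alpha-a_1/\D^6-a_2/\D^8$. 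Hypothesis~2 is used precisely to ensure that this infimum is attained within a single induced irreducible component of $\alpha$, so that the variational constants $a_1,a_2$ are sharp in both directions.

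\emph{Main obstacle.} The delicate step is the control of the localization error $\mathcal{LE}[\Psi]$ to accuracy $\mathcal{O}(\D^{-10})$. For the pseudo-relativistic kinetic energy the commutator $[\sqrt{p^2+1},\jb]$ is nonlocal and the standard IMS identity fails already at order $\D^{-2}$, so the nonrelativistic arguments of \cite{BaVu01,BaVu02,VuZhis01} cannot be imported verbatim; one needs the sharpened localization error estimates advertised in the abstract, combined with exponential decay of $\Psi$, in order to place the cutoffs deep inside the exponentially small tails. A secondary, bookkeeping difficulty is to check that the odd-order cross terms at $\D^{-7}$ and $\D^{-9}$ (from mixed contributions of $\fA$, $\fB$ and higher multipoles) vanish by parity, so that only the contributions $a_1/\D^6$ and $a_2/\D^8$ survive.
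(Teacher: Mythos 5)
Your upper bound matches the paper's: the trial state $P^\alpha\mathcal{U}_\beta^*\big(\phi-\D^{-3}\resbt f_2\phi-\D^{-4}\resbt f_3\phi\big)$ (localized), the use of exponential decay to kill the cutoff error, and the parity vanishings. Your lower bound, however, takes a genuinely different route. You propose projecting $J_\beta\Psi$ onto $\Pgs$ and its complement and running a Feshbach--Schur complement to produce an effective operator on $\Wt$; this is precisely the method of Anapolitanos--Sigal \cite{ioannis}, which the paper explicitly sets aside in its introduction (``The method of the proof developed in \cite{ioannis} is based in its essential part on the Feshbach-Schur method. Our approach is purely variational''). The paper instead decomposes $\U J_\beta\psi=\gamma_1\phi+\D^{-3}\gamma_2\phi_2+\D^{-4}\gamma_3\phi_3+g$ via consecutive orthogonal projections with respect to the weighted form $\las\cdot,\cdot\ras_1$, substitutes this into the quadratic form, and then performs an elementary minimization in $\gamma_1,\gamma_2,\gamma_3$ using the orthogonality relations of Appendix~\ref{sec_app2}. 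This replaces the analysis of the Feshbach map by algebra on a three-dimensional subspace and avoids having to control the Feshbach remainder to order $\D^{-10}$ -- a non-trivial gain, since \cite{ioannis} only reached $\mathcal{O}(\D^{-7})$ with the Schur-complement machinery. You state without argument that the Feshbach route gives $\mathcal{O}(\D^{-10})$; one would need to verify carefully that all the higher-order remainder terms of the Feshbach expansion can actually be tracked to that accuracy, and this is exactly the point where the paper found it advantageous to change methods.

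Two further discrepancies. First, for the non-atomic decompositions $\beta\notin\Da$, you claim to derive that $\|J_\beta\Psi\|$ is exponentially small using the already-proved upper bound; this is circular-sounding and in any case unnecessary. The paper simply shows that $L[J_\beta\psi]\geq 0$ for such $\beta$ by combining Condition~1), the HVZ and Zhislin theorems (Theorems~\ref{thm_hvz_01}, \ref{thm_app_01}), and the fact that $I_\beta$ is $o(1)$ on $\supp(J_\beta)$; no bound on $\|J_\beta\Psi\|$ is needed. Second, on the role of Condition~2): you say it ensures ``the infimum is attained within a single induced irreducible component.'' In fact its real work in the paper is to force each $\phi\in\Wt$ to have $SO(3)$ symmetry of degree $\ell=0$ (so one-electron densities are spherically symmetric with respect to their nucleus), which is what makes Newton's theorem and the Legendre-polynomial orthogonality arguments of Lemmas~\ref{lem_app_03A}, \ref{lem_app_03B}, \ref{lem_app_02} go through -- i.e.\ it is the mechanism for the parity cancellations at every intermediate order, not merely a uniqueness/attainment statement.
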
  
\begin{remas} 
\begin{itemize}
\item Conditions 1) and 2) of Theorem \ref{thm_pre_02} are the same as in the previous work \cite{ioannis} by I. Anapolitanos and I. M. Sigal, where they obtained an asymptotic expansion of $E^\alpha _{\D}-\mu^\alpha$ in the nonrelativistic case with an error of order $\mathcal{O}(\D^{-7})$.
\item The physical meaning of Condition~1) is that the lowest energy of the non-interacting system occurs when the electrons are allocated neutrally. It is important to mention that if Condition 1) does not hold, then $E_{\D}^\alpha-\mu^\alpha$ is dominated by Coulomb interaction which decays like $\D^{-1}$ and is thus much stronger than the van der Waals-London interaction. Both variants are possible. Experimental data shows that for some molecules Condition 1) is fulfilled and for some it is not, see discussion in the introduction of \cite{ioannis}. 
\item Condition 2) imposes restrictions on the rotational symmetry of the atoms in the diatomic molecule. In particular the ground state space of $\tilde H_\beta^\alpha$ only contains functions which transform according to the irreducible representation of the group $SO(3)$ of degree $\ell=0$. To see this, notice that the Hamiltonian $\tilde H_\beta^\alpha$ is invariant under rotations $R\in SO(3)$. Thus for any eigenfunction $\phi\in P^{\alpha'_\beta} \tilde {\mathcal{W}}_\beta^\alpha$ the rotated function $T_R\phi$ is an eigenfunction corresponding to the same value. Rotation and permutation operators commute, thus $T_R\phi\in P^{\alpha'_\beta}\tilde {\mathcal{W}}_\beta^\alpha$. So by \cite[\S 3.19]{hamermesh1962} the dimension of $P^{\alpha'_\beta} \tilde {\mathcal{W}}_\beta^\alpha$ is an integer multiple of the dimension of $\alpha'_\beta$ and the dimension of a representation of the $SO(3)$ group. By Condition~2) $\dim (P^{\alpha'_\beta}\tilde{\mathcal{W}}_\beta^\alpha)=\dim \alpha'_\beta$ so the dimension of the representation of $SO(3)$ describing the symmetry of $\phi$ is one. So it must be the irreducible representation of degree $\ell=0$. 
\item Our method allows to obtain the expansion of $E_{\D}^\alpha-\mu^\alpha$ up to arbitrary negative power of $|D|$. In particular, for diatomic molecules this expansion does not include odd powers $|D|^{-7}$ and $\D^{-9}$ in both the pseudo--relativistic and nonrelativistic case. 
	There is a correction, the famous Axilrod--Teller--Moto correction 
	to the van der Waals law, which starts with the $\D^{-9}$ term. 
	However, it is well understood in the physics literature that 
	this correction is due to interactions 
	between triplets of atoms, hence it should be absent for diatomic 
	molecules. As our Theorem \ref{thm_pre_02} shows, this is indeed 
	the case.  For three or more atoms, this correction is present, see Theorem \ref{thm_pre_03}. 

\item In the definition of the functions $f_2$, $f_3$ and therefore in the definition of $a_1$ and $a_2$, we use the vector $e_D$. By the $SO(3)$ symmetry of $\tilde H_\beta$ and Condition 2), the values of $a_1$ and $a_2$ will not change if we replace $e_D$ in \eqref{eq:def-fA} and \eqref{eq:def-fB} with an arbitrary normalized vector in $\R^3$.
\item The functions $f_2$, $f_3$ are, respectively, the second- and third-order coefficients in the Taylor expansion of the intercluster interaction (see Appendix~\ref{sec_app_01}). They are invariant under permutations in $S_\beta$ and hence for any irreducible representation $\alpha'_\beta\prec\alpha$ of $S_\beta$, we have $f_l P^{\alpha'_\beta}=P^{\alpha'_\beta} f_l$, for $l=2,3$.
\end{itemize}
\end{remas}
\textbf{Strategy of the proof of Theorem~\ref{thm_pre_02}} To prove the main result, we derive estimates of the difference $E_{\D}^\alpha - \mu ^\alpha$ from above and from below. These bounds coincide up to an order $\mathcal{O}(\D^{-10})$. To get an estimate from below for the interaction energy, we apply a partition of unity to the configuration space, and minimize the functionals in the corresponding regions. To obtain an upper bound, we construct a suitable trial function.
\par 
More precisely, let $\bsp=( \{1,\cdots,Z_1\},$ $\{Z_1+1,\cdots,N\})$. By permutation symmetry of the operator $\tilde H_\bsp^\alpha$, the ground state space $\tilde{\mathcal{W}}_\bsp^\alpha$ of $\tilde H_\bsp^\alpha$ can be written as a direct sum of subspaces transforming according to the induced irreducible representations $\alpha'_\bsp \prec \alpha$, more explicitly
\begin{equation}\nonumber
\tilde{\mathcal{W}}_\bsp^\alpha = \bigoplus_{\alpha'_\bsp \prec \alpha} P^{\alpha'_\bsp} \tilde{\mathcal{W}}_\bsp ^\alpha.
\end{equation}
Thus there is at least one $\alpha^*_\bsp \prec\alpha$ such that there exists $\phi\in P^{\alpha^*_\bsp} \tilde {\mathcal{W}}_\bsp ^\alpha$ that realises the maxima $a_1$ and $a_2$ with $\|\phi\|=1$. For such a $\phi\in P^{\alpha^*_\bsp} \tilde{\mathcal{W}}_\bsp^\alpha$ we define
\begin{equation}\label{eq_int_18}
\begin{split}
 \Upsilon : =
   \chi_{o}(x) \Bigg\{ 
 \phi(x)  
    - (\tilde H_\bsp-\mu^\alpha)^{-1}\Big( \frac{f_2(x)}{|D|^3} 
 + \frac{f_3(x)}{|D|^4} \Big) \phi(x)
 \Bigg\}
\end{split}
\end{equation}
where $\chi_{o}(x)$ is a smooth function which localizes each particle in a ball of radius $ |D|^\frac34$, centered at the origin. As a trial function, which yields the required estimate of $E^\alpha_{\D}-\mu^\alpha$ from above, we define $\Upsilon_{\mathrm{trial}}:=P^\alpha \mathcal{U}^*_\bsp \Upsilon$.\par
To prove the estimate from above, we need to show that applying the cutoff function $\chi_o(x)$ increases the energy only by an exponentially small amount. To this end we need to prove exponential decay of $\phi, (\tilde H_\bsp-\mu^\alpha)^{-1}f_2\phi$, and $(\tilde H_\bsp-\mu^\alpha)^{-1}f_3 \phi$, which is done in Section \ref{sec_exp}. In addition, we need a suitable estimate for the so-called localization error for the pseudo--relativistic kinetic energy. Such an estimate is obtained in Section \ref{sec_loc}. In both cases, the proof of exponential decay and the estimate of the localization error, the main difficulty arises from the non-locality of the pseudo--relativistic kinetic energy operator.\par 
For the estimate from below we consider all possible cluster decompositions into three clusters $\beta = (\cc_0, \cc_1, \cc_2).$ Some of the clusters may be empty. Particles in $\cc_0$ are far from the nucleus. Electrons in $\cc_1$ and $\cc_2$ are close to $X_1$ and $X_2$ respectively. We apply a partition of unity of the configurations space with smooth functions $J_\beta$ cutting the configuration space according to the clusters in $\beta$. If $\cc_0 \neq \emptyset $ or if $\cc_1$ and $\cc_2$ are not neutral atoms, the infimum of the spectrum of the cluster Hamiltonian corresponding to this $\beta$ on the subspace $\mathcal H ^{\alpha'_\beta}$ is, by assumption, strictly greater than $\mu^\alpha$ for all $\alpha'_\beta\prec \alpha$.
For sufficiently large $|D|$, this implies
\begin{equation}\nonumber
		\las  J_\beta \psi ,(H^\alpha - \mu^{\alpha} ) J_\beta \psi \ras  \ge 0.
\end{equation}
Now consider $\beta$ for which $\cc_0 = \emptyset$, and $(\cc_1,\cc_2)\in \Da$. Similar to \cite{BaVu04,BaVu03,BaVu01,BaVu02,VuZhis02,VuZhis03,VuZhis01} we define a bilinear form 
\begin{equation}\nonumber
		\las\varphi , \psi\ras_1 := \las\varphi , (\tilde H_\beta - \mu^{\alpha}) \psi\ras
\end{equation}
		and the corresponding semi--norm
\begin{equation}\nonumber
		\| \varphi\|^2_1 := {\las\varphi, \varphi \ras}_1.
\end{equation}
Then we project the state $\U J_\beta \psi$ onto the ground state subspace $\tilde{\mathcal{W}}_\beta ^\alpha$ of the operator $\tilde H_\beta ^\alpha$ to get
\begin{equation}\nonumber
\U J_\beta \psi = \gamma _1 \phi + \mathcal{R}
\end{equation}	
for a normalized state $\phi \in \tilde {\mathcal{W}}_\beta^\alpha$. We proceed by projecting the rest term $\mathcal{R}$ onto the functions
		\begin{equation}\label{eq_int_22}
			\begin{array}{lcl}
		\phi _2 & = & (\tilde H_\beta - \mu^{\alpha})^{-1} \ f_2 \phi , \\
		\phi _3 & = &  (\tilde H_\beta - \mu^{\alpha})^{-1} \ f_3 \phi
			\end{array}
\end{equation}
consecutively, with respect to $\las \cdot , \cdot \ras_1$. Note that by Corollary~\ref{cor_app_01} these states are well-defined. For the state $J_\beta \psi$ we arrive at the following representation
\begin{equation} \label{eq_int_21}
		J_\beta \psi  =\U ^*\big( \gamma _1 \phi + |D|^{-3} \gamma _2 \phi _2 + |D|^{-4} \gamma _3 \phi _3 + g\big)
\end{equation}
for a suitable function $g$.
We substitute \eqref{eq_int_21} into the quadratic form of
\begin{equation}\nonumber
(H-\mu^\alpha)P^\alpha=(H_\beta -\mu^\alpha + I_\beta\big)P^\alpha.
\end{equation}
Then we expand $I_\beta$ as a Taylor series and do a simple minimization in parameters $\gamma _1, \gamma _2, \gamma _3$, using orthogonality relations proven in Appendix~\ref{sec_app_01}. It turns out that $\Vert g \Vert$  will be very small and $\gamma _1, \gamma _2, \gamma _3$ close to the coefficients of the trial function, which we used to get the upper bound, when $\psi$ is close to a minimizer of the energy.\par 
Finally, in analogy to the estimate from above, the localization error is small on $\phi, \phi _2$, and $\phi _3$ due to their exponential decay.
%
%
\subsection{Extension to M-atomic molecules} We can extend the result of Theorem~\ref{thm_pre_02}, stated for a diatomic molecule, to larger systems.\par 
We will assume that the distances between atoms are simultaneously scaled by a parameter $d>0$. For all $1\leq k<l\leq M$, we write $X_k-X_l=:dD_{k,l}$, where vectors $D_{k,l}$ are assumed to be fixed. The scaling parameter $d$ will tend to infinity.
The operator $H$ can be written as
\begin{equation}
H=\sum_{i=1}^N\left(T_i-\sum_{k=1}^M\frac{e^2Z_k}{|x_i-X_k|}\right)+\sum_{1\leq i<j\leq N} \frac{e^2}{|x_i-x_j|}+\sum_{1\leq k<l\leq M}\frac{e^2Z_kZ_l}{d|D_{k,l}|}.
\end{equation}
We let 
\begin{equation}
E^\alpha_d:=\inf \sigma(H^\alpha)
\end{equation}
denote the infimum of the spectrum of $H$ restricted to the space $\Hfa=P^\alpha \Hf$.\par 
Consider the cluster decomposition $\beta_M:=(\cc_1,\cdots,\cc_M)$ of the original system into $M$ clusters such that $\bigcup_{k=1}^M \cc_k=\{1,\cdots,N\}$ and $ \cc_k \cap \cc_l = \emptyset$ for all $k\neq l$. We define the set $\mathcal{D}^M_N$ as the collection of all such decompositions. Let 
\begin{equation}\label{eq_mat_02}
\tilde H_{\beta_M}:= \sum_{k=1}^M \tilde H_{\cc_k}^{Z_k}
\end{equation}
where $\tilde H_{\cc_k}^{Z_k}$ is defined according to \eqref{eq_int_02}, acting on the space $\ltn$. The symmetry group of this Hamiltonian is $S_{\beta_M}:=S(\cc_1)\times \cdots \times S(\cc_M)\subset S_N$, the group of permutations which leave the cluster decomposition $\beta_M$ intact. Once again, the irreducible representations of $S_{\beta_M}$ can be expressed as direct products of irreducible representations $\alpha'_\cc$ of $S(\cc)$. In particular, for any irreducible representation $\alpha'_{\beta_M}\prec\alpha$ of $S_{\beta_M}$ there exists a unique $M$-tuple of irreducible representations $\alpha'_{\cc_k}\prec\alpha$ such that
\begin{equation}
\bigotimes_{k=1}^M \alpha'_{\cc_k}\cong \alpha'_{\beta_M}.
\end{equation}
We take $P^{\alpha'_{\beta_M}}$ to be the projection in $\Hf$ onto functions belonging to the irreducible representation $\alpha'_{\beta_M}$. Letting $\tilde H_{\beta_M}^{\alpha'_{\beta_M}}:=\tilde H_{\beta_M}P^{\alpha'_{\beta_M}}$ we define
\begin{equation}
\mu^\alpha_{\beta_M}:= \min _{\alpha'_{\beta_M}\prec\alpha}\inf \sigma\big(\tilde H_{\beta_M}^{\alpha'_{\beta_M}}\big)
\end{equation}
and
\begin{equation}\label{eq_int_17}
\mu^\alpha_M:= \min_{\beta_M\in \mathcal{D}_N^M} \mu^\alpha_{\beta_M}.
\end{equation}
Similar to the diatomic case $\mu^\alpha_M=\lim_{d\rightarrow \infty} E^\alpha_d$. We define the functions $\fkl_2,$ $\fkl_3\in \ltn$ as
\begin{equation}\label{eq_int_13}
\fkl_2(x):=\sum_{\substack{i\in \cc_k\\ j\in \cc_l}} -e^2\big(3(x_i \cdot e_{D_{k,l}})(x_j\cdot e_{D_{k,l}})-x_i\cdot x_j\big),
\end{equation}
\begin{equation}\label{eq_int_14}
\begin{split}
 f_3^{(k,l)}(x) :=\sum_{\substack{i\in \cc_k\\ j\in \cc_l}} & \frac{e^2}{2} \Big(  3(x_i-x_j)\cdot e_{D_{k,l}}\big[2(x_i\cdot x_j)-5(x_i\cdot e_{D_{k,l}})(x_j\cdot e_{D_{k,l}})\big]\\
 & \qquad+ 3|x_i|^2(x_j\cdot e_{D_{k,l}})-3|x_j|^2(x_i\cdot e_{D_{k,l}})\Big),
\end{split}
\end{equation}
where $e_{D_{k,l}}:=\frac{D_{k,l}}{|D_{k,l}|}$ is the unit vector in the direction from nucleus $k$ to nucleus $l$. The functions $\fkl_2$ and $\fkl_3$ are related to the second- and third-order coefficients in the Taylor expansion of the intercluster interaction of cluster $k$ with cluster $l$, see Appendix \ref{sec_app_01} for details.
\par 
The value $\mu^\alpha_M$ defined in \eqref{eq_int_17} is a discrete eigenvalue of the operator $\tilde H^\alpha_{\beta_M}$, see Theorem~\ref{thm_app_01}. Denote by $\tilde {\mathcal{W}}^\alpha_{\beta_M}\subset \Hfa$ the eigenspace of $\tilde H^\alpha_{\beta_M}$ corresponding to $\mu^\alpha_M$ and let
\begin{equation}\label{eq_int_15}
a_1^M:=\max_{\substack{\phi\in \tilde {\mathcal{W}}_{\beta_M}^\alpha\\
\Vert \phi\Vert=1}}\Vert (\tilde H_{\beta_M}-\mu_M ^\alpha)^{-\frac{1}{2}}\hspace{-4mm}\sum_{1\leq k<l\leq M} |D_{k,l}|^{-3}\fkl_2\phi\Vert^2.
\end{equation}
We define $\tilde {\mathcal{V}}_ {\beta_M}^\alpha\subset \tilde{\mathcal{W}}_{\beta_M}^\alpha$ the subspace of all $\phi$ such that $$\Vert (\tilde H_{\beta_M}-\mu_M^\alpha)^{-\frac{1}{2}}\hspace{-4mm}\sum_{1\leq k<l\leq M}|D_{k,l}|^{-3}\fkl_2\phi\Vert^2=a_1^M$$ and
\begin{equation}\label{eq_int_16}
a_2^M:=\max _{\substack{\phi\in \tilde{ \mathcal{V}}^\alpha_{\beta_M}\\\Vert \phi\Vert=1}}\Vert (\tilde H_{\beta_M}-\mu_M^\alpha)^{-\frac{1}{2}}\hspace{-4mm}\sum_{1\leq k<l\leq M}|D_{k,l}|^{-4}\fkl_3 \phi \Vert^2.
\end{equation}
Slightly abusing notation, for $\beta\in \mathcal D_N^M$ we write $\beta\in\Da$ iff for all $k\in \{1,\cdots,M\}$ one hase $\sharp\cc_k=Z_k$.
\begin{theorem}[The Axilrod--Teller--Muto three body correction to the van der Waals--London interaction]\label{thm_pre_03}
Assume that $Z_le^2\leq \frac{2}{\pi},$ for all nuclear charges when the kinetic energy of the electrons is taken to be pseudo--relativistic. 
Let $\alpha$ be an irreducible representation of $S_N$ corresponding to a Young pattern with at most two columns and let the following conditions hold:\\
1') For all $\beta\in \mathcal D_N^M \setminus \Da$ 
\begin{equation}
\mu_{\beta_M}^\alpha> \mu^\alpha_M.
\end{equation}
2') For each induced irredducible representation $\alpha^*_{\beta_M}\prec \alpha$ of the group $S_N$ such that $P^{\alpha^*_{\beta_M}} \tilde{\mathcal{W}}_{\beta_M}^\alpha\neq \emptyset$,
\begin{equation}
\dim (P^{\alpha^*_{\beta_M}} \tilde{\mathcal{W}}_{\beta_M}^\alpha)=\dim \alpha^*_{\beta_M}.
\end{equation}
Then 
\begin{equation}\nonumber
\begin{split}
&E^\alpha_d-\mu^\alpha_M=-\frac{a_1^M}{d^6}-\frac{a_2^M}{d^8}+\frac{a_3^M}{d^{9}}+\mathcal{O}(d^{-10}).
\end{split}
\end{equation}
where
\begin{equation}
a_3^M=\sum_{\substack{k\neq l\\l\neq n,n\neq k}}\frac{\las (\tilde H_{\beta_M}-\mu_M^\alpha)^{-1}f_2^{(k,l)}\phi,f_2^{(l,n)} (\tilde H_{\beta_M}-\mu_M^\alpha)^{-1}f_2^{(n,k)}\phi\ras}{8|D_{k,l}|^{3}|D_{l,n}|^{3}|D_{n,k}|^{3}}.
\end{equation}
\end{theorem}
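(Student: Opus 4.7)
My plan is to follow the two-sided bound scheme used to prove Theorem~\ref{thm_pre_02} and to carry the expansion one order further. Almost every ingredient generalises from $M=2$ to arbitrary $M$ by bookkeeping only; the genuinely new phenomenon is that for $M\ge 3$ the pairwise intercluster interactions couple triples of distinct clusters $(k,l,n)$, producing a cubic cross term $\las R\fkl_2\phi,\, f_2^{(l,n)} R f_2^{(n,k)}\phi\ras$ (with $R:=(\tilde H_{\beta_M}-\mu^\alpha_M)^{-1}$) that survives at order $d^{-9}$. All purely two-cluster contributions at orders $d^{-7}$ and $d^{-9}$ still vanish by the $SO(3)$-symmetry argument used in the diatomic case, thanks to Condition~2').

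\textbf{Upper bound.} Fix an atomic decomposition $\beta_M^0\in\Da$ and a normalised $\phi\in P^{\alpha^*_{\beta_M^0}}\tilde{\mathcal{W}}^\alpha_{\beta_M^0}$ realising both maxima $a_1^M$ and $a_2^M$ of \eqref{eq_int_15}--\eqref{eq_int_16}. As trial state I would take
\begin{equation*}
\Upsilon_M := \chi_o(x)\Bigg\{\phi(x)-R\bigg(\sum_{1\le k<l\le M}\frac{\fkl_2}{d^3|D_{k,l}|^3}+\sum_{1\le k<l\le M}\frac{\fkl_3}{d^4|D_{k,l}|^4}\bigg)\phi(x)\Bigg\},
\end{equation*}
with $\chi_o$ a smooth cutoff confining each particle to a ball of radius $d^{3/4}$ about the origin, and set $\Upsilon_{\text{trial}}:=P^\alpha\Um^*\Upsilon_M$. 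After conjugating $H-\mu^\alpha_M$ by $\Um$, Taylor-expanding $I_{\beta_M^0}$ pair by pair as in Appendix~\ref{sec_app_01}, and computing $\las\Upsilon_{\text{trial}},(H^\alpha-\mu^\alpha_M)\Upsilon_{\text{trial}}\ras$, the diagonal pair contributions reproduce $-a_1^M/d^6-a_2^M/d^8$ exactly as in the diatomic proof, while the third-order perturbative cross term $\las V_2\phi, RV_2RV_2\phi\ras$ with $V_2:=\sum_{k<l}|D_{k,l}|^{-3}\fkl_2$, restricted to summands whose pair indices form a triangle $(k,l),(l,n),(n,k)$, contributes $+a_3^M/d^9$. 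The exponential decay of $\phi$ and of $R\fkl_2\phi,\,R\fkl_3\phi$ (Section~\ref{sec_exp}) and the pseudo-relativistic localization estimate (Section~\ref{sec_loc}) absorb the cutoff and higher Taylor remainders into $\mathcal{O}(d^{-10})$.

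\textbf{Lower bound.} I would introduce a smooth partition of unity $\{J_\beta\}$ on configuration space indexed by decompositions of $\{1,\dots,N\}$ into $M+1$ clusters $(\cc_0,\cc_1,\dots,\cc_M)$, where particles in $\cc_k$ for $k\ge 1$ are localised near $X_k$ and those in $\cc_0$ are localised far from every nucleus. For every $\beta$ not of the form $\cc_0=\emptyset$ with $(\cc_1,\dots,\cc_M)\in\Da$, Condition~1') combined with the IMS-type localization error yields $\las J_\beta\psi,(H^\alpha-\mu^\alpha_M)J_\beta\psi\ras\ge 0$ for $d$ large. For each atomic $\beta_M$, mirroring the diatomic argument I would introduce $\las\varphi,\psi\ras_1:=\las\varphi,(\tilde H_{\beta_M}-\mu^\alpha_M)\psi\ras$, project $\Um J_\beta\psi$ first onto $\tilde{\mathcal{W}}^\alpha_{\beta_M}$ and then successively, with respect to $\las\cdot,\cdot\ras_1$, onto the states $\phikl_2:=R\fkl_2\phi$ and $\phikl_3:=R\fkl_3\phi$ for every pair $(k,l)$, arriving at
\begin{equation*}
J_\beta\psi=\Um^*\Big(\gamma_1\phi+\sum_{k<l}\frac{\gamma_2^{(k,l)}}{d^3|D_{k,l}|^3}\phikl_2+\sum_{k<l}\frac{\gamma_3^{(k,l)}}{d^4|D_{k,l}|^4}\phikl_3+g\Big).
\end{equation*}
Substituting into the quadratic form of $H^\alpha-\mu^\alpha_M=\tilde H_{\beta_M}-\mu^\alpha_M+I_{\beta_M}$, Taylor-expanding $I_{\beta_M}$, and minimising in the coefficients $\gamma_i^{(k,l)}$ using the orthogonality relations of Appendix~\ref{sec_app_01} would match the upper bound to order $d^{-9}$, with the three-body term $+a_3^M/d^9$ emerging automatically from the combined quadratic and cubic matchings.

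\textbf{Main obstacle.} I expect the most delicate point to be the careful bookkeeping at order $d^{-9}$: one needs to check that (i) every pure two-cluster contribution at this order still vanishes by the $SO(3)$-invariance inherited from Condition~2') (as in the diatomic case, where odd powers are absent), (ii) the only surviving cubic contributions are the three-cluster triangle terms assembled into $a_3^M$, and (iii) the remainder $g$ together with the higher Taylor terms of $I_{\beta_M}$ and all cross terms with $\phi,\phikl_2,\phikl_3$ are controlled sharply enough not to swallow the $d^{-9}$ coefficient into the $\mathcal{O}(d^{-10})$ error. Physically this new term is the Axilrod--Teller triple-dipole dispersion correction, which is absent for diatomic systems and is the distinguishing feature of many-body dispersion for $M\ge 3$.
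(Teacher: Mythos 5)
Your proposal reproduces the paper's own two-sided scheme: a cut-off trial state for the upper bound and a partition-of-unity plus projection scheme for the lower bound, with Condition~2') supplying the $SO(3)$-symmetry cancellations and the three-cluster (Axilrod--Teller) triangle term contributing $+a_3^M/d^9$. The one substantive variation is in your lower bound: you project $\Um J_\beta\psi$ onto every $\phikl_2=(\tilde H_{\beta_M}-\mu^\alpha_M)^{-1}\fkl_2\phi$ and $\phikl_3=(\tilde H_{\beta_M}-\mu^\alpha_M)^{-1}\fkl_3\phi$ separately, introducing one coefficient per pair, whereas the paper projects only onto the two aggregate functions $\phi_2=(\tilde H_{\beta_M}-\mu^\alpha_M)^{-1}\sum_{k<l}|D_{k,l}|^{-3}\fkl_2\phi$ and $\phi_3=(\tilde H_{\beta_M}-\mu^\alpha_M)^{-1}\sum_{k<l}|D_{k,l}|^{-4}\fkl_3\phi$, keeping exactly three parameters $\gamma_1,\gamma_2,\gamma_3$ as in the diatomic case. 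Because the $\phikl_m$ for distinct pairs are mutually orthogonal both in $L^2$ and in $\las\cdot,\cdot\ras_1$ (the per-cluster dipole moments of the $\ell=0$ state $\phi$ vanish), your per-pair minimizations decouple and each returns $\gamma_m^{(k,l)}\approx-\gamma_1$, so the two routes give the same coefficients; the aggregate choice is simply leaner, since it makes the $M$-atomic argument a near-verbatim repeat of the diatomic one and keeps the remainder $g$ orthogonal to only three vectors. The one step you should make fully explicit, and the paper does, is the parity count that isolates the triangle: $\las (\tilde H_{\beta_M}-\mu^\alpha_M)^{-1}f_2^{(m,m')}\phi,\,f_2^{(k,k')}(\tilde H_{\beta_M}-\mu^\alpha_M)^{-1}f_2^{(l,l')}\phi\ras$ vanishes unless every cluster index among $\{m,m',k,k',l,l'\}$ occurs an even number of times, which for three pairs forces the configuration $(k,l),(l,n),(n,k)$ and rules out both repeated-pair and disjoint-pair contributions.
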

\begin{remas}\noindent \begin{itemize}
\item The term of order $d^{-6}$ is a sum of the corresponding terms in Theorem~\ref{thm_pre_02}. Again no term of order $d^{-7}$ appears. The main difference to the diatomic case is the appearance of the term of order $d^{-9}$. This term, 
	a non-additive many body effect, is the 
	 famous Axilrod--Teller--Muto three-body correction, which plays an important role for
	atom physics \cite{Lilienfeld-Tkatchenko, Axilrod-Teller, DiStasio-2014, Muto}. It  
  	stems from an interaction of three atoms, each of the atoms induces dipole momenta in the other two atoms of this triplet. Their interaction is proportional to $d^{-9}$.  
  	To the best of our knowledge our 
  	result is 
  	the first proof of this famous  
  	conjecture in atom physics. 
\item  Recently I.\ Anapolitanos \cite{Anapolitanos} studied the error term in the van der Waals--London estimate and proved that under the same conditions as in Theorem \ref{thm_pre_03} the difference between the van der Waals--London term and the term $a_1^Md^{-6}$ is bounded by 
  \begin{align}
  	c_2 M^2 d^{-7} + c_3 \frac{M^4}{d^9}\left( 1+ N^Ze^{-c_3d} \right)
  \end{align}
  for $d\gtrsim N^{4/3}$, where $M$ is the number of atoms, $Z$ is the maximal charge of the nuclei, $N$ the number of electrons and $c_1,c_2,c_3$ are some non--specified constants. 
  Our Theorem \ref{thm_pre_03} shows that such a bound on the error is far from being optimal. The term of order $d^{-7}$ is, in fact, absent in the expansion, the first correction term should have power $d^{-8}$. Moreover, the term of order $d^{-9}$ is a three--body effect, thus it should grow as $M^3$ and not as $M^4$, since it describes interactions of tripels of atoms whose combinatorial factor is given by $M(M-1)(M-2)$. 
   A term in the expansion with a factor growing like $M^4$ should come with a much higher power than $d^{-9}$.  
\item In the diatomic case the result will not change if we replace the vector $e_D$ in the definition $f_2,f_3$, \eqref{eq:def-fA} and \eqref{eq:def-fB} by an arbitrary normalized vector. In contrast to that, in the multi-atomic case  the term of order $d^{-9}$ depends on the angles between vectors $D_{k,l}, D_{l,n}$ and $D_{n,k}$, which confirms the prediction of Axilrod--Teller and Muto. .
\end{itemize}
\end{remas}
The paper is organized as follows.  
In Section~\ref{sec_exp} we prove exponential decay of functions $\phi,(\tilde H_\beta -\mu^\alpha)^{-1} f_2 \phi,$ and $(\tilde H_\beta -\mu^\alpha)^{-1} f_3\phi$, which play a crucial role in the proof of Theorems~\ref{thm_pre_02} and \ref{thm_pre_03}.\par 
In Section~\ref{sec_loc} we prove a localization error estimate for the pseudo--relativistic kinetic energy, which shows that outside the region, where the derivative of the cutoff function is non-zero, the localization error is exponentially small.\par 
In Sections~\ref{sec_diat} and \ref{sec_mat} we prove Theorems~\ref{thm_pre_02} and \ref{thm_pre_03} respectively.\par 
In Appendix~\ref{sec_hvz} and \ref{sec_exist} we prove the HVZ theorem for atoms and atomic ions and the existence of a ground state for pseudo--relativistic atoms and positive ions on spaces with fixed permutation symmetry. This result was announced by G. Zhislin in \cite{zhislin2006}. For convenience of the reader we give a complete proof of these statements.\par 
In Appendix~\ref{sec_app_04} and \ref{sec_app_01} we prove several technical estimates, which we use in Sections~\ref{sec_exp} and \ref{sec_diat}, respectively.\par 
Finally, in Appendix~\ref{sec_app2} we prove orthogonality relations, which are due to the symmetry of functions $\phi$ and $I_\beta$.
\section{Exponential decay of eigenfunctions}\label{sec_exp}
In the nonrelativistic case, exponential decay of eigenfunctions with given permutation symmetry is well-known (see e.g. \cite{agmonlecture}). 
The exponential decay of eigenfunctions of a Hamiltonian with pseudo--relativistic kinetic energy proved by Carmona, Masters and Simon in \cite{carmona1990relativistic} does not apply for Coulomb potentials, however. Although being motivated by the question of exponential decay estimates for multi--particle pseudo--relativistic Schr\"odinger with Coulomb interactions, the class of potentials they use, the so--called relativistic Kato--class, does not contain any potential with a Coulomb singularity. 
For pseudo--relativistic kinetic energy and Coulomb potentials, exponential decay of eigenfunctions was shown by Nardini in \cite{Nardini1986} for the two body case. He extended his results to the $N$-body case in \cite{nardini1988asymptotic}. However, in the proof he uses a method which destroys permutational symmetries.
To prove Theorem~\ref{thm_pre_02} we need exponential decay of ground states $\phi\in \tilde{\mathcal{W}}^\alpha_\beta$ of $\tilde H_\beta$ and exponential decay of functions of the form $(\tilde H_\beta-\mu^\alpha)^{-1}f_l\phi$, $l=2,3$ where $\phi\in \tilde{\mathcal{W}}^\alpha_\beta$ is a ground state.
To this end we will apply a modification of Agmon's method (see \cite{agmonlecture}), adapted to the nonlocal pseudo--relativistic kinetic energy, which preserves symmetry. \par 
Let $\alpha_\cc$ be an irreducible representation of $S(\cc)$. We define
\begin{equation}\label{eq:def-Sigma}
\Sigma^{\alpha_\cc}:= \lim_{R\rightarrow\infty} \inf_{\substack{\psi\in P^{\alpha_\cc} H^{1/2}(\R(\cc))\\ \supp(\psi)\cap B_R(0)=\emptyset}}\Vert \psi\Vert^{-2}\las \psi, \tilde H_\cc^Z \psi\ras ,
\end{equation}
where $B_R(0)$ is the ball in $\R(\cc)$ of radius $R$ centered at $0$ and $\tilde H_\cc^Z$ was defined in \eqref{eq_int_02}.
Everywhere in this section we treat the pseudo--relativistic kinetic energy operator $T_i=\sqrt{p_i^2+1}-1$ only.
%
%
\begin{theorem}\label{thm_int_04}
For any fixed $\mu<\Sigma^{\alpha_\cc}$, assume that $\nG\in H^{1/2}(\R(\cc))$ satisfies $P^{\alpha_\cc}\nG = \nG$ 
and $(\tilde H_\cc^Z-\mu)\nG =\ga$, where $\ga$ is a function with $e^{a |\cdot|} \ga\in L^2(\R(\cc))$ for some $a>0$. Then there exists $b>0$ such that 
\begin{equation}
e^{b|\cdot|}\nG\in L^2(\R(\cc)).
\end{equation}
\end{theorem}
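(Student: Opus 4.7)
My plan is to adapt Agmon's method to the nonlocal setting while respecting $S(\cc)$-symmetry. The input is the sub-threshold condition $\mu<\Sigma^{\alpha_\cc}$, which provides a positive spectral gap on $P^{\alpha_\cc}$-symmetric states supported far from the origin; the output should be an exponential weight that can be absorbed into the quadratic form without destroying either self-adjointness or permutation symmetry. I would therefore take $b>0$ to be chosen, fix $0<b'<\min(a,b)$, and introduce a family of bounded, smooth, $S(\cc)$-invariant Lipschitz weights $F_n$ with $|\nabla F_n|\le b'$ pointwise and $F_n\uparrow b'|\cdot|$ as $n\to\infty$ (e.g.\ $F_n(x)=b'|x|/(1+|x|/n)$, which is invariant under coordinate permutations because $|x|$ is).

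The main manipulation is to test the equation $(\tilde H_\cc^Z-\mu)\nG=\ga$ against $e^{2F_n}\nG$. Since $F_n$ is $S(\cc)$-invariant, $e^{F_n}\nG$ still lies in $P^{\alpha_\cc}H^{1/2}(\R(\cc))$, so we may pass one weight through the operator and rewrite the identity as
\begin{equation}\nonumber
\re\las e^{F_n}\nG,\bigl(e^{F_n}(\tilde H_\cc^Z-\mu)e^{-F_n}\bigr)e^{F_n}\nG\ras
= \re\las e^{F_n}\nG,e^{F_n}\ga\ras.
\end{equation}
The potential part commutes with multiplication by $e^{F_n}$, so all conjugation effects reduce to analysing $e^{F_n}T_ie^{-F_n}-T_i$. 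For the pseudo-relativistic kinetic energy this is the pivotal quantitative input: I would prove (or invoke from Section~\ref{sec_loc}/Appendix~\ref{sec_app_04}) the estimate
\begin{equation}\nonumber
\re\las \psi,e^{F_n}T_ie^{-F_n}\psi\ras\ \ge\ \las\psi,T_i\psi\ras- C(b')^2\|\psi\|^2,
\end{equation}
uniformly in $n$, valid for $b'$ small. This replaces the elementary IMS identity available for $-\Delta$ and is the piece specific to $\sqrt{p_i^2+1}-1$; it can be obtained either from the Herbst/Frank--Lieb--Seiringer-type commutator calculus using $\sqrt{p^2+1}=\tfrac{1}{\pi}\int_0^\infty s^{-1/2}(1-s/(p^2+1+s))\,ds$, or via the explicit kernel representation of $\sqrt{-\Delta+1}$ as a positive-definite convolution.

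With this in hand the argument is essentially standard. Pick a smooth $S(\cc)$-symmetric radial cutoff $\chi_R$ supported outside $B_R(0)$ and split $e^{F_n}\nG=\chi_R e^{F_n}\nG+(1-\chi_R)e^{F_n}\nG$. On the inside region $e^{F_n}$ is bounded by $e^{b'R}$, so $(1-\chi_R)e^{F_n}\nG$ is controlled by $\|\nG\|_{L^2}$. On the outside, after a second IMS-type localization whose error is again bounded by $C(b')^2\|\chi_R e^{F_n}\nG\|^2$ (up to a term supported on the annulus where $\nabla\chi_R\neq 0$), the definition \eqref{eq:def-Sigma} of $\Sigma^{\alpha_\cc}$ yields
\begin{equation}\nonumber
\re\las \chi_R e^{F_n}\nG,(\tilde H_\cc^Z-\mu)\chi_R e^{F_n}\nG\ras\ \ge\ (\Sigma^{\alpha_\cc}-\mu-\varepsilon)\|\chi_R e^{F_n}\nG\|^2
\end{equation}
for $R$ large. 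Choosing $b'$ so small that $\Sigma^{\alpha_\cc}-\mu-\varepsilon-C(b')^2>0$, and using $b'\le a$ to bound $\|e^{F_n}\ga\|\le \|e^{a|\cdot|}\ga\|$ uniformly in $n$, Cauchy--Schwarz then gives a bound on $\|e^{F_n}\nG\|$ independent of $n$. Letting $n\to\infty$ and applying Fatou's lemma delivers $e^{b'|\cdot|}\nG\in L^2(\R(\cc))$, which is the claim with $b=b'$.

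The main obstacle is the quantitative conjugation estimate in the second paragraph: because $\sqrt{p_i^2+1}-1$ is nonlocal, the naive IMS identity fails, and the symmetric Lipschitz weight $F_n$ must be chosen carefully so that the ``magnetic'' symbol $\sqrt{(p-i\nabla F_n)^2+1}-\sqrt{p^2+1}$ is controlled uniformly in $n$ with a bound proportional to $(b')^2$ rather than $b'$. Achieving this while preserving $\alpha_\cc$-symmetry is precisely the point at which earlier methods (Carmona--Masters--Simon, Nardini) are insufficient, and where the present approach diverges from them.
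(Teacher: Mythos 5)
Your proposal follows essentially the same Agmon-type strategy as the paper's proof; the key ingredients --- regularized $S(\cc)$-invariant exponential weights converging to $e^{b'|\cdot|}$, the sub-threshold gap $\Sigma^{\alpha_\cc}-\mu>0$ outside a large ball, a quantitative commutator/conjugation estimate for the nonlocal kinetic energy, and a monotone limiting argument --- all match. The structural differences are stylistic. You use the conjugation identity $\re\langle e^{F_n}\Gamma, e^{F_n}(\tilde H^Z_\cc-\mu)e^{-F_n}e^{F_n}\Gamma\rangle = \re\langle e^{F_n}\Gamma, e^{F_n}\gamma\rangle$ with a full-$\R(\cc)$-norm weight $F_n(x)=b'|x|/(1+|x|/n)$, whereas the paper works with the equivalent commutator identity \eqref{eq_exp_06} and the weight $\xi_\epsilon(x)=\chi_R(x)\sum_i e^{\nu|x_i|/(1+\epsilon|x_i|)}$, a sum of one-particle exponentials with the far-out cutoff built in. The paper's choice is tailored so that each commutator $[\xi_\epsilon,T_i]$ moves only the single coordinate $x_i$ inside the Bessel-kernel of Lemma~\ref{lem_exp_06}, and so that $\xi_\epsilon\Gamma$ is automatically admissible in the definition \eqref{eq:def-Sigma}; this avoids the separate IMS-type localization step you introduce to reconcile $e^{F_n}\Gamma$ (not supported outside $B_R$) with the definition of $\Sigma^{\alpha_\cc}$. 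The heart of the argument in both approaches is the estimate on the conjugated (or commuted) kinetic energy. You posit $\re\langle\psi,e^{F_n}T_ie^{-F_n}\psi\rangle \ge \langle\psi,T_i\psi\rangle - C(b')^2\|\psi\|^2$ uniformly in $n$; this is plausible, and can indeed be extracted from Lemma~\ref{lem_exp_06} by a Young-inequality computation, but it is precisely the nontrivial content developed in Lemma~\ref{lem_exp_01} and Appendix~\ref{sec_app_04}, so a complete proof would have to supply it rather than invoke it. One further small inaccuracy: for the nonlocal operator $T_i$ the IMS localization error coming from $\chi_R$ is \emph{not} supported on the annulus $\{\nabla\chi_R\neq 0\}$; the error is smeared by the Bessel kernel, and the absorption step works only because those tails decay at unit rate, faster than $e^{b'|\cdot|}$ once $b'<1$. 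Noting this would close the argument.
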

\begin{rema}Choosing $\ga=0$ in the above theorem implies that any eigenfunction $\nG$ of $\tilde H^Z_\cc	$ with associated eigenvalue $\mu<\Sigma^{\alpha_\cc}$ is exponentially decaying.  
\end{rema}
In addition to Theorem~\ref{thm_int_04} we will need a similar statement for cluster Hamiltonians $\tilde H_\beta$ corresponding to a cluster decomposition $\beta$ into two clusters. 
%
%
\begin{prop}\label{prop_int_02}
Let $\alpha_\beta$ be an irreducible representation of $S_\beta$ and let
\begin{equation}
\Sigma^{\alpha_\beta}:= \lim_{R\to\infty}\ \inf_{\substack{\varphi\in P^{\alpha_\beta}H^{1/2}(\R^{3N})\\
\mathrm{supp}(\varphi) \cap B_R(0) =\emptyset }} \ \|\varphi\|^{-2}\las \varphi, \tilde H_\beta\varphi \ras ,
\end{equation}
where $B_R(0)$ is the ball in $\R^{3N}$ with radius $R$ centered at $0$. 
For any fixed $\tilde \mu<\Sigma^{\alpha_\beta}$, assume that $\tilde\nG\in H^{1/2}(\R^{3N})$ satisfies $P^{\alpha_\beta}\tilde \nG =\tilde \nG$ 
and $(\tilde H_\beta-\tilde \mu)\tilde \nG =\tilde \ga$, where $\tilde \ga$ is a function with $e^{a |\cdot|} \tilde \ga\in L^2(\R^{3N})$ for some $a>0$. Then there exists $b>0$ such that 
\begin{equation}
e^{b|\cdot|}\tilde \nG\in L^2(\R^{3N}).
\end{equation}
\end{prop}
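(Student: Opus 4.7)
The plan is to mimic the proof of Theorem~\ref{thm_int_04} essentially verbatim, replacing the single-cluster operator $\tilde H_\cc^Z$ on $L^2(\R(\cc))$ by the two-cluster operator $\tilde H_\beta = \tilde H_{\cc_1}^{Z_1} + \tilde H_{\cc_2}^{Z_2}$ on $\ltn$. The structural ingredients are parallel: irreducible representations of $S_\beta = S(\cc_1)\times S(\cc_2)$ factorise as tensor products $\alpha_{\cc_1}\otimes\alpha_{\cc_2}$, so the projector $P^{\alpha_\beta}$ commutes with any $S_\beta$-symmetric multiplier; the kinetic part of $\tilde H_\beta$ is a sum of one-particle operators $T_i=\sqrt{p_i^2+1}-1$, just as in the single-cluster case; and the hypothesis $\tilde\mu < \Sigma^{\alpha_\beta}$ plays the role of $\mu<\Sigma^{\alpha_\cc}$. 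Since the configuration variable is now $x\in\R^{3N}$ rather than $x\in\R(\cc)$, all the weights, cutoffs and radial functions in the argument will be functions of $|x|$ in $\R^{3N}$, which are automatically invariant under all of $S_N$ and in particular under $S_\beta$.

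The key steps I would carry out, in order, are the following. First, fix $\delta:=(\Sigma^{\alpha_\beta}-\tilde\mu)/2>0$ and, using the definition of $\Sigma^{\alpha_\beta}$, pick $R_0>0$ such that for every $\varphi\in P^{\alpha_\beta} H^{1/2}(\R^{3N})$ with $\supp\varphi\cap B_{R_0}(0)=\emptyset$ one has $\las\varphi,(\tilde H_\beta-\tilde\mu-\delta)\varphi\ras\geq 0$. Second, introduce an $S_\beta$-invariant radial cutoff $\chi_{R_0}$ equal to $0$ on $B_{R_0}(0)$ and $1$ outside $B_{R_0+1}(0)$, together with a regularised exponential weight $W_\varepsilon(x)=e^{F_\varepsilon(|x|)}$, where $F_\varepsilon\in C^\infty$ is bounded for each $\varepsilon>0$ and $F_\varepsilon(r)\nearrow b r$ as $\varepsilon\to 0$. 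Third, test the equation $(\tilde H_\beta-\tilde\mu)\tilde\nG=\tilde\ga$ against $\chi_{R_0}^2 W_\varepsilon^2\tilde\nG$; the gap estimate from step one yields, after a standard IMS-type rearrangement, an inequality of the form $\delta\,\|\chi_{R_0}W_\varepsilon\tilde\nG\|^2\le \|W_\varepsilon\tilde\ga\|\,\|W_\varepsilon\tilde\nG\|+\mathcal{R}_\varepsilon(b)$, where $\mathcal{R}_\varepsilon(b)$ collects all the commutator/localization contributions of $[\tilde H_\beta,\chi_{R_0}W_\varepsilon]$. Finally, for $b$ chosen small enough that $\mathcal{R}_\varepsilon(b)$ can be absorbed into the left-hand side plus a finite multiple of $\|\tilde\nG\|^2+\|e^{a|\cdot|}\tilde\ga\|^2$ uniformly in $\varepsilon$, a monotone convergence / Fatou argument as $\varepsilon\to 0$ gives $e^{b|\cdot|}\tilde\nG\in\ltn$.

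The main obstacle is controlling the commutator $[\tilde H_\beta,W_\varepsilon]=\sum_i[T_i,W_\varepsilon]$: unlike in the nonrelativistic case, where $[-\Delta_i,e^F]$ is a first-order differential operator with coefficient $|\nabla_i F|$, the pseudo-relativistic kinetic term $\sqrt{p_i^2+1}-1$ is nonlocal and its commutator with an exponential multiplier is a nonlocal pseudodifferential operator that cannot be evaluated pointwise. This is precisely the difficulty that forced Nardini \cite{nardini1988asymptotic} to break the permutation symmetry. The resolution is to use the pseudo-relativistic localization error proved in Section~\ref{sec_loc}: it shows that when $F_\varepsilon$ is slowly varying with $|\nabla F_\varepsilon|$ small, the operator $\mathcal{R}_\varepsilon(b)$ is bounded by $C b\|\chi_{R_0}W_\varepsilon\tilde\nG\|^2$ plus a term exponentially small in $R_0$ and controlled by $\|e^{a|\cdot|}\tilde\ga\|$. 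Because all cutoffs and weights are radial in $\R^{3N}$ they commute with $P^{\alpha_\beta}$, so the symmetry constraint $P^{\alpha_\beta}\tilde\nG=\tilde\nG$ is preserved throughout and the coercivity from step one is available at every stage. With this estimate in hand the rest of the proof is a verbatim repetition of the Agmon-type scheme used for Theorem~\ref{thm_int_04}.
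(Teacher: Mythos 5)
Your overall plan---rerun the Agmon-type argument of Theorem~\ref{thm_int_04} with $\tilde H_\beta$ in place of $\tilde H_\cc^Z$---is exactly the paper's approach; the paper in fact disposes of Proposition~\ref{prop_int_02} in one line, stating that it ``follows immediately from Theorem~\ref{thm_int_04}.'' The structural observations you list (the kinetic part is still $\sum_i T_i$, the potentials are multiplication operators commuting with the weight, $S_N$-invariant radial functions commute with $P^{\alpha_\beta}$, and $\tilde\mu<\Sigma^{\alpha_\beta}$ substitutes for $\mu<\Sigma^{\alpha_\cc}$) are all correct. Your weight $e^{F_\varepsilon(|x|)}$ with $|x|$ the $\R^{3N}$-norm differs from the paper's choice $\chi_R(x)\sum_i e^{F_{\nu,\epsilon}(|x_i|)}$, for which the one-particle commutator estimates of Appendix~\ref{sec_app_04} are tailored; since $e^{b|x|}$ and $\sum_i e^{b'|x_i|}$ are comparable up to changing the decay rate, the conclusions are equivalent, but you would need to re-derive the commutator bounds for your weight.

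The genuine misstep is the claim that the commutator contribution $\mathcal{R}_\varepsilon(b)$ is controlled by ``the pseudo-relativistic localization error proved in Section~\ref{sec_loc}.'' Theorems~\ref{prop_loc_01} and~\ref{thm_loc_01} concern the IMS error of a partition of unity with bounded cutoffs $u_\bl,v_\bl$ satisfying $u_\bl^2+v_\bl^2=1$; they say nothing about the commutator of $T_i$ with an unbounded exponential weight, and they do not produce the ``term exponentially small in $R_0$'' you invoke. What actually controls the quantity $\re\las\xi_\epsilon\varphi,[\xi_\epsilon,\tilde H_\cc^Z]\varphi\ras$ in the proof of Theorem~\ref{thm_int_04} is the dedicated estimate of Lemma~\ref{lem_exp_01}, proved in Appendix~\ref{sec_app_04} using the Bessel-kernel representation of $\sqrt{p^2+1}-1$ (Lemma~\ref{lem_exp_06}); that lemma yields a bound $\hat\delta\Vert\sum_i e^{F}\varphi\Vert^2 + C_{\nu_0,R}\Vert\varphi\Vert^2$ in which the constant $C_{\nu_0,R}$ is merely $R$-dependent, not exponentially small. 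Lemma~\ref{lem_exp_01} does transfer verbatim to $\tilde H_\beta$, since the additional Coulomb terms are multiplication operators and commute with $\xi_\epsilon$, so the commutator again reduces to $\sum_i[\xi_\epsilon,T_i]$. Replace your appeal to Section~\ref{sec_loc} with an appeal to Lemma~\ref{lem_exp_01} and the argument closes correctly.
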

\begin{proof}
The proof of Proposition~\ref{prop_int_02} follows immediately from Theorem~\ref{thm_int_04}, since the Hamiltonian $\tilde H_\beta$ describes non-interacting clusters, whose  center has been moved to the origin. Thus the total system is a direct sum of these noninteracting systems to each of which Theorem \ref{thm_int_04} applies. 
\end{proof}
%
%
\begin{corr}\label{cor_exp_01}
Let $\beta\in \Da$ and $\tilde {\mathcal{W}}_\beta^\alpha$ be the ground state subspace of the Hamiltonian $\tilde H_\beta^\alpha = \tilde H_\beta P^\alpha$ corresponding to the energy $\mu^\alpha$. Then for any normalized function $\phi\in \tilde {\mathcal{W}}_\beta^\alpha$ the functions $\phi$, $\phi_2=(\tilde H_\beta-\mu^\alpha)^{-1}f_2\phi$, and $\phi_3=(\tilde H_\beta-\mu^\alpha)^{-1}f_3\phi$ with $f_2, f_3$ defined in \eqref{eq:def-fA},\eqref{eq:def-fB} and some $b_1,b_2,b_3>0$ 
\begin{equation}
 e^{b_1|\cdot|}\phi,\ e^{b_2|\cdot|} \phi_2,\ e^{b_3|\cdot|} \phi_3 \in L^2(\R^{3N}).
\end{equation}
\end{corr}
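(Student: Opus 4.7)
All three decay statements will be reduced to Proposition~\ref{prop_int_02} after isolating irreducible components under the cluster permutation group $S_\beta$.

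The plan for $\phi$ is as follows. First, decompose $\phi = \sum_{\alpha'_\beta \prec \alpha} P^{\alpha'_\beta}\phi$ according to the irreducible representations of $S_\beta$ induced by $\alpha$. Because $\tilde H_\beta$ commutes with every $P^{\alpha'_\beta}$, each nonzero component $\phi^{\alpha'_\beta} := P^{\alpha'_\beta}\phi$ is an eigenfunction of $\tilde H_\beta^{\alpha'_\beta}$ with eigenvalue $\mu^\alpha$, and by the ground state existence result of Appendix~\ref{sec_exist} together with the HVZ theorem of Appendix~\ref{sec_hvz}, this eigenvalue satisfies the strict gap $\mu^\alpha < \Sigma^{\alpha'_\beta}$. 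Applying Proposition~\ref{prop_int_02} to each component with $\tilde\gamma = 0$ yields exponential decay of each $\phi^{\alpha'_\beta}$, and taking $b_1$ to be the minimum of the finitely many rates gives $e^{b_1 |\cdot|}\phi \in L^2(\R^{3N})$.

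The plan for $\phi_\ell$, $\ell = 2, 3$, has two parts. Part one is to observe that $f_\ell$ is a polynomial of degree $\ell$ in the electron coordinates, so that for any $0 < a < b_1$ the estimate $|f_\ell(x)| \leq C_a e^{(b_1 - a)|x|}$ holds on $\R^{3N}$, and hence $e^{a|\cdot|} f_\ell \phi \in L^2(\R^{3N})$ by the exponential decay of $\phi$ obtained above. Part two is to apply Proposition~\ref{prop_int_02} to each irreducible component of $\phi_\ell$: since $f_\ell$ commutes with every $P^{\alpha'_\beta}$ (by the Remarks following Theorem~\ref{thm_pre_02}), and since $(\tilde H_\beta - \mu^\alpha)^{-1}$ on its domain of definition (Corollary~\ref{cor_app_01}) commutes with each $P^{\alpha'_\beta}$ as well, the decomposition $\phi_\ell = \sum_{\alpha'_\beta \prec \alpha} P^{\alpha'_\beta}\phi_\ell$ produces components satisfying $(\tilde H_\beta - \mu^\alpha) P^{\alpha'_\beta}\phi_\ell = P^{\alpha'_\beta} f_\ell \phi$ with exponentially decaying right-hand sides. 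Proposition~\ref{prop_int_02}, applied component-wise, then delivers rates from which a single $b_\ell > 0$ is extracted by taking the minimum.

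The principal point where one has to do real work is not within the corollary itself but in guaranteeing the hypothesis of Proposition~\ref{prop_int_02}: the strict spectral gap $\mu^\alpha < \Sigma^{\alpha'_\beta}$ for each irreducible component of $\phi$. This is exactly the strengthening of the HVZ picture to the permutation-symmetric setting established in Appendices~\ref{sec_hvz} and \ref{sec_exist}. Once this gap is available the corollary becomes a straightforward bookkeeping exercise in the symmetry decomposition and the polynomial growth of $f_2$ and $f_3$.
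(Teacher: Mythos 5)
Your argument is correct and matches the paper's strategy: decay of $\phi$ follows directly from Proposition~\ref{prop_int_02} with $\tilde\Gamma = 0$, and decay of $\phi_\ell$ follows from the identity $(\tilde H_\beta - \mu^\alpha)\phi_\ell = f_\ell\phi$ together with the fact that the polynomial growth of $f_\ell$ is absorbed by the exponential decay of $\phi$. The one place you go further than the paper's proof, which invokes Proposition~\ref{prop_int_02} directly, is in spelling out the decomposition of $\phi$ and $\phi_\ell$ into their $S_\beta$-irreducible components before applying the proposition; this is a worthwhile clarification, since Proposition~\ref{prop_int_02} is stated for functions belonging to a single irreducible representation $\alpha_\beta$ of $S_\beta$, and a ground state in $\tilde{\mathcal W}_\beta^\alpha$ need not be.
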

\begin{proof}
Since $\phi$ is a ground state, the existence of a $b_1>0$ such that 
$ e^{b_1|\cdot|}\phi\in L^2(\R^{3N})$ follows immediately 
from Proposition~\ref{prop_int_02}. Notice that for $l=2,3$ we 
have
\begin{equation}
(\tilde H_\beta-\mu^\alpha)\phi_l=(\tilde H_\beta-\mu^\alpha)(\tilde H_\beta-\mu^\alpha)^{-1}f_l\phi=f_l \phi.
\end{equation}
The functions $f_l$ grow at most polynomially in $|x_i|$ which is controlled by the exponential decay of $\phi$ so we can apply Proposition~\ref{prop_int_02} with $\tilde\Gamma = f_l \phi$ to obtain the result.
\end{proof}
%
%
It remains to give the 
\begin{proof}[Proof of Theorem~\ref{thm_int_04}] Let $k:=\sharp\cc$ be the number of electrons in the cluster $\cc$. To simplify the notation assume $\cc=\{1,\cdots,k\}$. Let $\nG\in H^{1/2}(\R(\cc))$ be a solution of the equation $(\tilde H_\cc^Z-\mu)\nG  =\ga $ and  $\xi=\xi_{\nu,\veps,R} \in C^\infty(\R(\cc);\R)$ be a family of functions with the following properties:
\begin{itemize}
\item $\xi$ is bounded

\item $\xi$ is invariant under all permutations of the variables in the cluster $C$ 

\item $\supp (\xi)\cap B_R(0)=\emptyset$ for some large enough $R>0$, which will be chosen later

\item $|\xi|\leq C e^{a|\cdot|}$ for some constant $C>0$ 
\end{itemize}
By the definition of $\Sigma^{\alpha_\cc}$ in \eqref{eq:def-Sigma} and since 
$\mathrm{supp}(\xi)\cap B_R(0)=\emptyset$, there exists a function $\vartheta(R)$ such that $\lim_{R\to\infty}\vartheta(R) = 0$ and 
\begin{equation}\label{eq_exp_05}
\big(\Sigma^{\alpha_\cc}-\mu-\vartheta(R)\big)\Vert \xi \nG\Vert^2\leq \las \xi \nG,(\tilde H_\cc^Z-\mu)\xi\nG\ras.
\end{equation}
Since $(\tilde H_\cc^Z-\mu)\nG=\ga$, we can write
\begin{equation}\label{eq_exp_06}
\begin{split}
\las \xi \nG,(\tilde H_\cc^Z-\mu)\xi\nG\ras&=\re \las \xi \nG,(\tilde H_\cc^Z-\mu)\xi\nG\ras 
		= \re \las \xi^2 \nG, (\tilde H_\cc^Z-\mu)\nG\ras 
			+ L^C_\xi(\nG,\nG)\\
	&=\re \las \xi^2\nG, \Gamma\ras + L^C_\xi(\nG,\nG)
\end{split}
\end{equation}
with $L^C_\xi$ the quadratic form for the commutation error from Lemma \ref{lem_ims-extended-many-body}. 
Clearly $\re \las \xi^2\nG, \Gamma\ras = \re \las \xi\nG, \xi\Gamma\ras \le \|\xi\nG\|\|\xi\Gamma\|$. 
Together with \eqref{eq_exp_05} we get
\begin{equation}\label{eq_exp_02}
\big(\Sigma^{\alpha_\cc}-\mu-\vartheta(R)\big)\Vert \xi \nG\Vert^2\leq \Vert\xi\nG\Vert\Vert\xi\Gamma\Vert+L^C_\xi(\nG,\nG).
\end{equation}
We now specify the choice of $\xi=\xi_{\nu,\veps,R}$: For $\nu\ge 0$ and for $\veps\ge  0$ we set
\begin{equation}\label{eq_exp_01}
G_{\nu,\veps}(r):=\frac{\nu r}{1+\veps r}.
\end{equation}
and 
\begin{equation}\label{eq_F}
\nF_{\nu,\veps}= \sum_{j\in C} G_{\nu,\veps}(|x_j)|.
\end{equation}
Pick $\chi_0\in C^\infty(\R_+;[0,1])$ such that
\begin{equation}\nonumber
\chi_0 (r) := \left\{
\begin{array}{ll}
   1 & \quad\mbox{if}\quad r < 1  \\
   0 & \quad\mbox{if}\quad r > 2
\end{array}
\right.
\end{equation}
and define for $x\in \R^{3k}$ the function $\chi\in C^\infty(\R^{3k};[0,1])$ with
\begin{equation}\label{eq_exp_03}
\chi(x):=1-\prod_{i=1}^k \chi_0(|x_i|).
\end{equation}
For $R>0$ we set   
\begin{equation} 
	\chi_R(x)= \chi\big( x/R \big) \, 
\end{equation}
and 
\begin{equation}\label{eq_exp_04}
\xi=\xi_{\nu,\veps,R}:=\chi_Re^{\nF_{\nu,\veps}}.
\end{equation}
%
%
Lemma \ref{lem_L_xi-bound-many-body} gives a convenient bound for the second term on the right hand side of \eqref{eq_exp_02}. Using \eqref{eq_commutation-error-bound-many-body} in \eqref{eq_exp_02} yields
\begin{equation}\label{eq_exp-punshline-1}
(\Sigma^{\alpha_\cc} - \mu -\vartheta(R))\Vert \xi\nG\Vert^2
 \leq  \Vert\xi\nG\Vert\Vert\xi\Gamma\Vert
	+kC_{\nu}\left(L_\chi/R +\nu\right)^2 \big\|e^{F}\nG\big\|^2
\end{equation}
where $k$ is the number of particles in the cluster $C$ and, for simplicity of notation,  we abbreviated $F=F_{\nu\veps}$,  

Note that $(1-\chi_R)e^F\le e^{2k\nu R}$, hence $\Vert (1-\chi_R)e^F\nG\Vert \le e^{2k\nu R} \Vert\nG\Vert$. Using this and 
$\Vert e^F\nG \Vert\le \Vert \chi_Re^F \nG \Vert + \Vert (1-\chi_R)e^F \nG \Vert$ in \eqref{eq_exp-punshline-1} and rearranging terms, we get 
\begin{equation}\label{eq_exp_12}
\begin{split}
 \big(\Sigma^{\alpha_\cc} - \mu -\vartheta(R) 
 	-  \delta(R,\nu)\big)&\Vert \xi_\veps\nG\Vert^2 
 		- \big(2 e^{2k\nu R} \delta(R,\nu)\Vert\nG\Vert +  \Vert\xi\Gamma\Vert\big) \Vert\xi\nG\Vert
 		 \\
	 	&\leq \delta(R,\nu)e^{4k\nu R} \Vert\nG\Vert^2 .
\end{split} 
\end{equation}
Where we also abbreviated $\delta(R,\nu)= kC_\nu(\nu+L_\chi/R)^2$. 
Since 
\begin{equation*}
	\lim_{R\to\infty}(\vartheta(R) +\delta(R,\nu)) = kC_\nu \nu^2\, ,
\end{equation*} 
we can find, for any $0<\nu\le a$ with $kC_\nu \nu^2< \Sigma^{\alpha_\cc} - \mu$, a radius $R>0$ such that 
\begin{equation*}
	\gamma\coloneqq \Sigma^{\alpha_\cc} - \mu -\vartheta(R)-\delta(R,\nu)>0.
\end{equation*}
With such a choice for $\nu$ and $R$, setting 
$C= \delta(R,\nu)$, 
we get from \eqref{eq_exp_12} 
\begin{equation}\label{eq_exp_09}
 \gamma\Vert \xi_{\nu,\veps,R}\nG\Vert^2 - \big(C\Vert\nG\Vert +\Vert e^{a|\cdot|}\Gamma \Vert\big)\,\Vert \xi_{\nu,\veps,R}\nG\Vert
 	\le
 		Ce^{4k\nu R} \Vert \nG\Vert . 
\end{equation}
since $\xi=\xi_{\nu,\veps,R}\le e^{\nu|\cdot|}\le e^{a|\cdot|}$, which clearly gives 
$\Vert\xi\Gamma\Vert\le \Vert e^{a|\cdot|}\Gamma\Vert$. 
 
Note that the r.h.s\ of \eqref{eq_exp_09} is independent of $\veps$. 
Since $\gamma>0$, the map 
$$0\le s\mapsto  \gamma s^2- \big(C\Vert\nG\Vert +\Vert e^{a|\cdot|}\Gamma \Vert\big)\,s$$ 
is unbounded from above. Furthermore, $\xi_{\nu,\veps,R}$ converges monotonically to $\chi_R e^{\nu |\cdot|}$ as $\veps\rightarrow 0$. 
Thus the monotone convergence theorem and the bound \eqref{eq_exp_09} shows
\begin{equation*}
	\Vert \chi_R e^{\nu|\cdot|} \nG \Vert =\lim_{\veps\to 0} \Vert \xi_{\nu,\veps,R} \nG \Vert <\infty\, .
\end{equation*} 
Since $\chi_R$ equals one outside a ball of radius $2R$, this implies $	\Vert e^{\nu|\cdot|} \nG \Vert<\infty$, which 
completes the proof of Theorem~\ref{thm_int_04}.
\end{proof}
%
%
\section{Localization error estimates}\label{sec_loc}
In the proof of Theorem~\ref{thm_pre_02} and Theorem~\ref{thm_pre_03} we will use a partition of unity of the configuration space. In addition to this, we use a cutoff function in our construction of the trial function which we will use to bound the intercluster energy from above (see the introduction in Section~\ref{sec_int}).
To obtain the required upper bound we need to show that cutting the ground states of the subsystems leads to an exponentially small increase in the expectation value of the intercluster energy. Therefore we need a suitable estimate of the so-called localization error.
Note that in contrast to the nonrelativistic kinetic energy operator, the pesudo-relativistic operator is not local.
Consequently the localization error is non-zero everywhere, including the regions where derivatives of the cutoff functions vanish.
Of course, there exist several estimates for the localization error of the pseudo--relativistic kinetic energy. However none of them are precise enough for the proof of the van der Waals-London law. The main difference between the bound for the localization error given below in Theorem~\ref{thm_loc_01} and most of the previously known results (see for example \cite{Benguria,lenzmann2010,Lewis1997,lieb2001analysis,lieb1988}) is, that the localization error is confined to a region which is close to the support of the derivatives of the cutoff functions with a remainder which \emph{decays exponentially with the distance} to the support of the derivatives of the cutoff functions. 
	A similar bound was given in \cite{SSS}, however, our bound is simpler, with a simpler proof, and more suitable for our application.

Take any Lipschitz continuous cut--off functions $w_0,w_1,w_2$ on $\R^3$ and assume that $\sum_{l=0}^2 w_l(z)^2=1$. 
We will choose them later such that $w_1$, respectively $w_2$, localizes near the nucleus at $X_1$, respectively $X_2$. 
Then 
\begin{align}
	1& = \prod_{j=1}^N \Big( \sum_{l=0}^2 w_l(x_j)^2 \Big)
		= \sum_{l_1,\ldots,l_N=0}^2 \prod_{j=1}^N w_{l_j}(x_j)^2 \nonumber\\
		&= \sum_{(\cc_0,\cc_1,\cc_2)\in \Dn} \Big(\prod_{i\in \cc_1} w_1(x_i)\prod_{j\in \cc_2} w_2(x_j)
\prod_{k\in \cc_0} w_0(x_k)\Big)^2 
\label{eq:partition-1}
\end{align}
gathering the indices with the same $l_j$ 
into clusters 
$\cc_l=\{j=1,\ldots,N: l_j=l\}$, $l=0,1,2$, which form a partition of $\{1,\ldots,N\}$.  
We also denote by $\Dn$ the collection of decompositions $\beta=(\cc_0,\cc_1,\cc_2)$ of 
$\{1,\cdots,N\}$ into three clusters $(\cc_0,\cc_1,\cc_2)$, with $\cc_k\cap \cc_l=\emptyset$ for all $k\neq l$ and $\bigcup_{k=0}^2 \cc_k=\{1,\cdots,N\}$. In this way the cluster $\cc_0$ contains particles far from both nuclei while 
clusters $\cc_1$ and $\cc_2$ contain electrons localized near $X_1$ and $X_2$, respectively. 

 For $x\in \R^{3N}$ and 
 $\beta= (\cc_0,\cc_1,\cc_2)$ we define a family of bounded Lipschitz continuous cutoff functions $J_\beta\in Lip(\R^{3N};[0,1])$ by
\begin{equation}\label{eq_loc_02}
\begin{split}
J_\beta (x):=&\prod_{i\in \cc_1} w_1(x_i)\prod_{j\in \cc_2} w_2 (x_j) \prod_{k\in \cc_0}w_0(x_k).
\end{split}
\end{equation}
Because of \eqref{eq:partition-1} these 
functions form a partition of unity, i.e.,  
for all $x\in \R^{3N}$
\begin{equation}\label{eq_int_03}
\sum_{\beta \in \Dn}J_\beta ^2(x)= 1.
\end{equation}
A convenient choice of cut--off functions $w_l$ 
is as follows: Let $\chi$ be given by 
$\chi(t)= 1$ for $0\le t\le 1$, $\chi(t)= \cos\left( \frac{\pi}{2} (t-1)\right)$ 
for $1\le t\le 2$ and $\chi(t)=0$ for $t\ge 2$. 
This is a Lipschitz continuous function and 
$\sqrt{1-\chi^2}$ is also Lipschitz continuous. 

Given positions $X_1, X_2$ of the two nuclei define for $z\in\R^3$  
\begin{equation}\label{eq_loc_102}
\begin{split}
w_1(z)&:=\chi\left( \frac{|z-X_1|}{R} \right) \\
w_2(z)&:=\chi\left( \frac{|z-X_2|}{R} \right)\\
w_0(z)&:=\sqrt{1- w_1^2(z)-w_2^2(z)}.
\end{split} 
\end{equation}
Note that under the condition $4R\le |X_2-X_1|$, we have $w_1w_2=0$, hence $w_0= \sqrt{1-w_1^2-w_2^2}= \sqrt{1-w_1^2}\sqrt{1-w_2^2}$ is also Lipschitz continuous. 

The localization error for some state $\psi\in H^{1/2} (\R^{3N})$ and the partition of unity defined by the functions $J_\beta\in \lip(\R^{3N};[0,1])$ is given by
\begin{equation}\label{eq_loc_03}
	\mathcal{L}[\psi] := \sum_{\beta\in \Dn} \las J_\beta\psi, H J_\beta \psi\ras -\las \psi , H \psi\ras .
\end{equation}
For $z\in \R^3$ we set
\begin{equation}
\Theta_R(z):= \id_{[R/2, 5R/2]}(|z|)
\end{equation}
and for $x\in \R^{3N}$ we define
\begin{equation}\label{eq_loc_04}
  \Theta_{1,j,R}(x):= \Theta_R(x_j-X_1), \quad 
  \Theta_{2,j,R}(x):= \Theta_R(x_j-X_2).
\end{equation}
and 
\begin{equation}\label{eq_loc_04-2}
  \Theta_{1,R}:= \sum_{j=1}^N\Theta_{1,j,R}, 
  	\quad \Theta_{2,R}:= \sum_{j=1}^N\Theta_{1,j,R}\, ,
  	\quad \Theta_R =   \Theta_{1,R} +   \Theta_{2,R}
\end{equation}
which count the number of electrons in an annulus around 
the nuclei at $X_1$ or $X_2$, at least, when $R\le2|X_2-X_1|/5 $, when there is no overlap of the two annular regions. 
With this, we can formulate our bound on the localization error.  
\begin{theorem}[N electron localization error estimate]\label{thm_loc_01}
There exists $C>0$ such that for any $\psi\in H^{1/2}(\R^{3N})$ we have
\begin{equation}\label{eq_int_05}
|\mathcal{L}[\psi] |\leq \frac{C}{R^2}\left(
		\las\psi, \Theta_{R} \psi\ras   
		+e^{-R/4}\Vert \psi \Vert^2 \right).
\end{equation}
for all $0<R\le|X_2-X_1|/4$, where the constant $C$ depends only on $N$, the number of electrons.   
\end{theorem}
%
%
For the proof of this theorem we need the following result.
\begin{prop}\label{cor_loc_01}
Let $w_0,w_1,w_2$ be as defined in \eqref {eq_loc_102}. Then there exists a constant $C<\infty$, such that for all $0<R\le |X_2-X_1|/4$ and all $h\in H^{1/2}(\R^3)$
\begin{equation}\label{eq_loc_100}
\begin{split}
&|\sum_{l=0}^2 \las w_l h,T_1 w_l h\ras-\las h,T_1 h \ras  |\\
&\leq \frac{C}{R^2}\lk\Big\Vert\big(\Vert\Theta_R(\cdot -X_1)h\Vert^2 + \Vert\Theta_R(\cdot -X_2)h\Vert^2 \big)  +e^{-R/4}\Vert h\Vert^2\rk
\end{split}
\end{equation}
\end{prop}
%
\begin{proof}
Note that $w_1$ and $\sqrt{1-w_1^2}$ are both bounded Lipschitz continuous functions with Lipschitz constants $R^{-1}$. 
Lemma \ref{lem_H-onehalf-invariance} shows that all the terms 
in the l.h.s.\ of \eqref{eq_loc_100} are well--defined. According to Lemma \ref{lem_ims_extended} and choosing $d=R/2$ in Lemma~\ref{lem_loc_01} we have
\begin{equation}\label{eq_loc_21}
\begin{split}
	&\las h,T_1,h \ras =  \las w_1h,T_1w_1h\ras + \Big\las \sqrt{1-w_1^2}h,T_1\sqrt{1-w_1^2}h\Big\ras +\text{error}_1
\end{split}
\end{equation}
with 
\begin{equation}
	|\text{error}_1| 
	\le \frac{C}{R^2}\Big(\Vert\Theta_R(\cdot -X_1)h\Vert^2  +e^{-R/4}\Vert h\Vert^2\Big)\, ,
\end{equation}
with a slight abuse of notation for $\Theta_R$ (compared to Lemma \ref{lem_loc_01}). 

Iterating this for $\widetilde{h}=\sqrt{1-w_1^2}h\in H^{1/2}(\R^3)$ and the cutoff function $w_2$,  we get
\begin{equation}\label{eq_loc_105}
\begin{split}
	&\las \widetilde{h},T_1,\widetilde{h} \ras =  \las w_2\widetilde{h},T_1w_2\widetilde{h}\ras + \Big\las \sqrt{1-w_2^2}\widetilde{h},T_1\sqrt{1-w_2^2}\widetilde{h}\Big\ras +\text{error}_2
\end{split}
\end{equation}
with 
\begin{equation}
 \begin{split}
	|\text{error}_2| 
	&\le \frac{C}{R^2}\Big(\Vert\Theta_R(\cdot -X_2)\widetilde{h}\Vert^2  +e^{-R/4}\Vert \widetilde{h}\Vert^2\Big)  \\
	&\le  \frac{C}{R^2}\Big(\Vert\Theta_R(\cdot -X_2)h\Vert^2  +e^{-R/4}\Vert h\Vert^2\Big) 
\end{split}
\end{equation}
since $|\widetilde{h}|\le |h|$. Moreover, since $\supp(w_1)\cap \supp(w_2)=\emptyset$ we find
$$w_2\sqrt{1-w_1}=w_2,$$
hence $w_2\widetilde{h}= w_2 h$ and 
$$\sqrt{(1-w_2^2)}\widetilde{h}=\sqrt{(1-w_2^2)(1-w_1^2)}h=w_0h.$$
So from  \eqref{eq_loc_21} and \eqref{eq_loc_105} we get 
\begin{align*}
	\big| \las w_1h, T_1 w_1h \ras &+ \las w_2h, T_1 w_2h \ras  + \las w_0h, T_1 w_0h \ras - \las h, T_1 h \ras \big| = \big| \text{error}_1 +\text{error}_2 \big| \\
	&\le \frac{C}{R^2}\Big(\Vert\Theta_R(\cdot -X_1)h\Vert^2 + \Vert\Theta_R(\cdot -X_2)h\Vert^2  +e^{-R/4}\Vert h\Vert^2\Big) 
	\qedhere
\end{align*}
\end{proof}
\begin{rema}
  Without much change in notation, the above proof easily applies to an arbitrary number of nuclei at positions $X_1, \ldots, X_M$ for all  $0<R\le \min_{k\not=l}|X_k-X_l|/4$.  
\end{rema}
\begin{proof}[Proof of Theorem \ref{thm_loc_01}]
The Coulomb potential, as a multiplicative operator, commutes with the functions $J_\beta$. 
The operator $T_m$ only acts in the $m$-th particle, meaning that it commutes with functions $w_l(x_j)$ for $l=0,1,2$ and $j\neq m$ and we have
\begin{equation}\label{eq_summed out loc error}
\mathcal{L}[ \psi]=\sum_{m=1}^N\left( \sum_{l=0}^2\las w_l(x_m)\psi,T_m w_l(x_m)\psi\ras -\las \psi ,T_m \psi\ras \right).
\end{equation}
 Applying Proposition~\ref{cor_loc_01} 
 on the r.h.s.\ of \eqref{eq_summed out loc error} yields the result, since $(\Theta_{k,j,R})^2= \Theta_{k,j,R}$ for $k=1,2$. 
To see \eqref{eq_summed out loc error} note 
\begin{equation*}
\begin{split}
	\mathcal{L}[\psi] 
		&= \sum_{\beta\in \Dn} \las J_\beta\psi, T J_\beta \psi\ras -\las \psi , T \psi\ras 
		= \sum_{m=1}^N\Big(\sum_{\beta\in \Dn} \las J_\beta\psi, T_m J_\beta \psi\ras -\las \psi , T_m \psi\ras \Big) \, .
\end{split}
\end{equation*}
 Given $m\in \{1,\ldots,N\}$ and a cluster decomposition $\beta=(\cc_0,\cc_1,\cc_2)$ 
  let $\widetilde{\cc}_j= \cc_j\setminus\{m\}$, $j=0,1,2$.
  Then  $\widetilde{\beta}=(\widetilde{\cc}_0,\widetilde{\cc}_1,\widetilde{\cc}_2)$ forms a cluster decomposition of $\{1,\ldots,N\}\setminus\{m\}$, i.e., $N-1$ particles.  
  Furthermore, let $l$ be uniquely determided by 
  $\widetilde{\cc}_{l}\neq \cc_{l}$, i.e, the particle $m$ was removed from the cluster 
    $\cc_{l}$, and denote the corresponding cluster decompositions by 
    $\widetilde{\beta}_{l}$. Define $J_{\widetilde{\beta}_l}(\widehat{x}_m)$ for $\widehat{x}_m=(x_1,\ldots,x_{m-1},x_{m+1}, \ldots, x_N)\in R^{3(N-1)}$ similarly as 
    $J_\beta$ in \eqref{eq:partition-1}.
    Then $J_\beta(x)= J_{\widetilde{\beta}_l}(\widehat{x}_m) w_l(x_m)$    
    and since $T_m$ acts only on the $m$-th particle, one has 
    \begin{align*}
    	 \las J_\beta\psi, T_m J_\beta \psi\ras 
    	 &=  \big\las J_{\widetilde{\beta}_l}(\widehat{x}_m)w_l(x_m)\psi, T_m J_{\widetilde{\beta}_l}(\widehat{x}_m)w_l(x_m)\psi\big\ras\\
    	 &= \big\las J_{\widetilde{\beta}_l}(\widehat{x}_m)^2w_l(x_m)\psi, T_m w_l(x_m)\psi\big\ras\, .
    \end{align*}
  Thus 
    \begin{align*}
    	\sum_{\beta\in \Dn} \las J_\beta\psi, T_m J_\beta \psi\ras 
    		&= \sum_{l=0}^2   
    			\Big\las \sum_{\widetilde{\beta}_l} 
				J_{\widetilde{\beta}_l}(\widehat{x}_m)^2w_l(x_m)\psi, T_m w_l(x_m)\psi\Big\ras \\
			&= \sum_{l=0}^2 \las w_l(x_m)\psi, T_m w_l(x_m)\psi\ras 
    \end{align*}
  since, by the same argument as for \eqref{eq_int_03}, we also have 
  $\sum_{\widetilde{\beta}_l} 
    		J_{\widetilde{\beta}_l}(\widehat{x}_m)^2=1$. 
  This implies \eqref{eq_summed out loc error}. 
\end{proof}
\begin{rema}\label{rema:thm_loc_01} 
  With just minor changes in notation, the above proof can be easily adapted to cluster decomposition with an arbitrary number of clusters. In particular, this allows for an arbitrary finite number of nuclei.
\end{rema}

%
%
\section{Diatomic molecules}\label{sec_diat}
\subsection{Lower bound}\label{sec_below}
Let $\psi \in \Hfa$ with $\Vert \psi\Vert=1$ and $a_1, a_2$ defined in \eqref{eq_int_06} and \eqref{eq_int_07}. We have to show that there exists a constant $0<C<\infty$ such that
\begin{equation}
\las \psi, (H-\mu^\alpha)\psi\ras \geq -\frac{a_1}{|D|^6}-\frac{a_2}{|D|^8}-\frac{C}{\D^{10}}.
\end{equation}
We decompose an arbitrary state $\psi\in \Hfa$ with respect to the partition of unity given by $J_\beta$ defined in \eqref{eq_loc_02} according to the cluster decompositions in $\Dn$ to get
\begin{equation}\label{eq_below_100}
\begin{split}
\las \psi , (H-\mu^\alpha)\psi\ras &=\sum_{\beta \in \Dn}\las J_\beta \psi,(H-\mu^\alpha)J_\beta \psi\ras -\calL[\psi]
\end{split}
\end{equation}
where $\loc$ is the localization error defined in \eqref{eq_loc_03}. By Theorem~\ref{thm_loc_01} there exists a constant $0<C<\infty$ such that
\begin{equation}\label{eq_below_42}
-\loc \ge -\frac{C}{R^2}\left(
		\las\psi, \Theta_{R} \psi\ras  
		+e^{-R/4}\Vert \psi \Vert^2 \right).
%
\end{equation}
where $ \Theta_{1,R} $ and $ \Theta_{2,R} $ are defined in \eqref{eq_loc_04-2}. Let
\begin{equation}\label{eq_below_02}
\begin{split}
	L[J_\beta \psi]:=&\las J_\beta \psi,(H-\mu^\alpha)J_\beta \psi\ras 
		- \frac{C}{R^2}\left(
		\las J_\beta\psi,\Theta_{R}J_\beta\psi\ras
		+e^{-R/4}\Vert J_\beta\psi \Vert^2 \right).
%
\end{split}
\end{equation} 
We will choose $R= R_D= |D|^\frac{3}{4}$ with $D=X_2-X_1$, 
so that for all large enough separations $|D|$ of the nuclei we have  $R<|X_2-X_1|/4$ and, in addition, that 
the support of $ \Theta_{R} $, is far from the nuclei  at $X_1$ and $X_2$.  
According to \eqref{eq_int_03} we have
\begin{equation}
\Vert \psi\Vert^2=\sum_{\beta \in\Dn} \Vert J_\beta \psi\Vert^2  
	\quad \text{and }\Vert  \Theta^k_{j,R}\psi\Vert^2=\sum_{\beta \in\Dn} \Vert  \Theta^k_{j,R}J_\beta \psi\Vert^2\, 
	\quad k=1,2 
\end{equation}
and from \eqref {eq_below_100},\eqref{eq_below_42} and \eqref{eq_below_02} we get
\begin{equation}\label{eq_below_101}
\las \psi ,(H-\mu^\alpha)\psi\ras \geq \sum_{\beta \in \Dn} L[J_\beta \psi].
\end{equation}
Slightly abusing notation, we say $\beta=(\emptyset,\cc_1,\cc_2)\in \Da$ if $\sharp \cc_1=Z_1$ and $ \sharp \cc_2=Z_2$. From \eqref{eq_below_101} we have
\begin{equation}\label{eq_below_102}
\sum_{\beta \in \Dn} L[J_\beta \psi]=\sum_{\beta \in \Da} L[J_\beta \psi]+\sum_{\beta \in \Dn\setminus \Da} L[J_\beta \psi].
\end{equation}
We start with estimating the second sum in the r.h.s. of \eqref{eq_below_102}.\par 
For $\beta=(\cc_0,\cc_1,\cc_2)\in \Dn$ we set
\begin{equation}
\begin{split}
I_\beta &:= \sum_{i\in  \cc_0 \cup\cc_1} \frac{-e^2Z_2}{|x_i-X_2|}+\sum_{j\in \cc_0\cup \cc_2} \frac{-e^2Z_1}{|x_j-X_1|}+\sumij \frac{e^2}{|x_i-x_j|}\\
&\quad +\sum_{\substack{k\in \cc_0\\i\in \cc_1\cup \cc_2}}\frac{e^2}{|x_k-x_i|}+\frac{e^2Z_1Z_2}{|X_2-X_1|}
\end{split}
\end{equation}
the sum of Coulomb interactions between particles belonging to different subsystems and let
\begin{equation}
H_\beta:=H-I_\beta.
\end{equation}
Then we can write
\begin{equation}
\las J_\beta \psi,(H-\mu^\alpha)J_\beta \psi\ras = \las J_\beta \psi,( H_\beta -\mu^\alpha) J_\beta \psi\ras + \las J_\beta \psi,I_\beta J_\beta \psi\ras.
\end{equation}
\subsubsection{Non-neutral decompositions}
If $\beta$ is a non-neutral cluster decomposition, i.e. $\beta \in \Dn\setminus\Da$, on the support of the function $J_\beta\psi$, the distances between a particle in subsystem $1$ to a particle in subsystem $2$ grows in $\D$. The same is true for an electron in $\cc_0$ and both of the nuclei.\par
Hence, since the interaction is small when the clusters are far apart, there exists $\veps_{\D} >0$ with $\veps_{\D}\xrightarrow[\D\rightarrow \infty]{} 0$ such that
\begin{equation}\label{eq_below_104}
\las J_\beta \psi,I_\beta J_\beta \psi\ras \geq -\veps_{\D} \color{black}\Vert J_\beta \psi\Vert^2.
\end{equation} 
As the next step, we find that for $\beta \in \Dn\setminus\Da$, for some $\delta>0$ independent of $\psi$ and $\D$  we have
\begin{equation}\label{eq_below_103}
\las J_\beta \psi,(H_\beta-\mu^\alpha)J_\beta \psi\ras \geq \delta \Vert J_\beta \psi\Vert^2.
\end{equation}
For $\cc_0(\beta)=\emptyset$ the inequality \eqref{eq_below_103} follows from Condition $1$) in Theorem~\ref{thm_pre_02}. 
If $\cc_0(\beta)\neq\emptyset$, the inequality follows from the fact that for all irreducible representations of $S_N$, Hamiltonians of neutral atoms have discrete eigenvalues at the bottom of their spectrum, see Theorem~\ref{thm_app_01}. Removing an electron will increase the energy of the system, according to Theorem~\ref{thm_hvz_01}.
Combining \eqref{eq_below_104} and \eqref{eq_below_103} yields
\begin{equation}\label{eq_below_07}
\begin{split}
L[J_\beta \psi]
	&\geq (\delta-\veps_{\D} )\Vert J_\beta \psi\Vert^2 
		- \frac{C}{R^2}\left(
			\las J_\beta\psi, \Theta_R J_\beta\psi\ras 
			+e^{-R/4}\Vert J_\beta\psi \Vert^2 \right)
%
	\geq 0 
\end{split}
\end{equation}
choosing $R=\D^{3/4}$ and $\D$ big enough.
We can now begin to estimate the functionals $L[J_\beta\psi]$ for $\beta\in \Da$.
%
%
\subsubsection{Neutral decompositions}\label{sec_below_02}
 Let $\bmin\in \Da$, which implies $\sharp \cc_1=Z_1$ and $\sharp\cc_2=Z_2$. For this $\bmin$ and $\varphi,\psi\in \Hfa$ recall that the weighted bilinear form was defined as
\begin{equation}
\las \varphi,\psi\ras _1 := \las  \varphi,(\tilde H_\bmin-\mu^\alpha)\psi\ras 
\end{equation}
and the corresponding semi--norm
\begin{equation}\label{eq_below_01}
\Vert \psi \Vert _1^2:=\las \psi, \psi\ras _1
\end{equation}
where $\tilde H_\beta$ was defined in \eqref{def_hbeta}. 
Let $\Wt\subset\Hfa$ be the ground state space of $\tilde H_\bmin^\alpha$ corresponding to $\mu^\alpha$. Note that $\Wt\neq \emptyset$ by Theorem~\ref{thm_app_01}. We project the function $\U\jbs\psi$ onto the space $\Wt$ with respect to the standard $\ltn$-inner product where $\U$ was defined in \eqref{def_ubeta}. For some $\gamA\in \C$ with $|\gamma_1|\leq 1$ and $\phi\in \Wt$ with $\Vert \phi\Vert=1$ we get 
\begin{equation}\label{eq_below_49}
\U\jbs\psi =\gamA \phi +G.
\end{equation}
As the next step we project $G$ in the sense of the bilinear form $\las \cdot , \cdot \ras _1$  consecutively onto the functions
\begin{equation}\label{eq_below_58}
\phiB:= \resbt  f_2\phi
\end{equation}
and
\begin{equation}\label{eq_below_59}
\phiC:=\resbt  f_3 \phi\, ,
\end{equation}
where $f_2$ is defined in \eqref{eq:def-fA} and $f_3$ in \eqref{eq:def-fB}. 
We will prove in Lemma~\ref{lem_app_03A} that the function $\phi$, because of its rotational symmetry, is orthogonal to $ f_2 \phi$ and $ f_3 \phi$ with respect to the standard $L^2$-inner product, which ensures that the functions $\phi_2$ and $\phi_3$ are well defined. 
Furthermore we show in Corollary~\ref{cor_app_02} that $\phi$, $\phi_2$, $\phi_3$ are mututally orthogonal with respect to the bilinear form $\las \cdot,\cdot\ras_1$.
After this decomposition we have
\begin{equation}\label{eq_below_50}
\jbs\psi=\U^*\big(\gamA \phiA+\D^{-3}\gamB\phiB + \D^{-4}\gamC \phiC +g\big),
\end{equation}
where
\begin{equation}
\las \phi,g\ras=\las g,\phiB\ras_1=\las g,\phiC\ras_1=0.
\end{equation}
By definition of the functions $\phi,\phiB, \phiC$, and $g$ and their orthogonality with respect to $\las \cdot,\cdot\ras_1$ we have
\begin{equation}\label{eq_below_51}
\begin{split}
\las \jbs \psi ,( H_\bmin-\mu^\alpha)\jbs \psi\ras&= \las \U J_\beta \psi,(\U H_\beta \U^*-\mu^\alpha) \U J_\beta \psi\ras\\
&=\las \U J_\beta \psi,(\tilde H_\beta -\mu^\alpha) \U J_\beta \psi\ras\\ &=\frac{|\gamB|^2}{\D^6}\Vert \phiB\Vert_1^2 + \frac{|\gamC|^2}{\D^8}\Vert\phiC\Vert_1^2 +\Vert g\Vert_1^2.
\end{split}
\end{equation}
Now we turn to the term with the intercluster interaction $I_\beta$. In Lemma~\ref{lem_app_04} we prove that for any $\delta>0$ there exist $C>0$  such that for $\D$ sufficiently big
\begin{equation}\label{eq_below_15}
\begin{split}
\las \jbs\psi,I_\beta \jbs\psi\ras&\geq  2|D|^{-6}\re \gamA\overline{\gamB}\Vert \phiB\Vert_1^2+2|D|^{-8}\re \gamA \overline{\gamC}\Vert \phiC\Vert^2_1  \\
&\quad -C\frac{|\gamA|^2+|\gamB|^2+|\gamC|^2}{|D|^{10}} -\delta\Vert g \Vert^2.
\end{split}
\end{equation}
Summing \eqref{eq_below_51} and \eqref{eq_below_15} we arrive at
\begin{equation}\label{eq_below_57}
\begin{split}
\las \jbs\psi , (H-\mu^\alpha)\jbs\psi\ras&\geq  \frac{|\gamB|^2+2\re \gamA\overline{\gamB}}{\D^6}\Vert \phiB\Vert_1^2 +\frac{|\gamC|^2+2\re \gamA\overline{\gamC}}{\D^8}\Vert \phiC\Vert^2_1 \\
&\quad  -C\frac{|\gamA|^2+|\gamB|^2+|\gamC|^2}{\D^{10}}-\delta\Vert g\Vert^2+\Vert g\Vert _1^2.
\end{split}
\end{equation}
Let $\kappa$ be the distance between ground state energy and the next higher eigenvalue of $\tilde H_\bmin$. By Theorem~\ref{thm_app_01} we have $\kappa>0$ and, since $g$ is orthogonal to $\Wt$, also $
\Vert g\Vert_1^2=\las g, (\tilde H_\bmin-\mu^\alpha) g\ras \geq \kappa \Vert g\Vert^2$. Taking $\delta <\frac{\kappa}{2}$ we get
\begin{equation}\label{eq_below_54}
\Vert g\Vert_1^2-\delta \Vert g \Vert ^2\geq \frac{\kappa}{2}\Vert g\Vert ^2.
\end{equation}
Note that
\begin{equation}
|\gamB|^2+ 2\re \gamA \overline{\gamB}=|\gamA+\gamB|^2-|\gamA|^2
\end{equation}
and
\begin{equation}
|\gamC|^2+ 2\re \gamA \overline{\gamC}=|\gamA+\gamC|^2-|\gamA|^2.
\end{equation}
Summing the bound for $\las J_\bmin \psi,( H-\mu^\alpha)J_\bmin \psi\ras$ yields
\begin{equation}\label{eq_below_29}
\begin{split}
\las J_\bmin \psi, (H-\mu^\alpha) J_\bmin\psi\ras &\geq \frac{-|\gamA|^2+|\gamA+\gamB|^2}{|D|^6}\Vert \phiB \Vert_1^2+\frac{-|\gamA|^2+|\gamA+\gamC|^2}{|D|^8}\Vert \phiC \Vert_1^2\\
& \qquad  -\frac{C(|\gamA|^2+|\gamB|^2+|\gamC|^2)}{|D|^{10}}+\frac{\kappa}{2}\Vert g\Vert ^2.
\end{split}
\end{equation}
We now minimize the expression on the r.h.s. of \eqref{eq_below_29} with respect to $\gamma_2$ and $\gamma_3$. We aim to show that for $\D$ large enough, minimization in $\gamB$ yields 
\begin{equation}
\frac{|\gamA+\gamB|^2}{\D^6}\Vert \phi_2\Vert^2_1 - C \frac{|\gamB|^2}{\D^{10}}\geq - \frac{4 C|\gamA|^2}{\D^{10}}.
\end{equation}
Assume that $|\gamB|> 2|\gamA|$, then
\begin{equation}\label{eq_below_09}
\frac{|\gamA+\gamB|^2}{\D^6}\Vert\phiB\Vert^2_1 -\frac{C|\gamB|^2}{\D^{10}}> \frac{\frac14 |\gamB|^2}{\D^6}\Vert \phiB\Vert_1^2 - \frac{C|\gamB|^2}{\D^{10}}
\end{equation}
which is positive for large $\D$.\par 
Whereas for $|\gamB|\leq 2|\gamA|$ we have
\begin{equation}\label{eq_below_11}
\frac{|\gamA+\gamB|^2}{\D^6}\Vert\phiB\Vert^2_1 -\frac{C|\gamB|^2}{\D^{10}} \geq -\frac{4C |\gamA|^2}{\D^{10}}
\end{equation}
which is obviously smaller than the expression on the r.h.s. of \eqref{eq_below_09}. Minimizing similarly in $\gamC$, for $\D$ large enough we get 
\begin{equation}\label{eq_below_12}
\frac{|\gamA+\gamC|^2}{\D^8}\Vert \phiC\Vert^2_1 - \frac{C|\gamC|^2}{\D^{10}}\geq - \frac{4C|\gamA|^2}{\D^{10}}.
\end{equation}
Plugging \eqref{eq_below_11} and \eqref{eq_below_12} into \eqref{eq_below_29}, taking into account that $|\gamA|^2\leq \Vert\jbs \psi\Vert^2$ we arrive at
\begin{equation}\label{eq_below_106}
\las J_\bmin \psi,(H-\mu^\alpha)J_\bmin \psi\ras \geq \left(-\frac{\Vert \phiB\Vert^2_1}{\D^6}-\frac{\Vert \phi_3\Vert^2_1}{\D^8}-C\D^{-10}\right) \Vert \jbs \psi\Vert^2+\frac{\kappa}{2}\Vert g\Vert^2.
\end{equation}
Now we turn to the estimate of the term coming from the localization error, that is,
\begin{equation}\label{eq-loc-error-again}
	\frac{C}{R^2}\left(
			\las J_\beta\psi, \Theta_R J_\beta\psi\ras 
			+e^{-R/4}\Vert J_\beta\psi \Vert^2 \right)
%
\end{equation}
The second term of this expression is exponentially small. For the first term we have
\begin{equation}
\las J_\beta\psi, \Theta_R J_\beta\psi\ras \leq 2\Big \Vert \Theta_R^{1/2}\U^* \Big(\gamma_1\phi +\frac{\gamma_2}{\D^3}\phi_2+\frac{\gamma_3}{\D^4}\phi_3\Big)\Big\Vert^2+2\Vert \Theta_R^{1/2}\U^* g\Vert^2.
\end{equation}
The operator $\Theta_R$ counts the expected number of particles in an annular region close to either of the two nuclei. 
According to Corollary~\ref{cor_exp_01}, $\phiA,\phiB$, and $\phiC$ are exponentially decaying, for normalized $\psi$ we get 
\begin{equation} 
	\las J_\beta\psi, \Theta_R J_\beta\psi\ras 	
		\le  2\Vert \Theta_R \, \U^* g\Vert^2 +\mathcal{O}(e^{-R/2}). 
\end{equation} 
Thus 
\begin{equation}
\begin{split}
  \frac{1}{R^2}\left(
			\las J_\beta\psi, \Theta_R J_\beta\psi\ras 
			+e^{-R/4}\Vert J_\beta\psi \Vert^2 \right) 
  	&\lesssim  \frac{1}{R^2}\left( 
  				\Vert \Theta_R \, \U^* g\Vert^2
				+\mathcal{O}(e^{-R/4})\right) \\
	&\le \frac{1}{R^2}\left( 
  				N\Vert g\Vert^2
				+\mathcal{O}(e^{-R/4})\right) \, . 
\end{split}
\end{equation}
Substituting this into \eqref{eq_below_02}, together with the estimate for $\las J_\bmin \psi,(H-\mu^\alpha)J_\bmin \psi\ras$ in \eqref{eq_below_106} we get
\begin{equation}
L[J_\bmin\psi]\geq \left( -\frac{a_1}{\D^6}-\frac{a_2}{\D^8}-C\D^{-10}\right) \Vert J_\bmin \psi\Vert^2 +\left(-CN R^{-2}+\frac{\kappa}{2} \right)\Vert g \Vert^2 -\mathcal{O}(e^{-R/4}).
\end{equation}
Again choosing $R=\D^{3/4}$ and $\D$ sufficiently large, the second to last term is positive and we arrive at
\begin{equation}\label{eq_below_06}
L[J_\bmin \psi]\geq \left( -\frac{a_1}{\D^6}-\frac{a_2}{\D^8}-C\D^{-10}\right) \Vert J_\bmin \psi\Vert^2 -\mathcal{O}(e^{-\D^{3/2}/4}).
\end{equation}
This inequality is true for any $\beta \in \Da$. Recall from \eqref{eq_below_101} the bound
\begin{equation}
\las \psi,(H-\mu^\alpha)\psi\ras \geq \sum_{\beta \in \Dn}L[J_\beta \psi].
\end{equation}
By \eqref{eq_below_07} for all $\beta \in \Dn\setminus\Da$
\begin{equation}
L[J_\beta \psi]\geq 0.
\end{equation}
Since the number of cluster decompositions $\beta\in \Da$ is finite and $\sum_{\beta\in \Da} \Vert J_\beta \psi\Vert^2 \leq \Vert \psi\Vert^2 = 1$, gathering \eqref{eq_below_101}, \eqref{eq_below_07}, and  \eqref{eq_below_06} we obtain
\begin{equation}
\las \psi ,(H-\mu^\alpha)\psi\ras \geq -\frac{a_1}{\D^6}-\frac{a_2}{\D^8}-\frac{C}{\D^{10}}.
\end{equation}
for some constant $C<\infty$ and all large enough 
$\D$. 
%
%
\subsection{Upper bound}\label{sec_abo_01}
We aim to construct a trial function $\psi_0\in \Hfa$ with $\Vert \psi_0\Vert=1$ such that
\begin{equation}
\las \psi_0,(H-\mu^\alpha)\psi_0\ras \leq  -\frac{a_1}{\D^6}-\frac{a_2}{\D^8}-\frac{C}{\D^{10}}
\end{equation}
where $a_1$ and $a_2$ are defined in \eqref{eq_int_06} and \eqref{eq_int_07}. 

Now we \emph{fix some neutral} cluster decomposition $\bmin \in \Da$ and denote by $\Wt$ the ground state space of $\tilde H_\bmin^\alpha$. Using the permutation symmetry of $\tilde H_\bmin$ we have
\begin{equation}
\Wt=\bigoplus_{\ia(\bmin)\prec\alpha}P^{\ia(\bmin)} \Wt,
\end{equation}
thus there is at least one $\ias(\bmin)\prec \alpha$ such that there exists $\phi \in P^{\ias(\bmin)} \Wt$ that realises the maxima $a_1$ and $a_2$ with $\Vert \phi\Vert=1$. For such a $\phi\in P^{\ias(\bmin)} \Wt$ we set
\begin{equation}\label{def-tpsi_0}
\begin{split}
\tpsi_0:&=\U^* \Big(\phi - \frac{(\tilde H_\bmin-\mu^\alpha)^{-1}  f_2\phi}{\D^3}-\frac{(\tilde H_\bmin-\mu^\alpha)^{-1}  f_3\phi}{\D^4}\Big)\\
&=\U^* \big(\phiA-\frac{\phiB}{\D^3}-\frac{\phiC}{\D^4}\big),
\end{split}
\end{equation}
by definition of $\phi_2$ and $\phi_3$ in \eqref{eq_below_58} and \eqref{eq_below_59}.
With $P^\alpha$ being the projection onto $\Hfa$ and  the cutoff function $J_\bmin$ defined in \eqref{eq_loc_02}, we define the trial state as
\begin{equation}
\psi_0:=\frac{P^\alpha J_\bmin \tpsi_0}{\Vert P^\alpha J_\bmin\tpsi_0 \Vert}.
\end{equation}
\subsubsection{ }
As a first step, we will show
\begin{equation}\label{eq_abo_12}
\las \psi_0, (H-\mu^\alpha)\psi_0\ras = \frac{\las J_\bmin \tpsi_0,(H-\mu^\alpha) J_\bmin \tpsi_0\ras}{\Vert J_\bmin \tpsi_0\Vert ^2}.
\end{equation}
Let $\chi_{\pi^{-1}}^\alpha$ denote the character of the element $\pi^{-1}\in S_N$ in the representation $\alpha$. For $\T_\pi$ defined in \eqref{eq_int_11}, and $|\alpha|$ denoting the dimension of the irreducible representation $\alpha$, by \cite[p. 113]{hamermesh1962} the projection operator onto $\Hfa$ is given by
\begin{equation}\label{eq_abo_08}
P^\alpha= \frac{|\alpha|}{N!}\sum_{\pi\in S_N}\chi^\alpha_{\pi^{-1}}\T_{\pi}.
\end{equation}
Following \cite{SigalovZhislin} we write the r.h.s.\ of \eqref{eq_abo_08} as two sums. In the first sum we collect the permutations which only permute particles within the subsystems of $\bmin$. The second sum contains permutation which change at least one pair of particles belonging to different subsystems of $\bmin$. We get
\begin{equation}\label{eq_abo_09}
P^\alpha=\frac{|\alpha|}{N!}\sum_{\pi\in S_\bmin}\chi^\alpha_{\pi^{-1}}\T_{\pi}+\frac{|\alpha|}{N!}\sum_{\pi\in S_N\setminus S_\bmin}\chi^\alpha_{\pi^{-1}}\T_{\pi}.
\end{equation}
For $\ia(\bmin)\prec \alpha$ we set
\begin{equation}\label{eq_abo_14}
\theta_{\ia(\bmin)}:=\frac{Z_1! Z_2! }{N!}\frac{ |\alpha|}{|\ia(\beta)|}.
\end{equation}
Note that for $\pi \in S_\beta$
\begin{equation}
\chi_{\pi}^\alpha= \sum_{\ia (\beta)\prec \alpha}\chi_{\pi}^{\ia (\beta)}
\end{equation}
and
\begin{equation}
P^{\ia (\beta)} = \frac{|\ia (\beta)|}{Z_1!Z_2!}\sum_{\pi \in S_\beta} \chi_{\pi^{-1}}^{\ia (\beta)} \T_\pi.
\end{equation}
Let us define
\begin{equation}
P_1^\alpha:=\sum_{{\ia(\bmin)}\prec\alpha}\theta_{\ia(\bmin)} P^{\ia(\bmin)}
\end{equation}
and
\begin{equation}
P_2^\alpha:=\frac{|\alpha|}{N!}\sum_{\pi\in S_N\setminus S_\bmin}\chi^\alpha_{\pi^{-1}}\T_{\pi}.
\end{equation}
Then following \cite{SigalovZhislin} we rewrite \eqref{eq_abo_09} as
\begin{equation}\label{eq_abo_01}
P^\alpha=P_1^\alpha+P_2^\alpha.
\end{equation}
To prove \eqref{eq_abo_12} we first compute 
$\Vert P^\alpha J_\bmin \tpsi_0 \Vert^2$. Since $(P^\alpha)^2=P^\alpha$, by \eqref{eq_abo_01} we have
\begin{equation}\label{eq_abo_10}
\begin{split}
&\Vert P^\alpha J_\bmin \tpsi_0 \Vert^2=\big\las (P^\alpha_1+P^\alpha_2) J_\bmin \tpsi_0, J_\bmin \tpsi_0\big\ras\\
&=\sum_{{\ia(\bmin)}\prec\alpha}\theta_{\ia(\bmin)}\las  P^{\ia(\bmin)} J_\bmin \tpsi_0, J_\bmin \tpsi_0\ras+ \frac{|\alpha|}{N!}\sum_{\pi\in S_N\setminus S_\bmin}\hspace{-3mm}\chi^\alpha_{\pi^{-1}}\las \T_{\pi}J_\bmin \tpsi_0, J_\bmin \tpsi_0\ras.
\end{split}
\end{equation}
The function $J_\bmin$ is invariant under permutations in $S_\bmin$, thus $J_\bmin \tpsi_0$ belongs to the same symmetry type $\ias(\bmin)$ as the function $\phi$. The projectors $P^{\ia (\bmin)}$ are mutually orthogonal for different $\ia(\bmin)$. Hence for the first term on the r.h.s. of \eqref{eq_abo_10} we get
\begin{equation}
\sum_{{\ia(\bmin)}\prec\alpha}\theta_{\ia(\bmin)}\las  P^{\ia(\bmin)} J_\bmin \tpsi_0, J_\bmin \tpsi_0\ras=\theta_{\ias(\bmin)}\Vert J_\bmin \tpsi_0\Vert^2.
\end{equation}
The last sum on the r.h.s. of \eqref{eq_abo_10} is zero, as the functions $ \T_\pi J_\beta \tpsi_0$ and $J_\beta \tpsi_0$ are supported on different domains (for details see Appendix~\ref{sec_sym}). Thus
\begin{equation}\label{eq_abo_11}
\Vert P^\alpha J_\bmin \tpsi_0 \Vert^2 =\theta_{\ias(\bmin)}\Vert J_\bmin \tpsi_0\Vert^2.
\end{equation}
Note that \eqref{eq_abo_14} implies $\theta_{\ias(\bmin)}\neq 0$, which yields, in particular, $P^\alpha J_\beta \tpsi_0\neq 0$.\par 
As the next step we would like to show
\begin{equation}
\las P^\alpha J_\beta \tpsi_0,(H-\mu^\alpha) J_\beta \tpsi_0\ras = \theta_{\ias (\beta)} \las J_\beta \tpsi_0,(H-\mu^\alpha) J_\beta \tpsi_0\ras .
\end{equation}
To this end we split $P^\alpha$ as in \eqref{eq_abo_01} and get
\begin{equation}\label{eq_abo_100}
\begin{split}
&\las P^\alpha J_\bmin \tpsi_0,(H-\mu^\alpha)J_\bmin \tpsi_0 \ras \\
&=\las P_1^\alpha J_\bmin \tpsi_0 , (H-\mu^\alpha)J_\bmin \tpsi_0\ras +\las P_2^\alpha J_\bmin \tpsi_0, (H-\mu^\alpha) J_\bmin \tpsi_0\ras.
\end{split}
\end{equation}
Let us show that the second term on the r.h.s. of \eqref{eq_abo_100} is zero. Since for all $\pi\in S_\bmin$, $\T_\pi J_\bmin \tpsi_0$ and $I_\bmin J_\bmin \tpsi_0$ have disjoint support
\begin{equation}\label{eq_abo_15}
\las P_2^\alpha J_\bmin \tpsi_0, I_\bmin J_\bmin \tpsi_0\ras=0.
\end{equation}
Furthermore $H_\beta$ is the sum of two operators
\begin{equation}
H_\beta= \U^* \tilde H_{\cc_1}^{Z_1} \U +\U^* \tilde H_{\cc_2}^{Z_2} \U.
\end{equation}
The first operator acts only on particles in $\cc_1$ and the second operator acts only on particles in $\cc_2$. The localization function $J_\beta$ is supported in the region, where particles in $\cc_1$ are located near $X_1$ and particles in $\cc_2$ are near $X_2$ with distances to the corresponding nucleus $X_1$ and $X_2$ much smaller than $\D=|X_1-X_2|$. We can apply Lemma~\ref{lem_sym} to see
\begin{equation}\label{eq_abo_16}
\las P_2^\alpha J_\bmin \tpsi_0, \U^* \tilde  H_{\cc_1}^{Z_1}\U J_\bmin \tpsi_0\ras=0
\end{equation}
and
\begin{equation}\label{eq_abo_17}
\las P_2^\alpha J_\bmin \tpsi_0, \U^* \tilde H_{\cc_2}^{Z_2}\U J_\bmin \tpsi_0\ras=0
\end{equation}
since the respective functions have disjoint support (see Appendix~\ref{sec_sym}). Equalities \eqref{eq_abo_15}, \eqref{eq_abo_16} and \eqref{eq_abo_17} imply
\begin{equation}
\las P_2^\alpha J_\beta \tpsi_0, (H-\mu^\alpha) J_\beta \tpsi_0\ras = 0.
\end{equation}
Now we turn to the first term on the r.h.s. of \eqref{eq_abo_100}. The operators $( H_\bmin-\mu^\alpha)$ and $I_\bmin$ are invariant under permutations in $S_\bmin$, thus $(H-\mu^\alpha)J_\bmin \tpsi_0$ belongs to the representation $\ias (\bmin)$. By orthogonality of functions belonging to different irreducible representations, we get
\begin{equation}
\las P_1^\alpha J_\bmin \tpsi_0 , (H -\mu^\alpha)J_\bmin \tpsi_0\ras =\theta_{\ias(\bmin)} \las J_\bmin \tpsi_0 , (H -\mu^\alpha)J_\bmin \tpsi_0\ras.
\end{equation}
This proves \eqref{eq_abo_12}.
\subsubsection{} Our next goal is to estimate 
\begin{equation}
\las J_\bmin \tpsi_0 , (H-\mu^\alpha) J_\bmin \tpsi_0\ras.
\end{equation}
We substitute $H= H_\beta+I_\beta $ to get
\begin{equation}\label{eq_abo_06}
\las J_\bmin \tpsi_0 , (H-\mu^\alpha) J_\bmin \tpsi_0\ras=\las J_\beta\tpsi_0, ( H_\beta -\mu^\alpha)J_\beta \tpsi_0\ras + \las J_\beta \tpsi_0,I_\beta J_\beta \tpsi_0\ras.
\end{equation}
For the first term on the r.h.s. of \eqref{eq_abo_06} we write 
\begin{equation}\label{eq_abo_07}
\begin{split}
&\las J_\bmin \tpsi_0 , ( H_\beta-\mu^\alpha) J_\bmin \tpsi_0\ras\\
&\quad=  \las \tpsi_0,( H_\beta-\mu^\alpha)\tpsi_0\ras- \las \sqrt{1-J_\bmin^2}\tpsi_0, (H_\beta-\mu^\alpha) \sqrt{1-J_\bmin^2}\tpsi_0\ras  + \mathcal{LE}_\bmin[\tpsi_0],
\end{split}
\end{equation}
where $\mathcal{L}[\tpsi_0]$ is the localization error coming from the partition of unity with cutoff functions $J_\bmin$ and $\sqrt{1-J_\bmin^2}$. Similar to Theorem~\ref{thm_loc_01}, this can be estimated as
\begin{equation}
|\mathcal{L}_\bmin[\tpsi_0]|\leq \frac{C}{R^2}\left(\las\tpsi_0, \Theta_R \tpsi_0 \ras + e^{-R/4}\Vert \tpsi_0\Vert^2\right).
\end{equation}
By Proposition~\ref{prop_int_02} the function 
$\tpsi_0$ decays exponentially. Choosing 
$R=\D^{3/4}$ then implies 
\begin{equation}
|\mathcal{L}_\bmin[\tpsi_0]|=\mathcal{O} (e^{-\D^\frac{1}{2}}).
\end{equation}
for all large enough $\D$. 
The operator $(H_\beta-\mu^\alpha)$ is semi--bounded from below, thus for some constant $C>0$ we get
\begin{equation}
\las \sqrt{1-J_\bmin^2}\tpsi_0, (H_\beta-\mu^\alpha) \sqrt{1-J_\bmin^2}\tpsi_0\ras\geq -C\Vert \sqrt{1-J_\beta^2}\tpsi_0\Vert ^2  \geq -C e^{-\D^{\frac{1}{2}}}
\end{equation}
 taking into account exponential decay of $\tpsi_0$.
This together with \eqref{eq_abo_07} yields
\begin{equation}\label{eq_abo_05}
\las J_\bmin \tpsi_0 , ( H_\beta-\mu^\alpha) J_\bmin \tpsi_0\ras\leq  \las \tpsi_0,(H_\beta-\mu^\alpha) \tpsi_0\ras+Ce^{-\D^\frac{1}{2}}.
\end{equation}
Once again, by exponential decay of $\tpsi_0$
\begin{equation}\label{eq_abo_18}
\Vert J_\bmin \tpsi_0\Vert^2=\Vert \tpsi_0\Vert^2+\mathcal{O}(e^{-\D^\frac{1}{2}})
\end{equation}
and since $\phi$ is orthogonal to $\phi_2$ and 
$\phi_3$, we get from the definition 
\eqref{def-tpsi_0} 
\begin{equation}\label{eq_abo_03} 
\Vert \tpsi_0\Vert^2=1+\mathcal{O}(\D^{-6}).
\end{equation}
Combining \eqref{eq_abo_05}, \eqref{eq_abo_18} and \eqref{eq_abo_03} yields
\begin{equation}\label{eq_abo_13}
\begin{split}
\Vert J_\beta \tpsi_0\Vert^{-2}\las J_\beta \tpsi_0,( H_\beta-\mu^\alpha) J_ \beta\tpsi_0\ras&=  \las \tpsi_0,(H_\beta-\mu^\alpha) \tpsi_0\ras(1+\mathcal{O}(\D^{-6}) \\
&=(\Vert \phiB\Vert^2_1+\Vert \phiC\Vert^2_1) (1+\mathcal{O}(\D^{-6})).
\end{split}
\end{equation}
Applying \eqref{eq_abo_13} in \eqref{eq_abo_12} we get
\begin{equation}
\las \psi_0,(H-\mu^\alpha)\psi_0\ras = \big(\las J_\beta \tpsi_0,I_\beta J_\beta \tpsi_0\ras+\Vert \phiB\Vert^2_1+\Vert \phiC\Vert^2_1\big ) \big (1+\mathcal{O}(\D^{-6})\big ).
\end{equation}
Similar to the estimates done in Lemma~\ref{lem_app_04}, with simplifications coming from the fact that we have $\gamA=1$, $\gamB=\gamC=-1$ and $g=0$, we obtain
\begin{equation}
\las J_\beta \tpsi_0,I_\beta J_\beta \tpsi_0\ras=-2 \Vert \phiB\Vert^2_1-2 \Vert \phiC\Vert^2_1 +\mathcal O (\D^{-10})
\end{equation}
which completes the proof of Theorem~\ref{thm_pre_02}.
%
%
\section{Multi--atomic case }\label{sec_mat}
The proof of Theorem~\ref{thm_pre_03} is very similar to the proof of Theorem~\ref{thm_pre_02}.  
We start with the lower bound. Define cluster decompositions $\beta_M=(\cc_0,\cdots,\cc_M)$ into $M+1$ clusters, such that particles which are far from all nuclei belong to the subsystem $\cc_0$. As the next step we define the cutoff functions $J_{\beta_M}$ corresponding to the cluster decompositions $\beta_M$. The estimate of the localization error is not different from the diatomic case.\par 
Similar to the proof of Theorem~\ref{thm_pre_02} one can show that if $\beta_M$ is not a decomposition into $M$ neutral atoms, for $\psi\in \Hfa$ we have the inequality
$$
\las J_ {\beta_M} \psi,(H-\mu_M^{\alpha}) J_{\beta_M}\psi\ras >0.
$$
Now we turn to the estimate of the quadratic form 
$\las J_{\beta_M} \psi,(H-\mu^\alpha)J_{\beta_M} \psi\ras
$
for decompositions $\beta_M$ corresponding to $M$ neutral atoms (c.f. Section~\ref{sec_below_02}).\par 
We defined $\tilde H_{\beta_M},\tilde{\mathcal W}^\alpha_{\beta_M}$ and functions $f_2^{(k,l)}, f_3^{(k,l)}$ in equations \eqref{eq_mat_02}- \eqref{eq_int_16}. Let $\mathcal U _{\beta_M}$ be the shift operator defined analogous to $\U$ in \eqref{def_ubeta}. Similar to \eqref{eq_below_50} we write
\begin{equation}
J_{\beta_M} \psi =\mathcal U ^* _{\beta_M}\big(\gamA \phi+d^{-3}\gamB \phi_2 + d^{-4}\gamC  \phi_3 + g\big).
\end{equation}
where $\phi\in \mathcal W_{\beta_M}^\alpha$ and the functions $\phi_2,\phi_3$ are given by
\begin{equation}
\phi_2= \resm\sum_{k<l} |D_{k,l}|^{-3}f_2 ^{(k,l)} \phi
\end{equation}
and
\begin{equation}
\phi_3 = \resm \sum_{k<l} |D_{k,l}|^{-4}f_3 ^{(k,l)} \phi.
\end{equation}
Note that by the same reasons as in the diatomic case we have
\begin{equation}
\las \phi,\phiB\ras= \las \phi,\phi_2\ras _1=\las \phi,\phi_3\ras = \las \phi,\phi_3\ras _1=\las \phi_2,\phi_3\ras = \las \phi_2 ,\phi_3\ras _1=0.
\end{equation}
With the above definitions we get the same epression as \eqref{eq_below_51} for the expected value of $(\tilde H _{\beta_M} -\mu^\alpha_{\beta_M})$.\par 
We now estimate the expectation value of the interaction $I_{\beta_M}$ of particles belonging to dofferent clusters $\las J_{\beta_M} \psi,I_{\beta_M} J_{\beta_M} \psi\ras$. Our goal is to generalize the estimate \eqref{eq_below_15}, which is proven in Lemma~\ref{lem_app_04}, to the case of $M$ atoms.
Let $\chi_{\beta_M}$ be the characteristic function of the support of $J_{\beta_M}$ and let 
\begin{equation}
I_{\beta_M}^o:= I_{\beta_M} \chi_{\beta_M} .
\end{equation}
Note that
\begin{equation}\label{eq_mat_01}
\begin{split}
\las J_{\beta_M}\psi, &I_{\beta_M} J_{\beta_M}\psi\ras=\las J_{\beta_M}\psi, I_{\beta_M}^o J_{\beta_M}\psi\ras\\
 &=  |\gamA|^2\las \Um^*\phiA,I_{\beta_M}^o \Um^*\phiA\ras +\frac{2\re \gamA \overline{\gamB}}{d^3}\las \Um^*\phiB,I_{\beta_M}^o \Um^*\phiA\ras\\
 &\quad+\frac{2\re \gamA\overline{\gamC}}{d^4}\las \Um^*\phiC,I_{\beta_M}^o \Um^*\phiA\ras   +\frac{|\gamB|^2}{d^6}\las \Um^*\phiB,I_{\beta_M}^o \Um^*\phiB\ras  \\
&\quad+ \frac{2\re \gamB \overline{\gamC}}{d^7}\las \Um^*\phiC,I_{\beta_M}^o \Um^*\phiB\ras +\frac{|\gamC|^2}{d^8}\las \Um^*\phiC,I_{\beta_M}^o\Um^* \phiC\ras  \\
&\quad +2\re \gamA\las \Um^*g,I_{\beta_M}^o \Um^*\phiA\ras+\frac{2\re \gamB}{d^3}\las\Um^* g,I_{\beta_M}^o \Um^*\phiB\ras\\
&\quad+\frac{2 \re \gamC}{d^4} \las \Um^*g,I_{\beta_M}^o \Um^*\phiC\ras + \las \Um g,I_{\beta_M}^o \Um g\ras\\
&= B^M_1+B^M_2+B^M_3+\las \Um^*g, I_{\beta_M}^o\Um^* g \ras, 
\end{split}
\end{equation}
where $B^M_1$ contains the first three terms of the r.h.s. of \eqref{eq_mat_01}, $B^M_2$ the second triple and $B^M_3$ third triple on the r.h.s. of \eqref{eq_mat_01}. We define analogously to the diatomic case the functions $f_4^{(k,l)},f_5^{(k,l)}$, see Appendix~\ref{sec_app_01}. Let $x\in \R^{3N}$ and $|\cdot|$ denote the standard norm in this space. On the support of $J_{\beta_M}$ we have $|x|< C(D_0 d)^\frac34$ with $D_0=\min_{k,l} |D_{k,l}|$ and some constant $C$. We can expand $I_{\beta_M}^o$ as a Taylor series for large $d$ arriving at
\begin{equation}\label{eq_mmat_04}
\big| I_{\beta_M}^o- \sum_{k\neq l}\frac{\Um^* f_2^{(k,l)}}{2|D_{k,l}|^3d^3}-\sum_{k\neq l}\frac{\Um^* \fkl_3}{2|D_{k,l}|^4d^4}-\sum_{k\neq l}\frac{\Um^* \fkl_4}{2|D_{k,l}|^5d^5}-\sum_{k\neq l}\frac{\Um^* \fkl_5}{2|D_{k,l}|^6d^6}\big|\leq C\frac{|x|^6}{(D_0d)^7}.
\end{equation}
As the first step, we note that for $B_1^M$, similar to Proposition~\ref{prop_below_07} we have,
\begin{equation}\label{eq_mat_03}
B^M_1\geq \frac{2\re \gamA \overline{\gamB}}{d^6}\Vert \phi_2\Vert_1^2 + \frac{2\re \gamA \overline{\gamC}}{d^8}\Vert \phi_3\Vert ^2_1 - C\frac{|\gamA|^2+|\gamB|^2+|\gamC|^2}{d^{10}}.
\end{equation}
To prove \eqref{eq_mat_03} we substitute \eqref{eq_mmat_04} into the expression for $I_{\beta_M}^o$ in $B_1$ and follow the same steps as in the proof of Proposition~\ref{prop_below_07}, replacing orthogonality relations from Lemma~\ref{lem_app_03B} with the following proposition.
\begin{prop}
Let Condition~2') of Theorem~\ref{thm_pre_03} be fulfilled. Then for $n,m=2,3,4,5,\ n\neq m$ and all $k,l=1,\cdots,M$, $k\neq l$ we have
\begin{equation}
\las \phi_m,f_n^{(k,l)} \phi \ras =0
\end{equation}
where 
\begin{equation}
\phi_m=\resm \sum_{k \neq l} \frac {\fkl_m \phi}{2|D_{k,l}|^{m+1}}
\end{equation}
\end{prop}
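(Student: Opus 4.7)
The plan is to exploit the rotational symmetry that acts separately on each cluster. By Condition~2'), following the same argument as in the remarks after Theorem~\ref{thm_pre_02} applied to each of the $M$ clusters, the eigenfunction $\phi \in \tilde{\mathcal W}_{\beta_M}^\alpha$ realising the maxima $a_1^M$ and $a_2^M$ transforms trivially under the group $SO(3)_k$ of rotations acting only on the particles of cluster $\mathcal C_k$, for each $k = 1, \ldots, M$. First I would expand the definition of $\phi_m$ and write
\begin{equation*}
\langle \phi_m, f_n^{(k,l)} \phi \rangle = \sum_{k' \neq l'} \frac{1}{2|D_{k', l'}|^{m+1}} \left\langle (\tilde H_{\beta_M} - \mu_M^\alpha)^{-1} f_m^{(k', l')} \phi,\, f_n^{(k,l)} \phi \right\rangle,
\end{equation*}
so that it suffices to show each summand vanishes. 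Since $\tilde H_{\beta_M} = \sum_k \tilde H_{\mathcal C_k}^{Z_k}$ and every atomic Hamiltonian is rotationally invariant on its own cluster, each $SO(3)_k$ commutes with the resolvent $(\tilde H_{\beta_M} - \mu_M^\alpha)^{-1}$, and hence the resolvent preserves the $SO(3)_k$-representation content of any vector on which it acts.

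Next I would determine the $SO(3)_k$-content of $f_n^{(k,l)} \phi$. Starting from the two-centre multipole expansion of the intercluster Coulomb interaction between clusters $k$ and $l$ and using cluster neutrality, which cancels the monopole contributions of each cluster against the corresponding electron--nucleus terms, one finds that $f_n^{(k,l)}$ is a sum of products of harmonic (traceless) polynomials of degree $\ell_k \geq 1$ in $\{x_i\}_{i \in \mathcal C_k}$ with harmonic polynomials of degree $\ell_l = n - \ell_k \geq 1$ in $\{x_j\}_{j \in \mathcal C_l}$. The tracelessness in each cluster variable can be checked directly for $n = 2, 3$ from \eqref{eq_int_13}--\eqref{eq_int_14}, as is implicit in the diatomic computations of Appendix~\ref{sec_app_01}. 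Because $\phi$ is $SO(3)_k$-invariant for every $k$, the state $f_n^{(k,l)} \phi$ decomposes as a direct sum of vectors each of which carries a nontrivial $SO(3)_k$-irrep of type $\ell_k \in \{1, \ldots, n-1\}$ tensored with an $SO(3)_l$-irrep of type $\ell_l = n - \ell_k$, while being $SO(3)_{k''}$-invariant for $k'' \notin \{k, l\}$.

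I would then match angular momenta case by case. When some cluster $c$ belongs to exactly one of the pairs $\{k, l\}$ and $\{k', l'\}$, averaging the summand over the Haar measure of $SO(3)_c$ leaves it unchanged by unitarity and by the commutation of the rotations with the resolvent, while simultaneously projecting the factor containing $\ell_c \geq 1$ onto its trivial subspace, which is zero; this settles the cases $|\{k, l\} \cap \{k', l'\}| \leq 1$. In the remaining case $\{k, l\} = \{k', l'\}$, Schur-orthogonality applied separately to $SO(3)_k$ and $SO(3)_l$ forces the irrep pairs $(\ell_k, \ell_l)$ on the two sides to coincide, but the constraints $\ell_k + \ell_l = n$ and $\ell_k + \ell_l = m$ are incompatible when $m \neq n$, so the inner product vanishes. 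Summing over $(k', l')$ then gives the claim.

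The main anticipated obstacle is verifying the tracelessness of $f_n^{(k,l)}$ in the cluster-$k$ and cluster-$l$ variables for general $n$. The cleanest route is to bypass the explicit formulas and derive $f_n^{(k,l)}$ directly from the two-centre solid-harmonic expansion of $|x_i - x_j - d D_{k,l}|^{-1}$, observing that cluster neutrality identically annihilates the $\ell_k = 0$ and $\ell_l = 0$ contributions; alternatively, the identities needed for $n = 4, 5$ can be extracted from intermediate computations of Appendix~\ref{sec_app_01}.
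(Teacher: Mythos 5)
Your proof is correct, but it takes a genuinely different and finer route than the paper's. The paper uses only the diagonal (global) action of $SO(3)$ that rotates all electron coordinates simultaneously: the generating-function definition of $f_n^{(k,l)}$ as a sum of solid harmonics of total degree $n$ (Legendre polynomials $P_n$, see Lemma~\ref{lem_app_02}) shows that $f_n^{(k,l)}$ transforms purely within the $\ell=n$ irreducible representation of this single $SO(3)$, while $\phi$ lies in $\ell=0$ by Condition~2'). Since the resolvent commutes with the diagonal rotation group, $\phi_m$ carries $\ell=m$ and $f_n^{(k,l)}\phi$ carries $\ell=n$, so the inner product vanishes for $n\neq m$ in one step, with no case analysis and no tracelessness lemma. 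You instead work with the full product group $SO(3)_1\times\cdots\times SO(3)_M$, decompose $f_n^{(k,l)}$ into bidegree $(\ell_k,\ell_l)$ pieces with $\ell_k+\ell_l=n$, and then need the nontrivial extra observation -- correct, and proved by the neutrality cancellation between the $\mathcal{F}_n^{(k)}$, $\mathcal{F}_n^{(l)}$ and the $\ell_k=0$ or $\ell_l=0$ parts of $\mathcal{F}_n^{(3)}$ -- that the monopole components drop out, giving $\ell_k,\ell_l\geq 1$. Your approach is more informative (it yields the finer bidegree decomposition that is, for example, what is really needed in the estimate of $B_2^M$ around \eqref{eq_mat_05}), but for the present proposition it is overkill: the diagonal $SO(3)$ alone already separates $n\neq m$, and spares you the tracelessness argument for $n=4,5$ that you flag as the main obstacle.
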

\begin{proof}
By Condition~2'), the state $\phi$ belongs to the irreducible representation of the $SO(3)$ group corresponding to the degree $\ell=0$. The functions $f_n^{(k,l)}$ belong to the irreducible representation of the $SO(3)$ group corresponding to the degree $\ell=n$, see the proof of Lemma~\ref{lem_app_02}. Consequently, $\phi_m$ and $f_n^{(k,l)}\phi$ are orthogonal as two functions belonging to different irreducible representations of the $SO(3)$ group. 
\end{proof}
For $B_2^M$ we have
\begin{equation}\label{eq_mat_05}
\begin{split}
B_2^M&\geq \sum_{\substack{k\neq l\\l\neq n,n\neq k}}\frac{\las (\tilde H_{\beta_M}-\mu_M^\alpha)^{-1}f_2^{(k,l)}\phi,f_2^{(l,n)} (\tilde H_{\beta_M}-\mu_M^\alpha)^{-1}f_2^{(n,k)}\phi\ras}{8|D_{k,l}|^{3}|D_{l,n}|^{3}|D_{n,k}|^{3}}\\
&\quad -C\frac{|\gamA|^2+|\gamB|^2}{(D_0d)^{10}}.
 \end{split}
\end{equation}
To prove \eqref{eq_mat_05} we proceed similar to the proof of Proposition~\ref{prop_below_08} except the remark after \eqref{eq_app_101}, which says that for $M=2$ we have $\las \phiB,f_2 \phiB\ras=0$. For $M\geq 3$ the argument of Lemma~\ref{lem_app_02} yields
\begin{equation}
\las (\tilde H_{\beta_M}-\mu_M^\alpha)^{-1}f_2^{(m,m')}\phi,f_2^{(k,k')} (\tilde H_{\beta_M}-\mu_M^\alpha)^{-1}f_2^{(l,l')}\phi\ras=0
\end{equation}
only if at least one of the indices $m,m',k,k',l,l'$ appears an even number of times. Consequently the terms with each of the indices $m,m',k,k',l,l'$ coming twice contribute to the estimate of $B_2^M$. \par
The bound for $B_3^M$ is not different from the one given in Proposition~\ref{prop_below_09} for $M=2$.\par 
To get the upper bound, analogous to the diatomic case let $\alpha^*_{\beta_M}\prec\alpha$ such that there is a function $\phi\in P^{\alpha^*_{\beta_M}}\tilde{\mathcal W}^\alpha_{\beta_M} $ with $\Vert\phi\Vert =1$ that realises the maxima $a_1^M$ and $a_2^M$. We set 
\begin{equation}
\tpsi_0:= \Um^*\big(\phi-\resm\sum_{k<l} \frac{\fkl_2}{|D_{k,l}|^3} \phi - \resm \sum_{k<l} \frac{\fkl_3}{|D_{k,l}|^4}\phi\big)
\end{equation}
and take as a trial function
\begin{equation}
\psi_0:= \frac{P^\alpha J_{\beta_M}\tpsi_0}{\Vert P^\alpha J_{\beta_M}\tpsi_0\Vert},
\end{equation}
and follow the same steps as in the proof of Theorem~\ref{thm_pre_02}.
%
%
%
%
\appendix
\section{The HVZ theorem}\label{sec_hvz}
In Appendix \ref{sec_hvz} and \ref{sec_exist} we prove two fundamental facts regarding the spectra of a pseudo--relativistic Hamiltonian of an atom or positive ion, which are of crucial importance for Theorems~\ref{thm_pre_02} and \ref{thm_pre_03}.
\par 
In Appendix \ref{sec_hvz} we prove a HVZ-type theorem, which gives the location of the essential spectrum for an arbitrary type of permutational symmetry. In Appendix \ref{sec_exist} we prove that Hamiltonians of pseudo--relativistic atoms and positive ions for any type of permutational symmetry have discrete eigenvalues at the bottom of the spectrum. Both results were announced earlier without proof by G.~Zhislin in \cite{zhislin2006}. 
For the convenience of the reader, we give complete proofs in these appendices. In the nonrelativistic case both results are well-known. The first one, which is called HVZ theorem (see \cite{reed1978methods}), was first proven without symmetry considerations in 1960 by G.~Zhislin \cite{Zhi60}, and later generalized by Sigalov and Zhislin to the case of subspaces with fixed permutational symmetry \cite{SigalovZhislin}. The second one, which is known as Zhislin's theorem was proven in the same publications \cite{Zhi60,SigalovZhislin}. Nice discussions of Zhislin's method 
 are given in \cite{joeweid1973}, including the extension to particle  
 symmetries, and Chapter XIII.5 of  \cite{reed1978methods}, which discusses also the proofs of van Winter 
and Hunziker and where one can find further references for extensions of the methods, including symmetry considerations,  in the notes to Chapter XIII.5. 

For multiparticle Schrödinger operators with pseudo--relativistic kinetic energy the HVZ-type theorem  was proven earlier in \cite{Lewis1997}, where systems with finite particle masses and fixed total momentum were considered. The result needed for Theorems~\ref{thm_pre_02} and \ref{thm_pre_03} is different from \cite{Lewis1997}, because on one hand we have a particle with infinite mass, the nuclei, which makes the situation easier. On the other hand we need to include the permutational symmetry.
\par
We follow the ideas in the work by Sigalov 
and Zhislin \cite{zhis67}, with necessary modifications related to the fact that the 
pseudo--relativistic kinetic energy operator is non-local, which also 
requires a different estimate of the localization error. Not only for 
convenience of the reader but also because 
some of the necessary modification are not at all obvious, we give complete proofs. 
\par 
For any $k\in \N$ and $Ze^2 < \frac{2}{\pi}$ we set
\begin{equation}
H_k^Z:= \sum_{i=1}^k T_i - \sum_{i=1}^k\frac{e^2Z}{|x_i|}+\sum_{1\leq i<j\leq k}\frac{e^2}{|x_i-x_j|}
\end{equation}
acting on $L^2(\R^{3k})$, where $T_i$ denotes the pseudo--relativistic kinetic energy operator for the $i$-th electron. Let $\alpha_k$ be an irreducible representation of the group of permutations of $k$ electrons $S_k$.
We set
\begin{equation}\label{eq_hvz_13}
\mu^{\alpha_k}:=\inf \sigma(H_k^ZP^{\alpha_k}).
\end{equation}
Denote by $\alpha_{k-1}'\prec\alpha_k$ an irreducible representation of $S_{k-1}$ induced by $\alpha_k$. We define
 \begin{equation}
 \mu^{\alpha_k}_{k-1}:=\min _{\alpha'_{k-1}\prec\alpha_k} \inf  \sigma (H^Z_{k-1}P^{\alpha'_{k-1}}).
\end{equation}
\begin{theorem}\label{thm_hvz_01}
For subcritical nucleus charge $Ze^2<\frac2\pi$ and for any irreducible representation $\alpha_k$ of $S_k$,
\begin{equation}\nonumber
\sigma_{ess}(H_k^ZP^{\alpha_k})=[\mu^{\alpha_k}_{k-1},+\infty).
\end{equation}
\end{theorem}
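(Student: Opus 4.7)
My plan is to establish the two inclusions of the HVZ identity separately, following the strategy of Sigalov--Zhislin \cite{SigalovZhislin} but adapted to the non-local pseudo-relativistic setting via the localization error bound of Section~\ref{sec_loc}. Throughout, the key additional ingredient beyond the standard (symmetry-free) HVZ proof is the careful tracking of how induced irreducible representations $\alpha'_{k-1}\prec\alpha_k$ arise when one electron is split off to infinity.

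For the inclusion $\sigma_{ess}(H_k^Z P^{\alpha_k})\subseteq [\mu^{\alpha_k}_{k-1},+\infty)$, I would establish a Persson-type characterization
\begin{equation}\nonumber
\inf\sigma_{ess}(H_k^Z P^{\alpha_k})\ \geq\ \lim_{R\to\infty}\ \inf_{\substack{\psi\in P^{\alpha_k}H^{1/2}(\R^{3k})\\ \supp(\psi)\cap B_R(0)=\emptyset,\ \|\psi\|=1}}\las\psi,H_k^Z\psi\ras,
\end{equation}
and then bound the right-hand side below by $\mu^{\alpha_k}_{k-1}$. To do so I introduce a smooth partition of unity $\{J_\beta\}$ indexed by cluster decompositions $\beta=(\cc_1,\{i\})$ with $\sharp\cc_1=k-1$ together with one cutoff localizing all $k$ electrons in a bounded region. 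On the support of $J_\beta$ the $i$-th electron is far from the origin and from every other electron, so the Coulomb couplings involving particle $i$ are $o(1)$ as $R\to\infty$; dropping the non-negative $T_i$ and using that $J_\beta\psi$ restricts (after grouping symmetry-equivalent cluster decompositions as in \cite{SigalovZhislin}) to a superposition of states in the induced subspaces $P^{\alpha'_{k-1}}L^2(\R^{3(k-1)})$ with $\alpha'_{k-1}\prec\alpha_k$, the quadratic form of $H_{k-1}^Z$ on these pieces is bounded below by $\mu^{\alpha'_{k-1}}\geq\mu^{\alpha_k}_{k-1}$. The localization error produced by this partition is controlled by the single-atom analogue of Theorem~\ref{thm_loc_01}, which crucially shows that this error is small on sequences whose support recedes to infinity.

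For the reverse inclusion $[\mu^{\alpha_k}_{k-1},+\infty)\subseteq\sigma_{ess}(H_k^Z P^{\alpha_k})$, I would construct a singular Weyl sequence for each $\lambda\geq\mu^{\alpha_k}_{k-1}$. Choose $\alpha'_{k-1}\prec\alpha_k$ realizing the minimum in the definition of $\mu^{\alpha_k}_{k-1}$, and let $\varphi_n\in P^{\alpha'_{k-1}}L^2(\R^{3(k-1)})$ be a normalized approximate ground state of $H_{k-1}^Z P^{\alpha'_{k-1}}$, i.e.\ $\las\varphi_n,H_{k-1}^Z\varphi_n\ras\to\mu^{\alpha'_{k-1}}$. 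Independently, choose normalized $\chi_n\in C_c^\infty(\R^3)$ whose supports recede to infinity along a fixed direction and whose Fourier content is tuned so that $\las\chi_n,T\chi_n\ras\to\lambda-\mu^{\alpha'_{k-1}}\geq0$ (any such value in $[0,\infty)$ is attainable in the spectrum of the pseudo-relativistic free operator). Set $\Psi_n:=c_n P^{\alpha_k}(\varphi_n\otimes\chi_n)$ with a normalizing constant $c_n$; the fact that $\alpha'_{k-1}\prec\alpha_k$ guarantees by Frobenius reciprocity that $P^{\alpha_k}(\varphi_n\otimes\chi_n)\neq 0$ and that $c_n$ stays bounded in $n$. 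A direct computation shows that $(H_k^Z-\lambda)\Psi_n\to 0$ in $L^2$, because the Coulomb potentials between the far-away electron and the rest vanish on the support of $\chi_n$, while the cross-terms arising from the symmetrization $P^{\alpha_k}$ (permutations mixing the $k$-th coordinate with a variable on which $\varphi_n$ sits) are shown to vanish using that the supports of the relevant factors become disjoint in the limit.

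The main obstacle is the non-locality of $T_i=\sqrt{p_i^2+1}-1$. In the Persson direction this makes the usual IMS localization formula unavailable and forces the use of Theorem~\ref{thm_loc_01}, which is precisely the estimate that provides exponential decay of the localization error outside of the transition region of the cutoffs. In the Weyl-sequence direction the non-locality is problematic because, unlike for the Laplacian, the cross-terms of the form $\las\T_\pi(\varphi_n\otimes\chi_n),T_i(\varphi_n\otimes\chi_n)\ras$ arising from $P^{\alpha_k}$ do not vanish by pure disjointness of supports; they must be shown to tend to $0$ using the explicit convolution representation of $T_i$ via the modified Bessel kernel $K_2(|\cdot|)/|\cdot|^2$ and the exponential decay \eqref{eq_loc_08}, together with the fact that the relative distance between the supports grows to infinity.
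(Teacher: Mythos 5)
Your outline captures the right high-level strategy (one-electron Weyl sequences for the reverse inclusion, localization plus an induced-representation analysis for the forward one), but there are two concrete issues.

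First, and most seriously, your proposed partition of unity for the forward inclusion $\sigma_{ess}\subseteq[\mu^{\alpha_k}_{k-1},\infty)$ is not a partition of unity. You index only by decompositions of the form $\beta=(\cc_1,\{i\})$ with $\sharp\cc_1=k-1$, plus one cutoff with all electrons bounded. This collection does not cover the region of $\R^{3k}$ where two or more electrons simultaneously recede to infinity, and consequently the assertion ``on the support of $J_\beta$ the $i$-th electron is far from the origin and from every other electron'' fails there. The paper instead indexes the partition by \emph{all} subsets $\cc\subseteq\{1,\cdots,k\}$ (the set $\cc$ being the electrons confined near the nucleus), so that an arbitrary number of electrons may go to infinity. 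Controlling the resulting cluster terms $\las F_\cc\psi_l, H_\cc^Z F_\cc\psi_l\ras$ for $\sharp\cc<k-1$ then requires an \emph{induction on} $k$ (together with the existence result of Appendix~\ref{sec_exist}), which is absent from your proposal. Without the induction the lower bound $\min_{\alpha'_\cc\prec\alpha_k}\inf\sigma(H_\cc^ZP^{\alpha'_\cc})\geq\mu^{\alpha_k}_{k-1}$ that the argument leans on is unavailable.

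Second, in the Weyl-sequence direction you worry about cross-terms of the form $\las\T_\pi(\varphi_n\otimes\chi_n),T_i(\varphi_n\otimes\chi_n)\ras$ produced by expanding $P^{\alpha_k}$, and propose to control them via the Bessel kernel $K_2(|\cdot|)/|\cdot|^2$. That analysis is unnecessary. Since $P^{\alpha_k}$ commutes with $H_k^Z$ and satisfies $\|P^{\alpha_k}\|\leq 1$, one has
\begin{equation}
\nonumber
\|(H_k^Z-\lambda)P^{\alpha_k}(\varphi_n\otimes\chi_n)\|=\|P^{\alpha_k}(H_k^Z-\lambda)(\varphi_n\otimes\chi_n)\|\leq\|(H_k^Z-\lambda)(\varphi_n\otimes\chi_n)\|,
\end{equation}
so the nonlocal kinetic cross-terms never appear: one only needs to estimate $(H_k^Z-\lambda)$ applied to the \emph{tensor product}, not to its projection. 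Cross-terms do appear in $\|P^{\alpha_k}(\varphi_n\otimes\chi_n)\|^2$, but there no operator acts between the factors, and they vanish by pure disjointness of supports (Lemma~\ref{lem_sym} of the paper), so no kernel estimate is required. Your proposed route could in principle be completed with the Bessel estimate \eqref{eq_loc_08}, but it would be considerably more laborious than the paper's argument, and you would also need to verify that the Frobenius-reciprocity normalization $c_n$ stays bounded (in the paper this is automatic since the cross-terms in $\|P^{\alpha_k}\varphi_m\|^2$ vanish exactly, giving the constant $\theta_{\alpha^*_{k-1}}>0$).
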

\begin{proof}The proof is split into two parts.
\setcounter{subsection}{1}
\subsubsection{"Easy part":} Let us first show that 
\begin{equation}\label{eq_hvz_11}
\sigma_{ess}(H_k^ZP^{\alpha_k})\supseteq[ \mu^{\alpha_k}_{k-1},+\infty).
\end{equation}
To do so, for arbitrary $\lambda\geq\mu^{\alpha_k}_{k-1}$, we give the construction of a Weyl sequence $(\psi_m)_{m\in \N}\subset P^{\alpha_k}L^2(\R^{3k})$ with $\Vert \psi_m\Vert=1$, $\psi_m\rightharpoonup 0$ and
$$
\lim_{m\rightarrow\infty} \Vert (H_k^Z-\lambda)\psi_m\Vert=0.
$$
Let $\alpha^*_{k-1}\prec \alpha_k$ be an irreducible representation of $S_{k-1}$ such that
\begin{equation}\nonumber
\inf \sigma(H^Z_{k-1} P^{\alpha^*_{k-1}})=\mu^{\alpha_k}_{k-1}.
\end{equation}
Since $C_0^\infty(\R^{3(k-1)})$ is dense in the domain of $H^Z_{k-1}P^{\alpha^*_{k-1}}$, for any $\veps>0$ there exists a function $ \phi_{\veps}\in P^{\alpha^*_{k-1}}C_0^\infty(\R^{3(k-1)})$ with $\Vert  \phi_{\veps}\Vert=1$ such that
\begin{equation}\label{eq_hvz_02}
\Vert (H^Z_{k-1}-\mu_{k-1}^{\alpha_k}) \phi_{\veps} \Vert^2 < \frac{\veps}{9}.
	\end{equation}
Let $R_{\veps}$ be such that 
\begin{equation}
\supp(\phi_{\veps}) \subset \{ x=(x_1,\cdots ,x_{k-1})\in \R^{3(k-1)}\big| |x_i|\leq R_{\veps},\ i=1,\cdots, k-1\}.
\end{equation}
The spectrum of $T_k$ is the positive real axis and $C_0^\infty(\R^3)$ is dense in the domain of $T_k$. Thus for any $\veps>0$ there exists $f^{(\veps)}\in \C_0^\infty(\R^3)$ with $\Vert f^{(\veps)}\Vert=1$ such that
\begin{equation}\nonumber
\big\Vert \big[T_k-(\lambda-\mu^{\alpha_k}_{k-1})\big]f^{(\veps)}\big\Vert^2\leq \frac{\veps}{9}.
\end{equation}
Let us consider a decreasing sequence $\veps_m\rightarrow 0$ and the functions $\phi_{\veps_m},\   f^{(\veps_m)}$ chosen accordingly as described above. For each of the $\veps_m$ we will pick a vector $A_m\in \R^3$ and define the shifted function $$f_{A_m,\veps_m}(x_k):=f^{(\veps_m)}(x_k+A_m).$$ The sequence of shifts $A_m$ is chosen such that $\supp(f_{A_m,\veps_m})\cap B_{2R_{\veps_m}}=\emptyset$, and such that
$$\supp\big( f_{A_m,\veps_m}\big) \cap \Big(\bigcup _{l=1}^{m-1}\supp\big(f_{A_l,\veps_l}\big)\Big) =\emptyset.$$
Because the kinetic energy operator is translation invariant we get
\begin{equation}\label{eq_hvz_03}
\big\Vert \big[T_k-(\lambda-\mu^{\alpha_k}_{k-1})\big]f_{A_m,\veps_m}\big\Vert^2\leq \frac{\veps_m}{9}.
\end{equation}
We set
\begin{equation}
\varphi_m(x):= \phi_{\veps_m}(x_1,\cdots,x_{k-1}) f_{A_m,\veps_m}(x_k)
\end{equation}
and let
\begin{equation}
\psi_m(x):=P^{\alpha_k} \varphi_m(x).
\end{equation}
Similar to the proof in Section~\ref{sec_abo_01} we have
\begin{equation}
\Vert \psi_m\Vert^2 = \Vert P^{\alpha_k} \varphi_m\Vert^2 = \theta_{\alpha^*_{k-1}}\Vert \phi_{\veps_m} f_{A_m,\veps_m}\Vert^2,
\end{equation}
where $\theta_{\alpha^*_{k-1}}>0$ is a constant depending on $\alpha^*_{k-1}$ and $\alpha_k$ only (see Section~\ref{sec_abo_01}). By choice of $A_m$, the functions $\psi_m$ have disjoint support and thus $\psi_m\rightharpoonup 0$. \par
We will now estimate $\Vert (H_k^Z-\lambda)\psi_m\Vert$. The Hamiltonian $H_k ^Z$ commutes with the projection operator $P^{\alpha_k}$, and since $\Vert P^{\alpha_k}\Vert\leq 1$ we get
\begin{equation}\nonumber
\begin{split}
\Vert (H_k^Z-\lambda)P^{\alpha_k} \varphi_m\Vert^2 &=\Vert P^{\alpha_k }(H_k^Z-\lambda)\varphi_m\Vert^2\leq \Vert(H_k^Z-\lambda)\varphi_m\Vert^2.
\end{split}
\end{equation}
We split $(H_k^Z-\lambda)$ into three parts
$$
(H_k^Z-\lambda)=(H_{k-1}^Z-\mu^{\alpha_k}_{k-1})+\big(T_k-(\lambda-\mu^{\alpha_k}_{k-1})\big)+\big(\sum_{1\leq i<k}\frac{e^2}{|x_i-x_k|}-\frac{e^2Z}{|x_k|}\big).
$$
On the support of $\varphi_m$ we have
\begin{equation}\label{eq_hvz_04}
\left| \sum_{1\leq i <k}\frac{e^2}{|x_i-x_k|}-\frac{e^2Z}{|x_k|}\right|^2\leq \frac{\veps_m}{9}.
\end{equation}
Together with \eqref{eq_hvz_02} and \eqref{eq_hvz_03} this yields
\begin{equation}\label{eq_hvz_08}
\Vert (H_k^Z-\lambda)P^{\alpha_k} \varphi_m\Vert^2\leq \veps_m.
\end{equation}
This shows that $\lambda \in \sigma_{ess}(H_k^ZP^{\alpha_k})$, and since $\lambda \in [\mu^{\alpha_k}_{k-1} , +\infty)$ was chosen arbitrarily this proves the inclusion \eqref{eq_hvz_11}.
\subsubsection{"Hard part":}
We will show that
\begin{equation}\label{eq_hvz_01}
 \sigma_{ess}(H_k^ZP^{\alpha_k})\subseteq [\mu^{\alpha_k}_{k-1},+\infty).
\end{equation}
We prove this inclusion by induction in $k$.
For $k=1$, the hydrogen-like case, this is well-known. We fix an arbitrary $k\leq Z$ and assume that for any $k'<k$ \eqref{eq_hvz_01} is true. 
Take any $\lambda\in \sigma_{ess}(H_k^ZP^{\alpha_k})$ and a corresponding Weyl sequence $(\psi_l)_{l\in \N}\subset P^{\alpha_k} L^2(\R^{3k})$.
Our aim is to show that 
\begin{equation}\nonumber
\lim _{l\rightarrow \infty} \las \psi_l,H_k^Z\psi_l\ras\geq \mu_{k-1}^{\alpha_k}.
\end{equation}
By Weyl's criterion this implies \eqref{eq_hvz_01}.\par 
 Let $u_R\in C^\infty(\R^3;[0,1])$ such that
\begin{equation}
u_R(z):=\left\{
\begin{array}{ll}
   1 & \mbox{ if } |z| \leq R \\
   0 & \mbox{ if } |z| > 2R
\end{array}
\right. 
\end{equation}
and for any $\cc\subseteq\{1,\cdots,k\}$ we define
\begin{equation}
F_\cc(x):=\prod_{i\in \cc}u_{R}(x_i)\prod_{j\notin\cc}\sqrt{1-u_{R}^2(x_j)}.
\end{equation}
With this definition we have
\begin{equation}
\sum_{\cc\subseteq\{1,\cdots,k\}}F_\cc^2\equiv 1.
\end{equation}
Let $\cc^*:=\{1,\cdots,k\}$; observe that $$\supp\big( F_{\cc^*}\big)\subset\bigotimes_{i=1}^kB_{2R}^{(i)}.$$ 
We apply a weakened form of Theorem~\ref{thm_loc_01} to estimate the localization error and get
\begin{equation}\label{eq_hvz_05}
\begin{split}
\las \psi_l,H_k^Z \psi_l\ras &= \las F_{\cc^*} \psi_l,H_k^Z F_{\cc^*}\psi_l\ras + \sum_{\cc\neq \cc^*} \las F_\cc \psi_l,H_k^Z F_\cc \psi_l\ras - \mathcal{LE}\\
&=\las F_{\cc^*} \psi_l,H_k^Z F_{\cc^*}\psi_l\ras + \sum_{\cc\neq\cc^*} \las F_\cc \psi_l,H_k^Z F_\cc \psi_l\ras+\mathcal{O}(R^{-2}).
\end{split}
\end{equation}
For the first term on the r.h.s. of \eqref{eq_hvz_05} the definition of $\mu^{\alpha_k}$, see \eqref{eq_hvz_13}, implies
\begin{equation}\label{eq_hvz_06}
\begin{split}
\las F_{\cc^*}\psi_l,H_k^Z F_{\cc^*} \psi_l\ras &\geq \mu^{\alpha_k} \Vert F_{\cc^*} \psi_l\Vert^2\\
&\quad=\mu^{\alpha_k}_{k-1} \Vert F_{\cc^*}\psi_l\Vert^2 + (\mu^{\alpha_k}-\mu^{\alpha_k}_{k-1})\Vert F_{\cc^*} \psi_l\Vert^2.
\end{split}
\end{equation}
Let
\begin{equation}
H_\cc^Z=\sum_{i\in \cc} T_i -\sum_{i\in \cc}\frac{e^2Z}{|x_i|}+\sum_{\substack{i,j\in \cc\\i<j}}\frac{e^2}{|x_i-x_j|}.
\end{equation}
For each summand of the second term on the r.h.s. of \eqref{eq_hvz_05} we write
\begin{equation}\label{eq_hvz_07}
\begin{split}
\las F_\cc \psi_l,H_k^ZF_\cc \psi_l\ras& = \las F_\cc\psi_l,H_\cc^Z F_\cc \psi_l\ras + \sum_{j\notin \cc}\las F_\cc \psi_l,T_j F_\cc \psi_l\ras\\
&\quad+\sum_{j\notin \cc} \big\las F_\cc \psi_l,\big(-\frac{e^2Z}{|x_j|}+\sum_{i\neq j}\frac{e^2}{2|x_i-x_j|}\big) F_\cc \psi_l\big\ras .
\end{split}
\end{equation}
Each term in the second sum on the r.h.s. of \eqref{eq_hvz_07} is non-negative. For the summands in the third term on the r.h.s. of \eqref{eq_hvz_07}, by construction of $F_\cc$, there exists a constant $C>0$ such that 
\begin{equation}
\sum_{j\notin \cc} \big\las F_\cc \psi_l,\big(-\frac{e^2Z}{|x_j|}+\sum_{i\neq j}\frac{e^2}{2|x_i-x_j|}\big) F_\cc \psi_l\big\ras \geq -\frac{C}{R}\Vert F_\cc \psi_l\Vert^2.
\end{equation}
It is obvious that for any $\cc\subseteq\{1,\cdots,k\}$ the function $F_\cc$ is invariant under permutations in $S(\cc)$. This implies, that for $\psi\in P^{\alpha_k}L^2(\R^{3k})$ the function $F_{\cc}\psi$ necessarily has a symmetry corresponding to an induced representation $\alpha'_\cc\prec \alpha_k$ of $S(\cc)$. Thus for any $\cc\neq \cc^*$ we have
\begin{equation}\label{eq_hvz_12}
\las F_\cc \psi_l,H_\cc^Z F_\cc \psi_l\ras \geq \min_{\alpha'_\cc\prec \alpha_k} \inf \sigma(H_\cc^ZP^{\alpha'_\cc})\Vert F_\cc\psi_l\Vert^2\geq \mu_{k-1}^{\alpha_k} \Vert F_\cc \psi_l \Vert^2
\end{equation}
by the induction assumption, since $H_\cc^ZP^{\alpha'_\cc}$ is unitarily equivalent to $H_{k'}^ZP^{\alpha_{k'}}$ for $k'=\sharp \cc$ and some $\alpha_{k'}\prec\alpha_k$.
Gathering \eqref{eq_hvz_05}, \eqref{eq_hvz_06} and \eqref{eq_hvz_07}-\eqref{eq_hvz_12} we get that for some constant $C>0$ independent of $l\in \N$ we have
\begin{equation}\label{eq_hvz_10}
\las \psi_l,H_k^Z\psi_l\ras\geq \mu_{k-1}^{\alpha_k} \underbrace{\sum_{\cc}\Vert F_\cc \psi_l\Vert^2}_{=1} +(\mu^{\alpha_k}-\mu_{k-1}^{\alpha_k})\Vert F_{\cc^*}\psi_l\Vert^2 -\frac{C}{R}.
\end{equation}
It remains to show that $\Vert F_{\cc^*}\psi_l\Vert^2\xrightarrow{l\rightarrow \infty}0$. The operators $H_0:=\sum_{i=1}^k T_i$ and $H_k^Z$ are semi--bounded from below, thus there exists a constant $c>0$ such that $(H_0+c)$ and $(H_k^Z+c)$ are positive operators. We write
\begin{equation}
F_{\cc^*}\psi_l=F_{\cc^*}(H_k^Z+c)^{-1}(H_k^Z+c)\psi_l.
\end{equation}
Firstly we claim that the sequence $\big((H_k^Z+c)\psi_l\big)_{l\in \N}$ converges weakly to zero. Since $(\psi_l)_{l\in \N}$ is a Weyl sequence, $(H_k^Z-\lambda)\psi_l$ converges to zero in norm and
$$(H_k^Z+c)\psi_l= \underbrace{(H_k^Z-\lambda)\psi_l}_{\rightarrow 0}+\underbrace{(c+\lambda)\psi_l}_{\rightharpoonup 0}.$$
Our next goal is to show that the operator $F_{\cc^*}(H_k^Z+c)^{-1}$ is compact. We write
$$
F_{\cc^*}(H_k^Z+c)^{-1}=F_{\cc^*}(H_0+c)^{-\frac12}(H_0+c)^\frac12 (H_k^Z+c)^{-\frac12}(H_k^Z+c)^{-\frac12}.
$$
Since $(H_k^Z+c)^{-\frac12}$ is the inverse of a strictly positive operator, it is bounded. To obtain a bound of $(H_0+c)^\frac12 (H_k^Z+c)^{-\frac12}$. Let $V$ be the sum of Coulomb potentials in $H_k^Z$, such that
$$
H_k^Z=H_0+V.
$$
Since $V$ is relative $H_0$-bounded, there exist $1>a>0$ and $b>0$ such that for all $\varphi\in \mathcal{D}(H_0)\cap \mathcal{D}(V)$ we have
$$
|\las \varphi,V\varphi\ras| \leq a\las \varphi,H_0\varphi\ras + b\Vert \varphi\Vert^2.
$$
By this inequality, for all $\varphi \in \mathcal{D}(H_0)$ we get
\begin{equation}\nonumber
\begin{split}
\las \varphi,(H_0+c)\varphi\ras &= \las \varphi,(H_0+V+c)\varphi\ras - \las \varphi,V\varphi\ras \\
&\leq \las \varphi,(H_k^Z+c)\varphi\ras +a\las \varphi,H_0\varphi\ras +b\Vert \varphi\Vert^2.
\end{split}
\end{equation}
Since $a<1$, this is equivalent to
\begin{equation}\nonumber
\las \varphi,(H_0+c)\varphi)\ras\leq \frac{1}{1-a}\las \varphi,(H_k^Z+c)\varphi\ras +\frac{b-ac}{1-a}\Vert \varphi\Vert^2.
\end{equation}
In particular, setting $\varphi=(H_k^Z+c)^{-\frac12}\psi$ this yields
\begin{equation}\nonumber
\begin{split}
\Vert (H_0+c)^\frac12(H_k^Z+c)^{-\frac12}\psi\Vert^2&=\las (H_k^Z+c)^{-\frac12}\psi,(H_0+c)(H_k^Z+c)^{-\frac12}\psi\ras\\
&\leq\frac{1}{1-a}\Vert \psi\Vert^2 + \frac{b-ac}{1-a}\Vert (H_k^Z+c)^{-\frac12}\psi\Vert^2.
\end{split}
\end{equation}
Together with boundedness of $(H_k^Z+c)^{-\frac12}$ this implies that $(H_0+c)^\frac12(H_k^Z+c)^{-\frac12}$ is bounded.
%
Finally note that the operator $F_{\cc^*}(H_0+c)^{-\frac12}$ is compact, being a norm limit of Hilbert-Schmidt operators
\begin{equation}
B_n=F_{\cc^*}(H_0+c)^{-1} \chi(H_0<n).
\end{equation}
%
Thus
\begin{equation}
\Vert F_{\cc^*} \psi_l\Vert^2=\Vert F_{\cc^*}(H_0+c)^{-\frac12}(H_0+c)^\frac12 (H_k^Z+c)^{-\frac12}(H_k^Z+c)^{\frac12} \psi_l\Vert^2 \xrightarrow{l\rightarrow \infty} 0.
\end{equation}
Recall from inequality \eqref{eq_hvz_10} that 
\begin{equation}\nonumber
\las \psi_l,H_k^Z\psi_l\ras \geq \mu^{\alpha_k}_{k-1} +(\mu^{\alpha_k}-\mu^{\alpha_k}_{k-1})\Vert F_{\cc^*}\psi_l\Vert^2-\frac{C}{R}.
\end{equation}
Picking $R$ and $l$ large yields $\lambda\geq \mu^{\alpha_k}_{k-1}$,
where $\lambda$ was an arbitrary value in the essential spectrum of $H_k^ZP^{\alpha_k}$.
 \end{proof}
%
\section{Existence of a ground state for atoms and positive ions}\label{sec_exist}
Let $H_k^Z$, $S_k$, and $\alpha_k$ be the same as in Appendix~\ref{sec_hvz} and let $k\leq Z$.
\begin{theorem}\label{thm_app_01}
For any irreducible representation $\alpha_k$ of the group $S_k$, the operator $H_k^Z P^{\alpha_k}$  has a discrete eigenvalue at the bottom of its spectrum.
\end{theorem}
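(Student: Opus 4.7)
The plan is to proceed by induction on $k$. For the base case $k=1$, the subcritical pseudo-relativistic hydrogen operator $\sqrt{p^2+1}-1-e^2Z/|x|$ with $Ze^2<\tfrac{2}{\pi}$ is known (see \cite{herbst} and references therein) to have a discrete ground state below the threshold $0$ of its essential spectrum, and $S_1$ has only the trivial irreducible representation. For the inductive step I assume Theorem~\ref{thm_app_01} for every $k'<k$. By Theorem~\ref{thm_hvz_01}, $\sigma_{ess}(H_k^ZP^{\alpha_k})=[\mu_{k-1}^{\alpha_k},+\infty)$, so it suffices to construct a trial state $\psi\in P^{\alpha_k}H^{1/2}(\R^{3k})$ with $\las\psi,H_k^Z\psi\ras<\mu_{k-1}^{\alpha_k}\Vert\psi\Vert^2$. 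Pick an induced representation $\alpha_{k-1}^*\prec\alpha_k$ realising the minimum in the definition of $\mu_{k-1}^{\alpha_k}$; by the inductive hypothesis $H_{k-1}^ZP^{\alpha_{k-1}^*}$ has a normalised ground state $\phi$ with eigenvalue $\mu_{k-1}^{\alpha_k}$, and Theorem~\ref{thm_int_04} ensures that $\phi$ decays exponentially in every coordinate.

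The Zhislin-type trial function I shall use is
\begin{equation*}
\psi:=P^{\alpha_k}\Phi,\qquad \Phi(x):=\phi(x_1,\ldots,x_{k-1})\,f_\lambda(x_k),\qquad f_\lambda(x):=\lambda^{3/2}g(\lambda x),
\end{equation*}
where $g\in C_0^\infty(\R^3)$ is normalised with $\supp(g)\subset\{1/2\leq|y|\leq 2\}$ and $\lambda>0$ is small, so that $\supp(f_\lambda)$ sits in a spherical shell at distance of order $\lambda^{-1}$ from the origin. Splitting $P^{\alpha_k}=P_1+P_2$ in analogy with \eqref{eq_abo_01} into the contribution of permutations in $S_{k-1}\subset S_k$ and its complement, one checks $P_1\Phi=\theta\Phi$ for an explicit constant $\theta>0$ depending only on $\alpha_k$ and $\alpha_{k-1}^*$, whereas $P_2\Phi$ is supported on configurations in which the $k$-th particle sits near the origin. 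The disjoint-support argument of Section~\ref{sec_abo_01}, together with the exponential decay of $\phi$ and of the integral kernel of $\sqrt{p^2+1}-1$ (controlled by the Bessel function $K_2$, cf.\ the proof of Theorem~\ref{prop_loc_01}), yields
\begin{equation*}
\Vert\psi\Vert^2=\theta\Vert\Phi\Vert^2+\mathcal{O}(e^{-c/\lambda}),\qquad \las\psi,H_k^Z\psi\ras=\theta\las\Phi,H_k^Z\Phi\ras+\mathcal{O}(e^{-c/\lambda}),
\end{equation*}
for some $c>0$.

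Decomposing $H_k^Z=H_{k-1}^Z+T_k-e^2Z/|x_k|+\sum_{i<k}e^2/|x_i-x_k|$, the first summand contributes $\mu_{k-1}^{\alpha_k}\Vert\Phi\Vert^2$. On $\supp(\Phi)$ the exponential localisation of $\phi$ near the origin together with $|x_k|\gtrsim\lambda^{-1}$ yields $|x_i-x_k|^{-1}=|x_k|^{-1}+\mathcal{O}(\lambda^2)$, so the remaining contribution reduces to the one-body hydrogenic expectation $\las f_\lambda,\,(T_k-(Z-k+1)e^2/|x|)\,f_\lambda\ras+\mathcal{O}(\lambda^2)$. Scaling gives $\las f_\lambda,T_kf_\lambda\ras=\int(\sqrt{\lambda^2q^2+1}-1)|\hat g(q)|^2\,dq=\mathcal{O}(\lambda^2)$ and $\int|f_\lambda|^2/|x|\,dx=\lambda\int|g|^2/|y|\,dy$. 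Since $k\leq Z$ ensures $Z-k+1\geq 1$, the effective energy is $-(Z-k+1)e^2\lambda\int|g|^2/|y|\,dy+\mathcal{O}(\lambda^2)<0$ for $\lambda$ sufficiently small, so $\las\psi,H_k^Z\psi\ras/\Vert\psi\Vert^2<\mu_{k-1}^{\alpha_k}$, and $\mu^{\alpha_k}$ is therefore a discrete eigenvalue at the bottom of the spectrum.

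The hard part is the cross-term analysis behind the second display: local (Coulomb) contributions vanish identically by disjoint supports, but the non-local kinetic operators $T_j$ yield non-zero cross terms that must be shown to be exponentially small through quantitative bounds on the integral kernel of $\sqrt{p^2+1}-1$, in the spirit of Section~\ref{sec_loc}. This is precisely the step where the argument departs from the classical non-relativistic Sigalov--Zhislin proof.
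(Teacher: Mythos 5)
Your overall strategy matches the paper's: induction on $k$, Theorem~\ref{thm_hvz_01} to locate $\inf\sigma_{ess}(H_k^Z P^{\alpha_k})=\mu^{\alpha_k}_{k-1}$, and a Zhislin-type trial state placing the $k$-th electron in a thin shell at distance $\sim\lambda^{-1}$. The leading energy balance (effective charge $Z-k+1\geq1$, attraction of order $\lambda$ versus kinetic cost of order $\lambda^2$) is correct. Where you depart from the paper, and where gaps appear, is in not truncating $\phi$. The paper takes $\tilde\psi_0=(\zeta_{R,Z}\phi)\,f_R(x_k)$ with $\zeta_{R,Z}$ confining the first $k-1$ electrons to balls of radius $R/(Z+1)$, so that $\tilde\psi_0$ and $T_\pi\tilde\psi_0$, $\pi\in S_k\setminus S_{k-1}$, have \emph{exactly} disjoint supports. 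Your $\Phi=\phi\,f_\lambda$ does not have this property: for $\pi\notin S_{k-1}$ the supports of $\Phi$ and $T_\pi\Phi$ overlap on the (exponentially small but nonempty) region where some $x_j$, $j<k$, also lies in the annulus $\supp(f_\lambda)$. Hence your claim in the last paragraph that ``local (Coulomb) contributions vanish identically by disjoint supports'' is false for your construction; the Coulomb cross terms are only exponentially small and need the same treatment you reserve for the kinetic part. The same lack of truncation means the pointwise expansion $|x_i-x_k|^{-1}=|x_k|^{-1}+\mathcal{O}(\lambda^2)$ fails on the singular set $\{x_i=x_k\}\subset\supp(\Phi)$, so converting it into an estimate for $\las\Phi,|x_i-x_k|^{-1}\Phi\ras$ requires an additional argument.

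There is also a conceptual misdiagnosis of the role of non-locality. Once $\phi$ is cut off, the paper's Lemma~\ref{lem_sym} shows that cross terms $\las T_\pi\tilde\psi_0,(A\otimes\mathbbm{1})\tilde\psi_0\ras$ vanish \emph{exactly}, even for a non-local $A$ acting on a subset of particle variables: $A\otimes\mathbbm{1}$ cannot destroy the support constraints carried by the variables it does not touch, and these already force disjointness from $T_\pi\tilde\psi_0$. Applied with $A=H^Z_{k-1}$ on the first $k-1$ coordinates, and with the obvious one-sided variant for $T_k$ on the last coordinate, this gives the exact identity $\las P^{\alpha_k}\tilde\psi_0,H_k^Z\tilde\psi_0\ras=\theta_{\alpha^*_{k-1}}\las\tilde\psi_0,H_k^Z\tilde\psi_0\ras$ and similarly for norms — no Bessel-kernel decay estimate is needed at the Sigalov--Zhislin reduction step. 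The exponential decay of the kernel of $\sqrt{p^2+1}-1$ enters only in bounding the localization error of replacing $\phi$ by $\zeta_{R,Z}\phi$, which is the content of Section~\ref{sec_loc} and contributes an $\mathcal{O}(e^{-cR})$ correction. Your proposal can in principle be completed, but at the cost of extra estimates that the paper's sharp cutoff renders unnecessary; the ``departure from the classical Sigalov--Zhislin proof'' you describe is not actually forced by the relativistic kinetic energy.
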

\begin{proof}[Proof of Theorem \ref{thm_app_01}]
We prove the theorem by induction in $k=1,\cdots,Z$. For $k=1$ we have
$$
H_1^Z=\sqrt{p^2+1}-1-\frac{Ze^2}{|x|}\leq \frac{p^2}{2}-\frac{Ze^2}{|x|}.
$$
The operator $\frac{p^2}{2}-\frac{Ze^2}{|x|}$ has an infinite number of negative eigenvalues, which yields the existence of a negative eigenvalue for $H_1^Z$. Note that for one electron we do not have restrictions regarding its symmetry.\par
For fixed but arbitrary $k\leq Z$, let us assume that for each irreducible representation $\alpha_{k-1}$ of the permutation group $S_{k-1}$, the operator  $H_{k-1}^ZP^{\alpha_{k-1}}$ has a ground state.\par 
We will construct a trial state $\psi_0\in P^{\alpha_k}H^{1/2}(\R^{3k})$ for arbitrary irreducible representation $\alpha_k$ of $S_k$ such that
$$
\Vert \psi_0\Vert^{-2}\las \psi_0,H_k^Z\psi_0\ras < \inf \sigma_{ess}(H_k^ZP^{\alpha_k }).
$$
Let $\alpha^*_{k-1}\prec \alpha_k$ be an irreducible representation of $S_{k-1}$ such that
\begin{equation}
\inf \sigma(H_{k-1}^ZP^{\alpha_{k-1}^*})=\min_{\alpha'_{k-1}\prec\alpha_k}\inf \sigma(H_{k-1}^ZP^{\alpha'_{k-1}})=:\mu_{k-1}^{\alpha_k}.
\end{equation}
By the induction assumption, there exists a state $\phi\in P^{\alpha^*_{k-1}}H^{1/2}(\R^{3(k-1)})$ with
\begin{equation}
\las \phi,H_{k-1}^Z \phi\ras = \mu_{k-1}^{\alpha_k}\Vert \phi\Vert^2.
\end{equation}
Let $f\in C^\infty_0(\R^3)$ with $\Vert f\Vert_{L^2}=1$ and $\supp(f)\subset\{x\in \R^3\big| 1\leq |x|\leq 2\}$, and let
\begin{equation}
f_R(z):=R^{-\frac32} f(zR^{-1}),
\end{equation}
so that $\Vert f_R\Vert =1$. For $u\in C^\infty(\R^3;[0,1])$ with
\begin{equation}
u(z):=\left\{
\begin{array}{ll}
   1 & \mbox{ if } |z| \leq \frac{1}{2} \\
   0 & \mbox{ if } |z| \geq 1
\end{array}
\right. 
\end{equation}
we define the cutoff function
\begin{equation}
\zerz(x_1,\cdots,x_{k-1}) :=\prod_{i=1}^{k-1} u\Big(x_i \cdot \frac{R}{Z+1}\Big).
\end{equation}
This cutoff function localizes each particle $i=1,\cdots,k-1$ in a ball of radius $\frac{R}{Z+1}$ and is invariant under permutations in $S_{k-1}$. We define 
\begin{equation}
\tpsi_0(x):= (\zerz \phi)(x_1,\cdots,x_{k-1}) f_R(x_k)
\end{equation}
and the trial state
\begin{equation}
\psi_0:=\frac{P^{\alpha_k} \tpsi_0}{\Vert P^{\alpha_k} \tpsi_0\Vert }.
\end{equation}
Following the same argument as in Section~\ref{sec_abo_01}, we have
\begin{equation}\label{eq_exist_02}
\frac{\las \tpsi_0,H_k^ZP^{\alpha_k}\tpsi_0\ras}{\Vert P^{\alpha_k}\tpsi_0\Vert^2}=\frac{\las \tpsi_0,H_k^Z\tpsi_0\ras}{\Vert \tpsi_0\Vert^2}.
\end{equation}
We split the Hamiltonian $H_k^Z$ into three parts
\begin{equation}
H_k^Z=H_{k-1}^Z+T_k+\Big(\sum_{1\leq i<k}\frac{e^2}{|x_i-x_k|}-\frac{e^2Z}{|x_k|}\Big).
\end{equation}
Using the exponential decay of the eigenfunction $\phi$, similar to \eqref{eq_abo_05}, we get
\begin{equation} \label{eq_exist_01}
\las \zerz \phi,H_{k-1}^Z\zerz \phi\ras =\mu^{\alpha_k}_{k-1} \Vert \phi\Vert^2+\mathcal{O} (e^{-cR})
\end{equation}
for some constant $c>0$. Note that for $x_k\in \supp (f_R)$ we have $|x_k|=(1+\theta)R$ for some $\theta\in [0,1]$ and by choice of $\zerz$, for $x\in \supp(\tpsi_0)$ we get
\begin{equation}\label{eq_exist_05}
\begin{split}
\sum_{1\leq i<k}\frac{e^2}{|x_i-x_k|}&\leq\sum_{1\leq i<k}\frac{e^2}{|x_k|-|x_i|} \\
&\leq \frac{e^2(k-1)(Z+1)}{(Z+Z\theta+\theta)R}
\end{split}
\end{equation}
and 
\begin{equation}\label{eq_exist_06}
-\frac{e^2Z}{|x_k|}= -\frac{e^2Z}{(1+\theta) R}.
\end{equation}
Using \eqref{eq_exist_05} and \eqref{eq_exist_06}, and $k\leq Z$, we arrive at
\begin{equation}
\begin{split}
\sum_{1\leq i<k}\frac{e^2}{|x_i-x_k|}-\frac{e^2Z}{|x_k|}&\leq -\frac{e^2(\theta+1+Z\theta)}{R(1+\theta)(Z+Z\theta+\theta)}\\
&\leq -\frac{e^2}{R(Z+Z\theta+\theta)}-\frac{e^2Z\theta}{R(1+\theta)(Z+Z\theta+\theta)}.
\end{split}
\end{equation}
The first term on the r.h.s. is increasing in $\theta$ and the second term is non-positive, which yields the bound
\begin{equation}\label{eq_exist_03}
\big\las \tpsi_0 , \big(\sum_{1\leq i<k}\frac{e^2}{|x_i-x_k|}-\frac{e^2Z}{|x_k|} \big)\tpsi_0\big\ras\leq -\frac{e^2}{(2Z+1)R}\Vert \tpsi_0\Vert^2.
\end{equation}
Furthermore, for the particle $k$ we have
\begin{equation}\label{eq_exist_04}
\begin{split}
\las \psi_0,T_k\psi_0\ras &= \Vert \zerz \phi \Vert^2 \las f_R,T_k f_R\ras\\
&\leq \Vert \zerz \phi \Vert^2 \las f_R,\frac{ p_k^2}{2} f_R\ras\\
& \leq \frac{C}{R^2}\Vert \tpsi_0\Vert^2.
\end{split}
\end{equation}
Collecting \eqref{eq_exist_02}, \eqref{eq_exist_01}, \eqref{eq_exist_03} and \eqref{eq_exist_04}, we get 
\begin{equation}\label{eq_exist_07}
\frac{\las \psi_0,H_k^ZP^{\alpha_k}\psi_0\ras}{\Vert P^{\alpha_k}\psi_0\Vert^2}\leq \mu^{\alpha_k}_{k-1}+\frac{C}{R^2}-\frac{e^2}{(2Z+1)R}<\mu_{k-1}^{\alpha_k}
\end{equation}
for sufficiently large $R$. By Theorem~\ref{thm_hvz_01} we have
\begin{equation}
\mu^{\alpha_k}_{k-1}=\inf \sigma_{ess}(H_k^ZP^{\alpha_k }).
\end{equation}
So \eqref{eq_exist_07} shows that the discrete spectrum of $H_k^Z P^{\alpha_k}$ below $\mu_{k-1}^{\alpha_k}$ is not empty, in particular a ground state of $H_k^ZP^{\alpha_k}$ exists.\color{black}
\end{proof}

%
\section{Commutator bounds via quadratic forms} \label{sec_app_04}
%
In this section we gather some auxiliary results, which are essential for 
the proof of exponential decay of eigenfunctions of pseudo--relativistic 
operators and also to give exponentially small error bounds for some of 
the error terms in the van der Waals--London asymptotic. For non-relativistic 
Schr\"odinger operators exponential bounds for the decay of eigenfunctions at 
infinity are well understood since the groundbreaking works 
of Slaggie and Wichmann for three--body systems \cite{slaggie-wichmann}, 
Ahlrichs for atoms \cite{Ahlrichs}, O'Connor \cite{connor1973}, 
Combes and Thomas \cite{combes-thomas},  Deift, Hunziker, Simon,  
and Vock \cite{deift-hunziker-simon-vock} for multi--particle systems,  
which culminated in the work of Agmon  \cite{agmonlecture}.  
Of course, O’Connors analytic method for proving exponential decay for 
eigenfunctions also works, neglecting symmetry issues, for non--local 
operators like $\sqrt{p^2+1}-1$ 
due to the analyticity of the corresponding symbol in a strip 
$\{z\in\C^{3N}: |\Im(z)|<\delta\}$ for suitable $\delta>0$. 
This was done by Nardini in \cite{nardini1988asymptotic}, but it 
does not allow to include the required 
symmetry of the eigenstates. 
Thus we develop a variant of Agmon's method, which is based on 
configuration space methods, for multi--particle pseudo--relativistic 
Schr\"odinger operators, because it is invariant under permutation of 
particles and easily allows to include particle symmetries. 
However, due to the non--locality of the pseudo--relativistic operator 
$\sqrt{p^2+1}-1$, this is considerably harder than in the 
non--relativistic case.

Our main tool is  an extension of the localization error formula of Loss, Lieb, and Yau in \cite{lieb1988} in the spirit of \cite{Griesemer}, see Lemma \ref{lem_ims_extended} below. Before we can state it, we need to first investigate the behavior of $H^{1/2}(\R^d)$ under multiplication with bounded Lipschitz continuous functions. 
\begin{lem}\label{lem_H-onehalf-invariance}
  Let $\xi:\R^d\to \C$ be a bounded Lipschitz continuous function. Then for any $f\in H^{1/2}(\R^d)$ the product $\xi f$ is also in $ H^{1/2}(\R^d)$. 
\end{lem}
\begin{rema}
	That $H^{1/2}(\R^d)$ is invariant under multiplication with bounded $C^\infty$ functions, whose derivative is also bounded, is well known, see \cite[Theorem 7.16]{lieb2001analysis}. That it is enough to have bounded Lipschitz functions, seems to be less appreciated. 
	As our proof shows, it is enough to assume that $\xi$ is bounded and 
	$\gamma$-H\"older continuous  with H\"older constant $1/2<\gamma\le 1$. 
\end{rema}
\begin{proof}
  Clearly, if $\xi $ is bounded, then $\Vert \xi f \Vert\le \Vert\xi   \Vert_\infty \Vert f \Vert$, so it is enough to show that $\xi f$ is in the domain of the fractional Laplacian $|p|^{1/2}= (-\Delta)^{1/4}$. According to \cite[Theorem 7.12]{lieb2001analysis} the quadratic form of $|p|$ is given by 
  \begin{align*}
  	\las f, |p| f \ras  = c_d \iint\limits_{\R^d\times\R^d} \frac{|f(x)-f(y)|^2}{|x-y|^{d+1}}\, dxdy
  \end{align*}
  with $c_d= \frac{\Gamma(\frac{d+1}{2})}{2\pi^{(d+1)/2}}$, and $\Gamma$ being the Gamma function. Hence 
  \begin{align}\label{eq_rep-p}
  	\||p|^{1/2} \xi f\|^2 =   	\las \xi f, |p| \xi f \ras  = c_d \iint\limits_{\R^d\times\R^d} \frac{|\xi(x)f(x)-\xi(y)f(y)|^2}{|x-y|^{d+1}}\, dxdy
  \end{align}
  Using 
  \begin{align*}
  	|\xi(x)f(x)- \xi(y)f(y)|^2 
  		&= |(\xi(x)-\xi(y)) f(x) + \xi(y)\big(f(x)-f(y)\big)|^2 \\
  		&\le 2 |\xi(x)-\xi(y)|^2 |f(y)|^2+ 2\|\xi\|_\infty^2 | f(x)-f(y) |^2 
  	\end{align*}
   in \eqref{eq_rep-p} one has 
   \begin{align*}
 	  	\||p|^{1/2} \xi f\|^2 
  	  		&\lesssim \iint\limits_{\R^d\times\R^d} \frac{|\xi(x)-\xi(y)|^2}{|x-y|^{d+1}}  |f(y)|^2 \, dxdy 
  	  			+  \|\xi\|_\infty^2 \iint\limits_{\R^d\times\R^d} \frac{|f(x)-f(y)|^2}{|x-y|^{d+1}}\, dxdy \\
  	  		&\lesssim  \sup_{x\in\R^d}\int\limits_{\R^d} \frac{|\xi(x)-\xi(y)|^2}{|x-y|^{d+1}}\, dy \|f\|^2 
  	  			+ \|\xi\|_\infty^2 \Vert|p|^{1/2}f\Vert^2   
  \end{align*}
 With $L$ the Lipschitz constant of $\xi$, we have 
 \begin{align*}
 	|\xi(x)-\xi(y)|\le \min(L|x-y|, 2\|\xi\|_\infty) \, .
 \end{align*}
 Hence 
 \begin{align*}
 	  \sup_{x\in\R^d}&\int\limits_{\R^d} \frac{|\xi(x)-\xi(y)|^2}{|x-y|^{d+1}}\, dy
 	  \le \int\limits_{\R^d} \frac{\min(L^2|y|^2, 4\|\xi\|_\infty^2)}{|y|^{d+1}}\, dy \, 
 	  		  \lesssim L^2 \|\xi\|_\infty^2
 \end{align*}
 by evaluating the integral in spherical coordinates.  This shows 
 \begin{align*}
 	\||p|^{1/2} \xi f\|^2 
		\lesssim L^2\|\xi\|_\infty^2 \|f\|^2 + \|\xi\|_\infty^2 	\||p|^{1/2}  f\|^2 <\infty 
 \end{align*}
 for all $f\in H^{1/2}(\R^d)$, thus $\xi f\in H^{1/2}(\R^d)$. 
\end{proof}

\begin{lem}[Commutation formula for the relativistic kinetic energy, one particle case]\label{lem_ims_extended} 
For a bounded real-valued Lipschitz function $\xi$, $T=\sqrt{p^2+1}-1$, and any function 
$\varphi\in H^{1/2}(\R^{3})$ we have 
\begin{equation}\label{eq_ims-extended}
\re \las \xi^2 \varphi,T \varphi\ras 
	= \las \xi \varphi,T \xi\varphi\ras 
		- \calL_\xi(\varphi,\varphi)
\end{equation}
where the quadratic form $\calL_\xi$ is given by 
\begin{equation}\label{eq_loc_error}
\calL_\xi(\varphi,\varphi) = 
	\frac{1}{4\pi^2} \iint\limits_{\R^3\times\R^3} \frac{K_2(|x-y|)}{|x-y|^2}(\xi(x)-\xi(y))^2 \ol{\varphi(x)} \varphi(y)\,  dx dy
\end{equation}
where $K_2$ is the modified Bessel function of order two.
\end{lem}
\begin{rema}
	An analogous formula, when $1=\sum_{j=1}^K \xi_j^2$ for a partition of unity, was found 
	by Michael Loss and used in \cite{lieb1988} (see formula (3.6) 
	in Theorem 9 in \cite{lieb1988}). For our applications 
	it is important to have \eqref{eq_ims-extended} in its full generality, however. 
\end{rema}
\begin{proof}
  Note that $\calL_\xi(\varphi,\varphi)$ is well-defined for all \
  $\varphi\in L^2(\R^3)$, since $\xi$ is a bounded  Lipschitz continuous function, so    
  $\xi^2\varphi$ and $\xi\varphi$ are in $H^{1/2}(\R^3)$, due 
  to Lemma  \ref{lem_H-onehalf-invariance}. So all terms in \eqref{eq_ims-extended} are well-defined. 
   According to \cite[Theorem 7.12]{lieb2001analysis} we have
 \begin{equation}\label{eq_exp_17}
  \las \varphi,T \varphi\ras=\frac{1}{4\pi^2}\iint\limits_{\R^3\times\R^3} \frac{K_2(|x-y|)}{|x-y|^2}\, |\varphi(x)-\varphi(y)|^2\, dxdy
 \end{equation}
 for $\varphi\in H^{1/2}(\R^3)$. By polarization, this extend to 
 \begin{equation}
 	\las f,T g\ras=\frac{1}{4\pi^2}\iint\limits_{\R^3\times\R^3} 
 	\frac{K_2(|x-y|)}{|x-y|^2}\ol{(f(x)-f(y))}(g(x)-g(y))dxdy 
 \end{equation}
 for two functions $f,g\in H^{1/2}(\R^3)$. Thus
 \begin{equation}\label{eq_ims-anfang}
 \begin{split}	
 	\re & \las \xi^2\varphi, T\varphi \ras 
 	=  \\
 	&\frac{1}{4\pi^2}\iint\limits_{\R^3\times\R^3} 
 	\frac{K_2(|x-y|)}{|x-y|^2}\re\big(\ol{(\xi^2(x)\varphi(x)-\xi^2(y)\varphi(y))}(\varphi(x)-\varphi(y))\big)dxdy 
 \end{split}
 \end{equation}
 For $a,b\in\R $ and $c,d\in\C$ one has the simple identity 
 \begin{equation}\label{eq_simple}
 	\re \big(\ol{(a^2c-b^2d)}(c-d) \big)
 	- |ac-bd|^2 
 	=- (a-b)^2\re(\ol{c}d) \, .
 \end{equation}
 Using \eqref{eq_simple} in \eqref{eq_ims-anfang} immediately yields \eqref{eq_ims-extended} and \eqref{eq_loc_error}, since, by symmetry, $\calL_\xi(\varphi,\varphi)$ is real. 
\end{proof}

\begin{lem}[Simple bound on the commutation error, one particle case]\label{lem_L_xi-bound}
	Assume that $\xi:\R^d\to\R $ is Lipschitz. Then 
	\begin{equation}\label{eq_L_xi-bound}
		|\calL_\xi(\varphi,\varphi)| 
			\le \frac{3L_\xi^2}{2}  \big\|\varphi\big\|^2
	\end{equation}
	where $L_\xi $ is the Lipschitz constant of $\xi$. 
\end{lem}
\begin{proof}
  Using the Lipschitz continuity, 
  $
  	|\xi(x)-\xi(y)|\le L_\xi|x-y|
  $ for all $x,y\in \R^3$, 
 in \eqref{eq_loc_error}, we get 
 \begin{align*}
 	|\calL_\xi&(\varphi,\varphi)| 
 		\le  \frac{L_\xi^2}{4\pi^2} \iint\limits_{\R^3\times\R^3} K_2(|x-y|)|\varphi(x)| |\varphi(y)|\,  dx dy \\
 		&\le   \frac{L_\xi^2}{4\pi^2} \iint\limits_{\R^3\times\R^3} K_2(|x-y|)|\frac{1}{2}\left(\varphi(x)|^2 + |\varphi(y)|^2\right)\,  dx dy 
 			=    \frac{L_\xi^2}{4\pi^2} \|K_2(|\cdot|)\|_{L^1(\R^3)}\|\varphi\|^2\, .
 \end{align*}
 Since $\int_0^\infty K_2(r)r^2\, dr= \frac{3\pi}{2}$, see \cite[Formula 11.4.22]{abramowitz1964handbook}, we have 
 \begin{align}
 	\Vert |K_2(|\cdot|)\Vert_{L^1(\R^3)} 
 		= 4\pi \int_0^\infty K_2(r) r^2\, dr 
 		=  6\pi^2 \, , 
 \end{align}
 this proves the lemma. 
\end{proof}

If the weight $\xi$ is of the form $\xi=\chi e^F$, with $\chi$ a cut--off 
function and $F$ bounded and Lipschitz, then the Lipschitz constant of 
$\xi$ is bounded by  $L_\xi\lesssim (L_\chi+L_F)e^{\Vert F \Vert_\infty}$, 
no better bound being available,  in general.  
Thus the simple commutation error bound from Lemma \ref{lem_L_xi-bound} is 
insufficient for the application to exponential bounds 
for eigenfunctions of pseudo--relativistic Schr\"odinger operators, where we have to use a sequence of bounded functions  
 $F_n$, whose Lipschitz constant is uniformly 
 bounded in $n\in\N$, but for which $\Vert F_n \Vert_\infty$ 
 diverges as $n$ grows. The next Lemma shows how such a refined bound can be achieved.  

\begin{lem}[Refined bound on the commutation error, one particle case]\label{lem_L_chi-expF-bound}
	Assume that $\xi=\chi e^F$ with $F$ Lipschitz and bounded and $\chi$ Lipschitz and $0\le \chi\le 1$. Then 
	\begin{equation}\label{eq_L_chi-expF-bound} 
		|\calL_\xi(\varphi,\varphi)| 
			\le \left(L_\chi +L_F\right)^2\frac{\big\|K_2e^{L_F|\cdot|} \big\|_{L^1(\R^3)}}{4\pi^2} \big\|e^{F}\varphi\big\|^2
	\end{equation}
	where $L_F$, respectively $L_\chi$, is the Lipschitz constant of $F$, respectively $\chi$. 
\end{lem}
\begin{rema}\label{rem:finiteness}
For the application to exponential bounds for eigenfunctions it is important that the exponential weight $e^F$ appears only in the form  $e^F\varphi$ in the  r.h.s.\ of \eqref{eq_L_chi-expF-bound} and the rest depends only on the Lipschitz constants of $F$ and $\chi$.
Using the known asymptotics of the modified Bessel--function, $K_2(r)\sim \sqrt{\frac{\pi}{2r}}e^{-r}$ for large $r$ and $K_2(r)\sim \frac{2}{r^2}$ for small $r>0$, {\rm(}e.g., \cite[(9.7.2)]{abramowitz1964handbook} and the remark after \cite[(9.7.4)]{abramowitz1964handbook}, for large $r\in \R$ or  \cite[\S 4.8]{andrews1999} and \cite[(4.12.6)]{andrews1999} for a more detailed discussion{\rm)} one sees that  
\begin{equation*}
  \big\|K_2e^{L_F|\cdot|} \big\|_{L^1(\R^3) }
  = 4\pi \int_0^\infty K_2(r) e^{L_F r}r^2\, dr <\infty 
\end{equation*}
iff $L_F<1$. 
	It is easy to see that any function $F$ of the form 
	\begin{equation*}
		F(x)=F_{\mu,\veps}(x)=\frac{\nu|x|}{1+\veps|x|}
	\end{equation*}
	with $\nu,\veps\ge0$, is subadditive, that is,  $F(x+y)\le F(x)+F(y)$ for all $x,y$. Hence, by the reverse triangle inequality 
	\begin{align*}
		|F(x)-F(y)|\le F(x-y) \le \nu|x-y|
	\end{align*}
	which shows that $F_{\nu,\veps}$ it is Lipschitz continuous with constant 
	$L_{F_{\nu,\veps}}\le \nu$. \\ 
	Furthermore, if 
	$\chi$ is Lipschitz, its scaled version 
	\begin{equation}
		\chi_R(x)= \chi(x/R)
	\end{equation}
	for $R>0$, is Lipschitz with constant $L_{\chi_{R}}= L_\chi/R$. Such a choice for $F$ and $\chi$ makes the factor $(L_F+L_\chi)^2$ as small as one wishes and taking the limit $\veps \to  0+$ allows to recover the exponentially growing weight $e^{F_{\nu,0}}=e^{\nu|\cdot|}$. 
\end{rema}
\begin{proof}
  Lemma \ref{lem_ims_extended} gives 
  \begin{align}\label{eq_awesome1}
  	|\calL_\xi(\varphi,\varphi)| 
  		\le  \frac{1}{4\pi^2} \int_{\R^{6}} \frac{K_2(|x-y|)}{|x-y|^2}(\xi(x)-\xi(y))^2 |\varphi(x)| |\varphi(y)| dy dx
  \end{align}
  Since 
  \begin{align}
  	\xi(x)-\xi(y) &= \chi(x)e^{F(x)}-\chi(y)e^{F(y)} \nonumber\\
  		&= (\chi(x)-\chi(y))e^{F(x)} + \chi(y)\big( e^{F(x)}- e^{F(y)} \big) \label{eq_difference1}\\
  		&=  \chi(x)\big( e^{F(x)}- e^{F(y)}  \big) 
  			+ \big( \chi(x)-\chi(y) \big) e^{F(y)}
  			\label{eq_difference2}
  \end{align}
  and averaging \eqref{eq_difference1} and \eqref{eq_difference2} one sees 
  \begin{align}\label{eq_awesome2}
  	\xi(x)-\xi(y)& = \frac{1}{2}\big( \chi(x)-\chi(y) \big) \big( e^{F(x)} + e^{F(y)} \big) 
  		+ \frac{1}{2}\big( \chi(x)+\chi(y) \big) \big( e^{F(x)} - e^{F(y)} \big) \nonumber \\
  		&= \big( \chi(x)-\chi(y) \big) \cosh\left( \frac{F(x)-F(y)}{2} \right) e^{\frac{1}{2}F(x)}e^{\frac{1}{2}F(y)} 
  		\nonumber\\ 
  		&\phantom{+~}+ \big( \chi(x)+\chi(y) \big) \sinh\left( \frac{F(x)-F(y)}{2} \right)e^{\frac{1}{2}F(x)}e^{\frac{1}{2}F(y)}
  \end{align}
  Now we note that due to the subadditivity of $F$ we have
  \begin{equation*}
  	F(x)-F(y)\le |F(x)-F(y)|\le F(x-y) \le L_F|x-y|
  \end{equation*} 
  and 
  \begin{equation*}
  	|\chi(x)-\chi(y)|\le L_\chi|x-y|. 
  \end{equation*}
  Moreover,  
  \begin{align*}
  	|\sinh(a)|= \sinh(|a|)=  \frac{1}{2}e^{|a|}\big(1-e^{-2|a|}  \big) \le |a|e^{|a|}, 
  \end{align*} 
  thus 
  \begin{equation*}
  	\sinh\left( \frac{F(x)-F(y)}{2} \right)\le \frac{L_F|x-y|}{2}e^{L_F|x-y|/2}
  \end{equation*} and 
  \begin{equation*}
  	  \cosh\left( \frac{F(x)-F(y)}{2} \right) \le e^{L_F|x-y|/2}\, .
  \end{equation*}
  Hence from \eqref{eq_awesome2} we get the bound 
  \begin{align*}
  	|\xi(x)-\xi(y)|\le 
  	\left(L_\chi|x-y| +L_F|x-y|\right) e^{L_F|x-y|/2}e^{\frac{1}{2}F(x)}e^{\frac{1}{2}F(y)}
  \end{align*}
  and using this in \eqref{eq_awesome1} yields  
    \begin{align}\label{eq_awesome3}
  	& |\calL_\xi(\varphi,\varphi)| 
  		\le  \frac{\left(L_\chi +L_F\right)^2}{4\pi^2} \int_{\R^{6}} K_2(|x-y|)e^{L_F|x-y|}
  			 |e^{F(x)}\varphi(x)| |e^{F(y)}\varphi(y)| dy dx 
  \end{align}
  Since the Bessel--function $K_2$ is positive  
  \begin{align*}
  	\int_{\R^{6}} &K_2(|x-y|)e^{L_F|x-y|}
  			 |e^{F(x)}\varphi(x)| |e^{F(y)}\varphi(y)| dy dx \\
  		&\le   \frac{1}{2}\int_{\R^{6}} K_2(|x-y|)e^{L_F|x-y|} 
  		\left(
  			 |e^{F(x)}\varphi(x)|^2+ |e^{F(y)}\varphi(y)|^2\right) dy dx \\
  		&= \big\|K_2e^{L_F|\cdot|} \big\|_{L^1(\R^3) }
  			\big\|e^{F}\varphi\big\|^2\, ,
  \end{align*}
  thus \eqref{eq_awesome3} yields \eqref{eq_L_chi-expF-bound}. 
\end{proof}
For our derivation of upper and lower bounds to the van 
der Waals energy, we also need an additional 
refinement, which is our main tool for showing 
that the localization error is exponentially small,
see Section \ref{sec_below_02}. 

\begin{lem}\label{lem_loc_01}
  Let $\chi:\R^3\to [0,1]$ be 
  Lipschitz continuous cut--off function which varies only on the transition region 
  $\Omega\subset \R^3$, i.e., $\chi(x)\in\{0,1\}$ for all $x\not\in \Omega$. Given $d>0$ let 
  $\Omega_d= \{x\in\R^3: \dist(x,\Omega)\le d\}$. 
Then 
  \begin{align}
  	|\calL_\chi(\varphi,\varphi)|\le   
  		C \left( L_\chi^2\Vert\Theta_d \varphi\Vert^2  + \frac{e^{-d/2}}{d^2}\Vert \varphi \Vert^2 \right)
  \end{align}
  for all $R>0$, where $\Theta_d=\id_{\Omega_d}$ is a cut--off function corresponding to a slightly enlarged region where $\chi$ varies and the constant $C$ depends only $ \Vert K_2(|\cdot|)e^{|\cdot|/2}\Vert_{L^1(\R^3)}$.
\end{lem} 
\begin{proof}
  To prove the Lemma, it is convenient to split the integral into two regions,  
  \begin{equation*}
  	A_d= \{ (x,y)\in \R^3\times\R^3:\, |x-y|< d \} 
  \end{equation*}
  and its complement 
  \begin{equation*}
  	A_d^c= \{(x,y)\in \R^3\times\R^3:\, |x-y|\ge  d\}\, .
  \end{equation*}
  Note that if $x\not\in\Omega_d$ and $|x-y|<d$, then $\chi(x)=0$ implies $\chi(y)=0$ and $\chi(x)=1$ implies $\chi(y)=1$. Thus  
  \begin{equation}
    \begin{split}
  	  	\big(\chi(x)-\chi(y)\big)^2 
  	  &= \big(\chi(x)-\chi(y)\big)^2 \id_{A_d}(x,y)\Theta_d(x) \Theta_d(y)  \\
  	  &\phantom{=~~}+  \big(\chi(x)-\chi(y)\big)^2 \id_{A_d^c}(x,y) \, .
    \end{split}
  \end{equation}
  By assumption,  $0\le \chi_R\le 1$ and $\chi$ is Lipschitz continuous with Lipschitz constant 
  $L_\chi$.  Thus 
  $ |\chi(x) -\chi(y)| \le \min\left(L_\chi|x-y|, 2 \right)$
  for all $x,y\in\R^3$. Hence  
  \begin{equation}
  \begin{split}
  	 &\frac{K_2(|x-y|)}{|x-y|^2}\big(\chi(x)-\chi(y)\big)^2 \\
  	 &\quad\le L_\chi^2 K_2(|x-y|) \id_{A_d}(x,y)\Theta_d(x) \Theta_d(y)  
  	   + \frac{4}{d^2}K_2(|x-y|) \id_{A_d^c}(x,y)  
  \end{split}
  \end{equation}
  Using this  bounds in the definition \eqref{eq_loc_error} of $L_\chi$ one sees 
  \begin{equation}
  	|\calL_{\chi}(\varphi,\varphi)|
  		\le \frac{L_\chi^2}{4\pi^2} I_1 
  			+ \frac{1}{\pi^2d^2} I_2
  \end{equation}
  with 
  \begin{align*}
  	I_1&=\iint\limits_{A_d} K_2(|x-y|) \Theta_d(x)|\varphi(x)|\Theta_d(y)|\varphi(y)| \, dydx\\
  	&\le \frac{1}{2}\iint_{\R^3\times\R^3} K_2(|x-y|) \left( |\Theta_d(x)\varphi(x)|^2 + |\Theta_d(y)\varphi(y)|^2 \right)\, dydx \\
  	&= \Vert K_2(|\cdot|)\Vert_{L^1(\R^3)} \Vert\Theta_d\, \varphi \Vert^2\, , 
  \end{align*}
  since $K_2$ is positive.  
  Similarly, using also $|x-y|\ge d$ on $A_d^c$,  
    \begin{align*}
  	I_2&=\iint\limits_{A_d^c} K_2(|x-y|) |\varphi(x)||\varphi(y)| \, dydx\\
  	&\le e^{-d/2}\iint\limits_{\R^3\times\R^3} K_2(|x-y|) e^{|x-y|/2}|\varphi(x)||\varphi(y)| \, dydx\\
  	&\le e^{-d/2} \Vert K_2(|\cdot|)e^{|\cdot|/2}\Vert_{L^1(\R^3)} \Vert\varphi \Vert^2
  		= C e^{-d/2}  \Vert\varphi \Vert^2 \, . \qedhere
  \end{align*} 
  and from Remark \ref{rem:finiteness} we know that 
  $C=  \Vert K_2(|\cdot|)e^{|\cdot|/2}\Vert_{L^1(\R^3)}<infty$.  
\end{proof}
We also have to extend the commutation error bound from Lemma \ref{lem_L_chi-expF-bound} to the many-body case, which needs some modifications, mainly in notation. 
Let $C$ be  cluster, i.e, $C\subset [N]=\{1,2,\ldots,N\}$.  Given $j\in C$ and any $y\in \R^3$ we denote 
by $y^j$ the coordinate in $\R(C)$ with $(y^j)_l= y\delta_{j,l}$, where $\delta_{j,l}$ is 
the Kronecker--delta. That is, if one relabels the coordinates in $\R(C)$ so that $\R(C)= \R^K$, with $K=$ number of particles in the cluster $C$, one has 
$y^j= (0,\ldots,0,y,0,\ldots,0)$ with $y$ in the $j^{\text{th}}$ slot. With this notation we have 

\begin{lem}[Commutation formula for the  multi--particle case]\label{lem_ims-extended-many-body}
	  Let $T=\sum_{k\in C}T_k$, with $T_k= \sqrt{p_k^2+1}-1$, the total kinetic energy operator of the particles in the cluster. For any bounded Lipschitz continuous function $\xi$ and 
	  any $\psi\in H^{1/2}(\R(C))$ we have  
	\begin{equation}\label{eq_ims-extended-many-body}
		\re\las \xi^2\psi, \sum_{k\in C} T_k \psi \ras 
			= \las  \xi \psi, \sum_{k\in C} T_k \xi \psi \ras  
				- L^C_\xi(\psi,\psi) 
	\end{equation} 
	 as quadratic forms, with 
	\begin{equation}\label{eq_commutation_error-many-body}
		L^C_\xi(\varphi,\varphi) = 
	\frac{1}{4\pi^2} \sum_{j\in C}\int_{\R(C)}\int_{\R^3} t(y)(\xi(x)-\xi(x+y^j))^2 \re(\ol{\varphi(x)} \varphi(x+y^j))) dy dx
\end{equation}
and $t(y)=\frac{K_2(|y|)}{|y|^2}$ for $y\in \R^3$. 
\end{lem}
\begin{proof}
  The proof is a straightforward adaptation of the arguments in the proof of Lemma \ref{lem_L_xi-bound}. 
\end{proof}

\begin{lem}[Refined bound on the commutation error, multi--particle case]\label{lem_L_xi-bound-many-body}
	Assume that $\xi=\chi e^F$ with $F$ bounded and Lipschitz and 
	$\chi$ Lipschitz. Then the commutation error $\calL^C_\xi$ from \eqref{eq_ims-extended-many-body} can be bounded by 
	\begin{equation}\label{eq_commutation-error-bound-many-body}
		|\calL^C_\xi(\psi,\psi)| 
			\le KC_{L_F}\left(L_\chi +L_F\right)^2 \big\|e^{F}\psi\big\|^2
	\end{equation}
	where $L_F$, respectively $L_\chi$, is the Lipschitz constant of $F$, respectively $\chi$, $K$ is the number of particles in the cluster $C$, and 
	\begin{equation}
		C_{L}= \frac{\big\|K_2e^{L|\cdot|} \big\|_{L^1}}{4\pi^2}
	\end{equation} 
\end{lem}
\begin{proof}
  As in the proof of Lemma \ref{lem_L_xi-bound}, we have 
  \begin{equation}
  	\begin{split}
  	  &(\xi(x)-\xi(x+y^j))^2 \\
  	  &= \Big[ (\chi(x)-\chi(x+y^j))\cosh\Big( \frac{F(x)-F(x+y^j)}{2} \Big) \\
  	  	&\phantom{+~~~}	 + (\chi(x)+\chi(x+y^j)\sinh\Big( \frac{F(x)-F(x+y^j)}{2} \Big)
  	  	 \Big]^2 e^{F(x)} e^{F(x+y^j)}
  	\end{split}
  \end{equation}
  Since $|\chi(x)-\chi(x+y^j)|\le L_\chi|y|$ and 
  $|F(x)-F(x+y^j)|\le L_F|y|$, we can conclude as in the proof of Lemma \ref{lem_L_xi-bound} to get \eqref{eq_commutation-error-bound-many-body}. 
\end{proof}

\section{Intercluster interaction in diatomic molecules}\label{sec_app_01}
%
In this part we estimate the term $\las \jbs \psi,I_\bmin \jbs\psi\ras$ which is an important part in the proof of Theorem~\ref{thm_pre_02}. For these estimates we will use  orthogonality relations, which will be proven in Appendix~\ref{sec_app2}. \par 
Denote by $P_n(z), n\in \N, z\in \R$ the $n$-th degree Legendre polynomial, these polynomials are generated by  $(1-2zt+t^2)^{-\frac{1}{2}}$ (see \cite[22.9.12]{abramowitz1964handbook}). More explicitly, for $-1<z<1$ and $|t|<1$ we have 
\begin{equation}
\frac{1}{\sqrt{1-2zt+t^2}}=\sum_{n=0}^\infty P_n (z)t^n.
\end{equation}
Consequently, for $D,h\in \R^3$ with $h<D$ we get
\begin{equation}\label{eq_app_01}
\frac{1}{|D-h|}=\sum_{n=0}^\infty P_n\left(\frac{h}{|h|}\cdot \frac{D}{|D|}\right)\frac{|h|^n}{|D|^{n+1}}.
\end{equation}
In particular for $n=2,3,4$ we have
\begin{equation}
P_2(z)=\frac{1}{2}(3z^2-1),\quad 
P_3(z)=\frac{1}{2}(5z^3-3z),\quad
P_4(z)=\frac{1}{8}(35z^4-30z^2+3).
\end{equation}
Let $\beta$ be a decomposition into two clusters $\cc_1$ and $\cc_2$ with $\sharp\cc_1=Z_1$ and $\sharp\cc_2=Z_2$. The intercluster interaction is given by
\begin{equation}\label{eq_app_11}
I_\beta(x)=-\sumi \frac{e^2Z_2}{|x_i-X_2|} -\sumj\frac{e^2Z_1}{|x_j-X_1|}+\sumij \frac{e^2}{|x_i-x_j|}+\frac{e^2 Z_1 Z_2}{|D|}.
\end{equation}
For $i_k\in \cc_1$ we define
\begin{equation}\label{eq_app_16}
\mathcal{F}^{(1)}_n(x):=\sumi |x_i|^n P_n\left(\frac{x_i}{|x_i|}\cdot \frac{D}{|D|}\right),
\end{equation}
\begin{equation}\label{eq_app_17}
\mathcal{F}^{(2)}_n(x):=\sumj |x_j|^n P_n\left(\frac{-x_j}{|x_j|}\cdot \frac{D}{|D|}\right)
\end{equation}
and
\begin{equation}\label{eq_app_18}
\mathcal{F}^{(3)}_n(x):=\sumij |x_i-x_j|^n P_n\left(\frac{x_i-x_j}{|x_i-x_j|}\cdot \frac{D}{|D|}\right).
\end{equation}
Let
\begin{equation}\label{eq_app_04}
f_n(x):= -e^2Z_2 \mathcal{F}^{(1)}_n(x)-e^2Z_1\mathcal{F}^{(2)}_n(x)+e^2\mathcal{F}^{(3)}_n(x).
\end{equation}
Note that  for $n=2,3$ the functions defined in \eqref{eq_app_04} are the same as $f_2$ and $f_3$ in \eqref{eq:def-fA} and \eqref{eq:def-fB}. Observe that $x\in \operatorname{supp}(J_\beta)$ implies $|x_i-X_1|<<\D$ for $i\in \cc_1$ and $|x_j-X_2|<<\D$ for $j\in \cc_2$ and the Taylor series of $I_\beta$ converges. This yields 
\begin{equation}\label{eq_app_09}
I_\beta(x)=\sum_{n=0}^\infty \frac{ \U^* f_n(x)}{\D^{n+1}}+\frac{e^2Z_1Z_2}{\D} \qquad \forall x\in \operatorname{supp} (J_\beta)
\end{equation}
where $\U$ is defined in \eqref{def_ubeta}.
%
%
\begin{lem}\label{lem_app_01}For any decomposition $\beta \in \Da$
\begin{equation}
\frac{ \U^* f_0(x)}{\D}+\frac{e^2Z_1Z_2}{\D}=0 \quad \mbox{and}\quad  f_1(x)=0.
\end{equation}
\end{lem}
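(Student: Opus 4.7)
The plan is to evaluate $f_0$ and $f_1$ directly using the explicit forms of the Legendre polynomials $P_0\equiv 1$ and $P_1(z)=z$. For $n=0$, plugging $P_0\equiv 1$ into \eqref{eq_app_16}--\eqref{eq_app_18} reduces $\mathcal{F}_0^{(i)}$ to pure combinatorial counts, namely $\mathcal{F}_0^{(1)} = \sharp\cc_1$, $\mathcal{F}_0^{(2)}=\sharp\cc_2$, and $\mathcal{F}_0^{(3)} = \sharp\cc_1 \cdot \sharp\cc_2$. Because $\beta\in\Da$, one has $\sharp\cc_k = Z_k$, so \eqref{eq_app_04} collapses to the constant
\begin{equation*}
 f_0 \;=\; -e^2 Z_2 Z_1 - e^2 Z_1 Z_2 + e^2 Z_1 Z_2 \;=\; -e^2 Z_1 Z_2.
\end{equation*}
Being a constant function on $\R^{3N}$, $\U^* f_0(x) = -e^2 Z_1 Z_2$ for every $x$, and the first identity is then just the trivial cancellation $(-e^2 Z_1 Z_2 + e^2 Z_1 Z_2)/|D| = 0$.

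For $n=1$, using $P_1(z) = z$ yields $\mathcal{F}_1^{(1)}(x) = \sumi x_i\cdot e_D$ and $\mathcal{F}_1^{(2)}(x) = -\sumj x_j\cdot e_D$, while splitting $x_i - x_j$ in the double sum gives $\mathcal{F}_1^{(3)}(x) = (\sharp\cc_2)\sumi x_i\cdot e_D - (\sharp\cc_1)\sumj x_j\cdot e_D$. Invoking $\sharp\cc_k = Z_k$ once more, the three contributions to $f_1$ in \eqref{eq_app_04} group into the two pairs
\begin{equation*}
 \bigl(-e^2 Z_2 \sumi x_i\cdot e_D\bigr) + \bigl(+ e^2 Z_2 \sumi x_i\cdot e_D\bigr) \quad\text{and}\quad \bigl(+ e^2 Z_1 \sumj x_j\cdot e_D\bigr) + \bigl(-e^2 Z_1 \sumj x_j\cdot e_D\bigr),
\end{equation*}
each of which vanishes, so $f_1\equiv 0$.

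There is no substantive obstacle here: both statements are elementary algebraic identities. Their content, in the language of the multipole expansion \eqref{eq_app_09}, is the familiar physical fact that for a neutral atomic decomposition the monopole--monopole term of the intercluster interaction exactly cancels the internuclear repulsion $e^2 Z_1 Z_2/|D|$, and the monopole--dipole cross terms vanish because neutral atoms carry no net charge. This is precisely why the leading non-trivial contribution in the expansion starts at the dipole--dipole order $|D|^{-3}$ encoded in $f_2$.
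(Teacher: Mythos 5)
Your proof is correct and follows essentially the same direct computation as the paper: substitute the explicit forms of $P_0$ and $P_1$ into \eqref{eq_app_16}--\eqref{eq_app_18}, use $\sharp\cc_k=Z_k$, and observe the algebraic cancellation in \eqref{eq_app_04}. One minor point in your favor: you correctly use $P_0\equiv 1$, whereas the paper's proof has the obvious typo ``Since $P_0(z)=0$''.
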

\begin{proof}
For $\beta\in \Da$ we have $\sharp\cc_1=Z_1$ and $\sharp\cc_2=Z_2$. Since $P_0(z)=0$, by \eqref{eq_app_16} - \eqref{eq_app_18} we get
\begin{equation}
\mathcal{F}^{(1)}_0(x)=Z_1,\quad \mathcal{F}^{(2)}_0(x)=Z_2,\ \text{and } \mathcal{F}^{(3)}_0(x)=Z_1Z_2.
\end{equation}
By definition of $ f_0$ in \eqref{eq_app_04} this implies
\begin{equation}
\U^* f_0(x)=\U^* (-e^2Z_2Z_1)=-e^2Z_2Z_1
\end{equation}
which proves the first part of the lemma. Since $P_1(z)=z$, writing $e_D:=\frac{D}{\D}$ we have
\begin{equation}\begin{split}
&\mathcal{F}^{(1)}_1(x)=\sumi x_i\cdot e_D,\\
&\mathcal{F}^{(2)}_1(x)=\sumj -x_j\cdot e_D
\end{split}
\end{equation}
and
\begin{equation}
\mathcal{F}^{(3)}_1(x)=\sumij (x_i-x_j)\cdot e_D.
\end{equation}
By definition
\begin{equation}
\U^*  f_1(x)=\U^*\Big(-\sumi e^2 Z_2(x_i\cdot e_D)-\sumj e^2Z_1(-x_j\cdot e_D)+\sumij e^2[(x_i-x_j)\cdot e_D]\Big)=0.
\end{equation}
\end{proof}
In the next lemma we will establish a bound of the remainder in the Taylor expansion of $I_\beta$. Let us define the potential
\begin{equation}\label{eq_below_04}
I^o_\beta(x):= (I_\beta \chi_{J_\beta} )(x)
\end{equation}
where $\chi_{J_\beta}(x)$ is the characteristic function of the support of $J_\beta$.
%
%
\begin{lem}\label{lem_app_06}
Let $\beta\in \Da$, then for any $k\geq 2$ there exists a constant $0<C<\infty$ such that for $x\in \operatorname{supp}(I_\beta^o)$ we have
\begin{equation}\label{eq_app_06}
\Big| I_\beta^o (x)-\sum_{n=2}^{k-1}\frac{\U^* f_n(x)}{\D^{n+1}}\Big|\leq C\frac{\big(d_\beta( x) \big)^k}{\D^{k+1}}
\end{equation}
where
\begin{equation}
d_\beta( x) :=\Big(\sum_{l=1,2}\sum_{i\in \cc_l}|x_i-X_l|^2\big) ^\frac{1}{2}.
\end{equation}
\end{lem}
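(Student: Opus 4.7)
The plan is to obtain \eqref{eq_app_06} as a direct consequence of the termwise multipole expansion already encoded in \eqref{eq_app_09}, together with the monopole/dipole cancellations in Lemma~\ref{lem_app_01}.

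First I would recall the Legendre generating-function identity \eqref{eq_app_01},
\begin{equation*}
\frac{1}{|D-h|}=\sum_{n=0}^\infty P_n\Big(\tfrac{h\cdot D}{|h||D|}\Big)\frac{|h|^n}{|D|^{n+1}}\qquad (|h|<|D|),
\end{equation*}
and write each Coulomb summand in $I_\beta(x)-e^2Z_1Z_2/\D$ in the form $1/|D-h|$ for a suitable vector $h$. Setting $y_i:=x_i-X_l$ for $i\in\cc_l$, the appropriate choices are $h=y_i$ for the electron-nucleus interaction of $i\in\cc_1$ with $X_2$; $h=-y_j$ for the interaction of $j\in\cc_2$ with $X_1$; and $h=y_i-y_j$ for the electron-electron term $(i,j)\in\cc_1\times\cc_2$. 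A routine application of Cauchy-Schwarz yields $|h|\le\sqrt 2\,d_\beta(x)$ in every case.

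Second, I would use $|P_n(z)|\le 1$ for $z\in[-1,1]$ to estimate the Taylor remainder:
\begin{equation*}
\Big|\frac{1}{|D-h|}-\sum_{n=0}^{k-1}P_n\Big(\tfrac{h\cdot D}{|h||D|}\Big)\frac{|h|^n}{|D|^{n+1}}\Big|\le \frac{|h|^k}{|D|^{k+1}\,(1-|h|/|D|)}.
\end{equation*}
On $\supp(I_\beta^o)\subset\supp(J_\beta)$ the construction \eqref{eq_loc_102}-\eqref{eq_loc_02} of the cutoff $J_\beta$ enforces $|y_i|\le\D^{3/4}/4$, so $|h|/\D\le 1/2$ for $\D$ sufficiently large and the geometric factor is uniformly bounded by $2$. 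Summing the resulting pointwise bounds over the $Z_1+Z_2+Z_1Z_2$ Coulomb summands, and identifying the polynomial partial sums with $\U^* f_n$ through \eqref{eq_app_16}-\eqref{eq_app_04}, yields
\begin{equation*}
\Big|I_\beta^o(x)-\frac{e^2Z_1Z_2}{\D}-\sum_{n=0}^{k-1}\frac{\U^* f_n(x)}{\D^{n+1}}\Big|\le C\,\frac{d_\beta(x)^k}{\D^{k+1}}
\end{equation*}
with a constant $C$ depending only on $k$ and on $N$.

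Finally, Lemma~\ref{lem_app_01} states that $\U^* f_0(x)/\D=-e^2Z_1Z_2/\D$ and $f_1\equiv 0$, so the $n=0$ contribution cancels exactly the nucleus-nucleus term and the $n=1$ contribution vanishes, producing \eqref{eq_app_06} with the sum starting at $n=2$. The only delicate point in the strategy is the uniform control of the ratio $|h|/\D$ on $\supp(I_\beta^o)$ to justify termwise convergence of the multipole series; this is built into the choice of cutoff radius $\D^{3/4}$ in the definition of $J_\beta$, which is asymptotically negligible compared to $\D$ as $\D\to\infty$.
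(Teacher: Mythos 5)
Your proof is correct and follows the same route as the paper: expand each Coulomb summand of $I_\beta^o$ via the Legendre generating function, bound the tail of the multipole series using $|P_n|\le 1$, and invoke the $n=0,1$ cancellations from Lemma~\ref{lem_app_01}. If anything, your version is slightly more careful than the printed proof, which appeals loosely to a ``Lagrange form remainder'' without tracking the geometric factor $(1-|h|/\D)^{-1}$; you fix this cleanly by noting that $\supp(I_\beta^o)\subset\supp(J_\beta)$ forces $|y_i|\lesssim\D^{3/4}$, hence $|h|/\D\le 1/2$ for $\D$ large, which keeps the tail uniformly summable.
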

\begin{rema}\label{rema_app_01}
Notice that $d_\beta(\cdot )$ characterizes how far away the particles in $\cc_1$ and $\cc_2$ are from their respective nucleus. This norm does not depend on the distance $\D$ between the nuclei. In particular
\begin{equation}\label{eq_app_03}
\U d_\beta( \cdot )=\Vert \cdot \Vert.
\end{equation}
\end{rema}
\begin{proof}
Note that for $k=2$ the sum on the l.h.s of \eqref{eq_app_06} is the empty sum which, by convention, is zero.
The $k$-th summand of the Taylor expansion of $I^o_\beta$ is
\begin{equation}
\begin{split}
&\U^*\Big(-e^2Z_2\sumi P_k\left(\frac{x_i}{|x_i|}\cdot e_D\right) \frac{|x_i|^k}{|D|^{k+1}}-e^2Z_1\sumj P_k\left(\frac{-x_j}{|x_j|}\cdot e_D\right) \frac{|x_j|^k}{|D|^{k+1}}\\
&+e^2 \sumij P_k\left(\frac{x_i-x_j}{|x_i-x_j|}\cdot e_D\right) \frac{|x_i-x_j|^k}{|D|^{k+1}}\Big).
\end{split}
\end{equation}
We apply the Taylor theorem with a remainder in Lagrange form. Since the Legendre polynomials take values between $-1$ and $1$ on the interval $[-1,1]$, the Lagrange form remainders are bounded above by one. Consequently
\begin{equation}\nonumber
\begin{split}
\Big|I^o_\beta-\sum_{n=2}^{k-1}\frac{\U^* f_n}{\D^{n+1}}\Big|&\leq  \sumi \frac{e^2Z_2|x_i-X_1|^k}{\D^{k+1}}+\sumj\frac{e^2Z_1|x_j-X_2|^k}{\D^{k+1}}\\
&\quad+\sumij \frac{e^2|(x_i-X_1)-(x_j-X_2)|^k}{\D^{k+1}}
\end{split}
\end{equation}
and there exists a constant $C$ such that
\begin{equation}
\Big| I_\beta^o(x) -\sum_{n=2}^{k-1}\frac{ \U^*f_n(x)}{\D^{n+1}}\Big|\leq C\frac{\big(d_\beta( x) \big)^k}{\D^{k+1}} \quad \forall x\in \operatorname{supp} (I^o_\beta). 
\end{equation}
\end{proof}
%
%
\begin{corr}\label{cor_app_03}
Let $\beta\in \Da$ and $\varphi_1,\ \varphi_2\in \ltn$ such that there exists $b>0$ and $A_0$ with
\begin{equation}\label{eq_app_14}
\Vert e^{b|\cdot|}\varphi_2\Vert^2 \leq A_0\Vert \varphi_2\Vert^2.
\end{equation}
Then for any $k\geq 2$ there exists a constant $C_k(b,A_0)<\infty$  such that 
\begin{equation}\label{eq_app_13}
\Big|\Big\las \U^*\varphi_1, \Big(I^o_\beta -\sum_{n=2}^{k-1} \frac{\U^*f_n}{|D|^{n+1}} \Big)\U^*\varphi_2\Big\ras\Big| \leq C_k\D^{-(k+1)} \Vert \varphi_1\Vert \ \Vert \varphi_2 \Vert.
\end{equation}
\end{corr}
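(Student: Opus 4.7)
The strategy is to exploit unitarity of $\U$ together with the pointwise bound of Lemma~\ref{lem_app_06}. Set $V := I_\beta^o - \sum_{n=2}^{k-1} |D|^{-(n+1)}\U^* f_n$, which is a multiplication operator. The first step is the identity
\begin{equation*}
\las \U^*\varphi_1, V \U^*\varphi_2\ras = \las \varphi_1, \U V \U^* \varphi_2\ras,
\end{equation*}
combined with the observation that $\U V \U^*$ acts as multiplication by $V$ precomposed with the shift defining $\U$. In particular, when $y$ lies in the shifted support of $J_\beta$ (a region contained in $\{|y|\leq C|D|^{3/4}\}$), Lemma~\ref{lem_app_06} combined with the identity \eqref{eq_app_03} of Remark~\ref{rema_app_01} yields the clean pointwise bound $|(\U V\U^*)(y)|\leq C|y|^k/|D|^{k+1}$.

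With this pointwise bound, Cauchy--Schwarz gives
\begin{equation*}
|\las \varphi_1,\U V\U^* \varphi_2\ras|\leq \Vert\varphi_1\Vert\,\Vert(\U V\U^*)\varphi_2\Vert,
\end{equation*}
and hypothesis \eqref{eq_app_14} allows the polynomial weight $|y|^{2k}$ to be absorbed into the exponential decay of $\varphi_2$. Explicitly, the elementary bound $|y|^{2k}\leq (k/b)^{2k}e^{-2k}\,e^{2b|y|}$ gives $\Vert |y|^k\varphi_2\Vert^2 \leq (k/b)^{2k}e^{-2k} A_0 \Vert\varphi_2\Vert^2$, from which \eqref{eq_app_13} follows with an explicit constant $C_k(b,A_0)$ of the form $C(k/b)^k e^{-k}\sqrt{A_0}$.

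The only subtle point is the contribution from $y$ outside the shifted support of $J_\beta$: there $I_\beta^o$ vanishes but the polynomial $\sum_{n=2}^{k-1}f_n(y)/|D|^{n+1}$ does not. That region lies in $\{|y|\geq c|D|^{3/4}\}$, where the exponential decay of $\varphi_2$ makes this tail contribution decrease faster than any negative power of $|D|$, so it is absorbed into the $|D|^{-(k+1)}$ bound. The main \emph{obstacle} is therefore really just careful bookkeeping: how $\U$ intertwines with multiplication operators, and the routine splitting of the integration domain into the support of $J_\beta$ and its complement; no analytic input beyond Lemma~\ref{lem_app_06} and the exponential decay of $\varphi_2$ is required.
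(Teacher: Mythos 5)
Your proof is correct and follows the same scheme as the paper's own proof of the corollary: apply Lemma~\ref{lem_app_06}, transfer to the shifted coordinates via the identity $\U d_\beta(\cdot)=\Vert\cdot\Vert$ from Remark~\ref{rema_app_01}, then finish with Cauchy--Schwarz and the exponential-weight hypothesis \eqref{eq_app_14}. What you add -- and what the paper's one-line proof actually glosses over -- is the split of the integration domain into the shifted support of $J_\beta$ and its complement. The split is not cosmetic: Lemma~\ref{lem_app_06} furnishes the pointwise bound $C\,d_\beta(x)^k/\D^{k+1}$ \emph{only} for $x\in\supp(I_\beta^o)$, and off that support the integrand reduces to the bare polynomial $\sum_{n\le k-1}\U^*f_n/\D^{n+1}$, which at $d_\beta(x)\sim\D^{3/4}$ is of size comparable to $\D^{-n/4-1}$ and therefore does \emph{not} obey the claimed $d_\beta(x)^k/\D^{k+1}\sim\D^{-k/4-1}$ bound when $n<k$. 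Thus the paper's first displayed inequality is not a direct consequence of the lemma as stated. Your observation that the offending region lies in $\{|y|\gtrsim\D^{3/4}\}$, where \eqref{eq_app_14} makes the weighted norm $\Vert\,|y|^{k-1}\varphi_2\Vert_{L^2(\{|y|>c\D^{3/4}\})}$ superpolynomially small in $\D$, is exactly what closes this gap. The presentation is slightly nonlinear (you state the global Cauchy--Schwarz step before announcing the domain split), but the argument underneath is complete and supplies a justification the paper leaves implicit.
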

\begin{proof}To prove \eqref{eq_app_13} we apply Lemma~\ref{lem_app_06} to get
\begin{equation}
\Big|\Big\las \U^*\varphi_1, \Big(I^o_\beta -\sum_{n=2}^{k-1} \frac{\U^*f_n}{|D|^{n+1}} \Big)\U^*\varphi_2\Big\ras\Big|\leq C \big|\las \U^*\varphi_1 , \frac{\big(d_\beta( \cdot ) \big)^k }{\D^{k+1}} \U^* \varphi_2 \ras \big|
\end{equation}
and by \eqref{eq_app_03} we arrive at
\begin{equation}
C \big|\las \U^*\varphi_1 , \frac{\big(d_\beta( \cdot ) \big)^k }{\D^{k+1}} \U^* \varphi_2 \ras \big|=C \big|\las\varphi_1 , \frac{\Vert \cdot\Vert^k }{\D^{k+1}}\U \U^* \varphi_2 \ras \big|.
\end{equation}
Now \eqref{eq_app_13} follows, using the Cauchy-Schwarz inequality and the exponential decay of $\varphi_2$ from assumption \eqref{eq_app_14}.
\end{proof}
%
%
To simplify the notation in the remainder of the section, we set
\begin{equation}
\tilde \phi:= \U^* \phi,\quad \tilde \phi_2 := \U^*\phiB,\quad \tilde \phiC :=\U^* \phiC \text{ and } \tilde g := \U^* g.
\end{equation}
%
%
\begin{lem}\label{lem_app_04} 
Let $\phiA,\phiB,\phiC,g \in \Hfa$ and $\gamA,\gamB, \gamC\in \C$ be as defined in \eqref{eq_below_49}-\eqref{eq_below_50}. For any fixed $\delta>0$, there exist $C>0,D_0>0$ such that for $\D>D_0$
\begin{equation}
\begin{split}
\las \jbs\psi,I_\beta \jbs\psi\ras&\geq  2|D|^{-6}\re \gamA\overline{\gamB}\Vert \phiB\Vert_1^2+2|D|^{-8}\re \gamA \overline{\gamC}\Vert \phiC\Vert^2_1  \\
&\quad -C\frac{|\gamA|^2+|\gamB|^2+|\gamC|^2}{|D|^{10}}-\delta\Vert g\Vert^2.
\end{split}
\end{equation}
\end{lem}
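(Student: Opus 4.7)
\medskip

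\noindent\textbf{Proof proposal.} The plan is to expand $\langle \jbs\psi, I_\beta \jbs \psi\rangle$ bilinearly in the four components of the decomposition \eqref{eq_below_50} and then exploit two ingredients: the Taylor expansion of $I_\beta$ from Corollary~\ref{cor_app_03}, and the orthogonality relations from Appendix~\ref{sec_app2}. First, since $J_\beta\psi$ is supported on $\supp(J_\beta)$, one may replace $I_\beta$ by $I_\beta^o = I_\beta \chi_{J_\beta}$ (see \eqref{eq_below_04}). Substituting \eqref{eq_below_50} gives ten cross terms:
\begin{equation}\nonumber
\langle \jbs\psi, I_\beta^o \jbs\psi\rangle = \sum_{j,k\in\{\phi,\phi_2,\phi_3,g\}} c_{j,k}\,\langle \U^*\! j, I_\beta^o \U^*\! k\rangle,
\end{equation}
with explicit prefactors $c_{j,k}$ depending on $\gamma_1,\gamma_2,\gamma_3$ and powers of $|D|^{-1}$.

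For each such term I would invoke Corollary~\ref{cor_app_03} to write $I_\beta^o = \sum_{n=2}^{k-1} |D|^{-(n+1)} \U^* f_n + R_k$, with $|\langle\U^* j, R_k \U^* k\rangle|\leq C_k |D|^{-(k+1)}\|j\|\|k\|$ thanks to the exponential decay of $\phi,\phi_2,\phi_3$ (Corollary~\ref{cor_exp_01}). The two leading contributions arise from the pairs $(\phi_2,\phi)$ and $(\phi_3,\phi)$: the $n=2$ coefficient of $\langle \U^*\phi_2, I_\beta^o \U^*\phi\rangle$ equals $|D|^{-3}\langle \phi_2, f_2\phi\rangle = |D|^{-3}\|\phi_2\|_1^2$ by the very definition \eqref{eq_below_58} of $\phi_2$, and similarly the $n=3$ coefficient of $\langle \U^*\phi_3, I_\beta^o \U^*\phi\rangle$ gives $|D|^{-4}\|\phi_3\|_1^2$. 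Multiplied by the prefactors, these produce the two explicit leading terms on the right-hand side of the lemma.

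The remaining contributions (the pure $|\gamma_1|^2$, $|\gamma_2|^2$, $|\gamma_3|^2$ terms and the mixed $\gamma_2\overline{\gamma_3}$ term) must be shown to be of order $|D|^{-10}$. Here the orthogonality relations in Appendix~\ref{sec_app2} enter: the $SO(3)$-symmetry of $\phi$ (since it transforms under the $\ell=0$ irreducible representation by Condition 2) of Theorem~\ref{thm_pre_02}) forces $\langle\phi, f_n\phi\rangle=0$ for $n=2,3$, and combined with parity one gets $\langle\phi_2,f_2\phi_2\rangle=\langle\phi_3,f_3\phi_3\rangle=0$, $\langle\phi_2,f_3\phi\rangle=\langle\phi_3,f_2\phi\rangle=0$, $\langle\phi_2,f_2\phi_3\rangle=0$, etc. These vanishings push the first non-zero coefficient in each of these cross terms to a sufficiently high power of $|D|^{-1}$; together with the remainder bound of Corollary~\ref{cor_app_03} applied with $k$ large enough, this collects into the error $-C(|\gamma_1|^2+|\gamma_2|^2+|\gamma_3|^2)|D|^{-10}$.

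Finally, the four terms involving the orthogonal remainder $g$ (namely $\gamma_1\langle g, I_\beta^o \U^*\phi\rangle$, $|D|^{-3}\gamma_2\langle g, I_\beta^o\U^*\phi_2\rangle$, $|D|^{-4}\gamma_3\langle g,I_\beta^o\U^*\phi_3\rangle$, and $\langle g, I_\beta^o g\rangle$) are handled as follows. Since $\langle g,\phi_2\rangle_1 = \langle g,\phi_3\rangle_1=0$, one has $\langle g, f_2\phi\rangle = \langle g, f_3\phi\rangle=0$, so the leading Taylor term in $\gamma_1\langle g, I_\beta^o \U^*\phi\rangle$ vanishes, leaving a contribution of order $|D|^{-5}$. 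Bounding each $g$-containing term by Cauchy-Schwarz with a weighted split $2ab\leq \delta a^2+ \delta^{-1}b^2$ absorbs the $g$-factor into $\delta\|g\|^2$ and puts the residual $|\gamma_j|^2$ factor at order $|D|^{-10}$; the last term $\langle g,I_\beta^o g\rangle$ is controlled by $\|I_\beta^o\|_\infty\|g\|^2 \leq C|D|^{-3}\|g\|^2 \leq \delta\|g\|^2$ for $|D|$ large. The main obstacle in carrying out the plan is the combinatorial bookkeeping in the previous paragraph: one must list all the orthogonality identities needed, verify each from the rotational and permutational symmetries of $\phi$ (Appendix~\ref{sec_app2}), and check that the surviving first non-vanishing Taylor coefficient in every cross term is indeed of high enough order to fit into $|D|^{-10}$.
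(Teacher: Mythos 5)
Your proposal follows essentially the same route as the paper's own proof: replace $I_\beta$ by $I_\beta^o$, expand the quadratic form bilinearly over $\{\phi,\phi_2,\phi_3,g\}$, Taylor expand $I_\beta^o$ via Corollary~\ref{cor_app_03}, invoke the orthogonality relations of Appendix~\ref{sec_app2} to kill the low-order coefficients, and absorb the $g$-terms by Young's inequality into $\delta\|g\|^2$. The one place where the paper takes a slightly cleaner shortcut is the diagonal term $|\gamma_1|^2\langle\tilde\phi,I_\beta^o\tilde\phi\rangle$: carrying no inverse power of $|D|$, it must itself be $\mathcal{O}(|D|^{-10})$, so your scheme would need $\langle\phi,f_n\phi\rangle=0$ up to $n=8$, not just $n=2,3$ as you write. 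The $SO(3)$ argument you invoke does in fact give this for all $n\ge 1$ (each $f_n$ lies in the $\ell=n$ sector while $|\phi|^2$ is in $\ell=0$), but the paper sidesteps the term-by-term bookkeeping entirely by applying Newton's theorem (\cite[Thm.~9.7]{lieb2001analysis}) to the spherically symmetric one-electron densities guaranteed by Condition~2), which makes $\langle\tilde\phi,I_\beta^o\tilde\phi\rangle$ vanish exactly. A minor observation: some of the orthogonality identities you list, e.g.\ $\langle\phi_3,f_3\phi_3\rangle=0$ and $\langle\phi_2,f_2\phi_3\rangle=0$, are not actually needed, since the prefactors $|\gamma_3|^2|D|^{-8}$ and $\gamma_2\overline{\gamma_3}|D|^{-7}$ together with the crude remainder bound $|\langle\tilde\varphi_1,I_\beta^o\tilde\varphi_2\rangle|\le C|D|^{-3}\|\varphi_1\|\|\varphi_2\|$ already place those contributions below $|D|^{-10}$. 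The relations that do matter and that the paper records in Lemma~\ref{lem_app_03B} are exactly $\langle\phi_2,f_n\phi\rangle=0$ for $n=3,4,5$ and $\langle\phi_3,f_n\phi\rangle=0$ for $n=2,4$, together with $\langle\phi_2,f_2\phi_2\rangle=0$ (Lemma~\ref{lem_app_08}) for the $|\gamma_2|^2|D|^{-6}$ term. With these filled in, your sketch is correct.
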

\begin{proof}
Note that by definition of $I_\beta^o$ in \eqref{eq_below_04} one has  
\begin{equation}\label{eq_below_03}
\las J_\beta \psi,I_\beta J_\beta \psi\ras = \las J_\beta \psi,I_\beta^o J_\beta \psi\ras,
\end{equation}
and, according to \eqref{eq_below_50},
\begin{equation}
\begin{split}
J_\beta \psi&=\U^* \big(\gamA \phiA+\D^{-3}\gamB\phiB + \D^{-4}\gamC \phiC +g\big)\\
&= \gamA \tilde\phiA+\D^{-3}\gamB\tilde\phiB + \D^{-4}\gamC \tilde\phiC +\tilde g.
\end{split}
\end{equation}
Using this we can split the expression on the r.h.s of \eqref{eq_below_03} into the terms
\begin{equation}\label{eq_below_16}
\begin{split}
&\las \jbs\psi, I_\beta^o \jbs\psi\ras \\
&=  |\gamA|^2\las \tilde\phiA,I_\beta^o\tilde \phiA\ras +\frac{2\re \gamA \overline{\gamB}}{|D|^3}\las \tilde \phiB,I_\beta^o \tilde \phiA\ras+\frac{2\re \gamA\overline{\gamC}}{|D|^4}\las \tilde \phiC,I_\beta^o \tilde \phiA\ras   \\
&\quad +2\re \gamA\las \tilde g,I_\beta^o \tilde \phiA\ras+\frac{|\gamB|^2}{|D|^6}\las\tilde  \phiB,I_\beta^o \tilde \phiB\ras + \frac{2\re \gamB \overline{\gamC}}{|D|^7}\las \tilde \phiC,I_\beta^o \tilde \phiB\ras  \\
&\quad +\frac{2\re \gamB}{|D|^3}\las\tilde  g,I_\beta^o \tilde \phiB\ras+\frac{|\gamC|^2}{|D|^8}\las \tilde \phiC,I_\beta^o\tilde  \phiC\ras +\frac{2 \re \gamC}{|D|^4} \las\tilde  g,I_\beta^o \tilde \phiC\ras + \las \tilde g,I_\beta^o \tilde g\ras\\
&= B_1+B_2+B_3+\las\tilde  g, I_\beta^o\tilde  g \ras, 
\end{split}
\end{equation}
where
\begin{equation}\label{eq_below_37}
B_1:=|\gamA|^2\las \tilde \phiA,I_\beta^o \tilde \phiA\ras + \frac{2 \re \gamA \overline{\gamB}}{|D|^3}\las \tilde \phiB,I_\beta^o \tilde \phiA\ras + \frac{2 \re \gamA \overline{\gamC}}{|D|^4}\las \tilde \phiC,I_\beta^o \tilde \phiA\ras 
\end{equation}
\begin{equation}\label{eq_below_38}
B_2:= \frac{|\gamB|^2}{|D|^6}\las \tilde \phiB,I_\beta^o \tilde \phiB\ras+\frac{2 \re \gamB \overline{\gamC}}{|D|^7}\las \tilde \phiC,I_\beta^o \tilde \phiB\ras +\frac{|\gamC|^2}{|D|^8}\las \tilde \phiC,I_\beta^o \tilde \phiC\ras 
\end{equation}
and
\begin{equation}
B_3:=2\re \gamA\las \tilde g,I_\beta^o \tilde \phiA\ras + \frac{2\re \gamB}{\D^3}\las \tilde g,I_\beta^o \tilde \phiB\ras + \frac{2 \re \gamC}{\D^4} \las\tilde  g, I_\beta^o \tilde \phiC\ras.
\end{equation}
In Propositions \ref{prop_below_07}, \ref{prop_below_08} and \ref{prop_below_09} we bound these three terms separately. We obtain
\begin{equation}\label{eq_below_08}
B_1\geq \frac{2 \re \gamA \overline{\gamB}}{|D|^6}\Vert \phiB \Vert_1^2+\frac{2\re \gamA\overline{\gamC}}{|D|^8} \Vert\phiC\Vert_1^2-C\frac{|\gamA|^2+|\gamB|^2+|\gamC|^2}{|D|^{10}},
\end{equation}
\begin{equation}
B_2 \geq  -C\frac{|\gamB|^2+|\gamC|^2}{|D|^{10}}
\end{equation}
and we show that for any $\delta>0$ there exist $C,D_0>0$ such that for all $\D>D_0$ we have
\begin{equation}\label{eq_below_05}
B_3 \geq -C\frac{|\gamA|^2}{|D|^{10}}-C\frac{|\gamB|^2}{\D^{12}}-C\frac{|\gamC|^2}{\D^{14}}-\frac{\delta}{2}\Vert g\Vert^2.
\end{equation}
For the term $\las \tilde g,I_\beta^o \tilde g\ras$ in \eqref{eq_below_16} we use the fact that on the support of $J_\beta$, the distance between particles belonging to different subsystems grows proportionally to $\D$. Thus for any $\delta>0$ we can choose $D_0> 0$ such that for $\D>D_0$
\begin{equation}\label{eq_below_28}
\las\tilde  g,I_\beta^o \tilde g\ras \geq -\frac{\delta}{2}\Vert \tilde g \Vert^2 =-\frac{\delta}{2}\Vert g \Vert^2 .
\end{equation}
Collecting the estimates \eqref{eq_below_08} - \eqref{eq_below_28} proves the lemma.
\end{proof}
%
%
\begin{prop}[Estimate of $B_1$]\label{prop_below_07} We have
\begin{equation}
B_1\geq\frac{2 \re \gamA \overline{\gamB}}{|D|^6}\Vert \phiB \Vert_1^2+\frac{2\re \gamA\overline{\gamC}}{|D|^8} \Vert\phiC\Vert_1^2-C\frac{|\gamA|^2+|\gamB|^2+|\gamC|^2}{|D|^{10}}.
\end{equation}
\end{prop}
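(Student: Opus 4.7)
The strategy is to Taylor-expand $I_\beta^o$ on the support of $J_\beta$ and cancel as many coefficients as possible by means of the orthogonality relations established in Appendix~\ref{sec_app2}. By Corollary~\ref{cor_app_03}, for each integer $k\ge 2$ we may replace $I_\beta^o$ by the partial sum $\sum_{n=2}^{k-1} \mathcal{U}_\beta^* f_n/|D|^{n+1}$ inside any matrix element of the form $\langle \mathcal{U}_\beta^*\varphi_1, I_\beta^o \mathcal{U}_\beta^*\varphi_2\rangle$, at the cost of a remainder of order $|D|^{-(k+1)}\|\varphi_1\|\|\varphi_2\|$ provided $\varphi_2$ decays exponentially. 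Corollary~\ref{cor_exp_01} supplies precisely such exponential decay for $\phi$, $\phi_2$, and $\phi_3$, so $B_1$ reduces to a finite linear combination of scalars $\langle\phi_k,f_n\phi\rangle/|D|^{n+j}$ plus a controlled remainder, and the orthogonality lemmas then collapse almost all of these scalars to zero.

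Concretely, for the first summand $|\gamma_1|^2\langle\tilde\phi,I_\beta^o\tilde\phi\rangle$ I would apply Corollary~\ref{cor_app_03} with $k=9$; since $\langle \phi,f_n\phi\rangle = 0$ for $n=2,\dots,8$ by orthogonality, this contribution is bounded by $C|\gamma_1|^2|D|^{-10}$. For the second summand I use Corollary~\ref{cor_app_03} with $k=6$; the identity $\langle\phi_2,f_2\phi\rangle = \langle\phi_2,(\tilde H_\beta-\mu^\alpha)\phi_2\rangle = \|\phi_2\|_1^2$, immediate from $\phi_2 = (\tilde H_\beta-\mu^\alpha)^{-1}f_2\phi$, isolates the leading term $\frac{2\operatorname{Re}\gamma_1\overline{\gamma_2}}{|D|^6}\|\phi_2\|_1^2$, while $\langle\phi_2,f_n\phi\rangle = 0$ for $n=3,4,5$ together with the elementary inequality $|2\operatorname{Re}\gamma_1\overline{\gamma_2}|\le |\gamma_1|^2+|\gamma_2|^2$ absorbs the remainder into $C(|\gamma_1|^2+|\gamma_2|^2)|D|^{-10}$. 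The third summand is handled analogously with $k=5$, employing $\langle\phi_3,f_3\phi\rangle = \|\phi_3\|_1^2$ and $\langle\phi_3,f_n\phi\rangle = 0$ for $n=2,4$.

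The main obstacle is the verification of these orthogonality relations. Under Condition~2 of Theorem~\ref{thm_pre_02}, $\phi$ transforms according to the trivial $\ell=0$ representation of $SO(3)$, whereas $f_n$ is, via the Legendre polynomial $P_n$ appearing in \eqref{eq_app_16}--\eqref{eq_app_04}, a degree-$n$ solid harmonic in the electron coordinates with quantization axis $e_D$; hence under the diagonal $SO(3)$ action $f_n\phi$ lies in the $\ell=n$ irreducible representation. Since $\tilde H_\beta$ commutes with this $SO(3)$ action, $\phi_k = (\tilde H_\beta-\mu^\alpha)^{-1}f_k\phi$ inherits the $\ell=k$ character of $f_k\phi$, and Schur's lemma forces $\langle\phi_k,f_n\phi\rangle = 0$ whenever $k\neq n$ and $\langle\phi,f_n\phi\rangle = 0$ for every $n\ge 1$. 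These are exactly the contents of Lemmas~\ref{lem_app_03A} and~\ref{lem_app_03B}; once invoked, combining the three estimates above yields the claimed lower bound for $B_1$.
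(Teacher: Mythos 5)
Your proposal is correct and follows essentially the same route as the paper's proof: Taylor-expand $I_\beta^o$ via Corollary~\ref{cor_app_03}, isolate the surviving scalars $\las\phi_2,f_2\phi\ras=\Vert\phi_2\Vert_1^2$ and $\las\phi_3,f_3\phi\ras=\Vert\phi_3\Vert_1^2$, kill the cross-terms by orthogonality, and absorb the $2\re\gamma_1\overline{\gamma_k}$ factors with the inequality $2\re\gamma_1\overline{\gamma_k}\le|\gamma_1|^2+|\gamma_k|^2$. There are two small deviations worth flagging. First, for the summand $|\gamma_1|^2\las\tilde\phi,I_\beta^o\tilde\phi\ras$ you use the Taylor expansion to order $k=9$ and orthogonality of $\phi$ to $f_n\phi$ for $n\le 8$, obtaining a bound of order $|D|^{-10}$; the paper instead invokes Newton's theorem for the spherically symmetric one-electron densities (a consequence of Condition~2)) to conclude that this term is exactly zero. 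Both are valid, though Newton's theorem is sharper and avoids having to argue orthogonality up to $n=8$. Second, you derive all the orthogonality relations from the $SO(3)$ decomposition and Schur's lemma (the $\ell=0$ character of $\phi$ and $\ell=n$ character of $f_n\phi$, hence of $\phi_n$). The paper's Lemmas~\ref{lem_app_03A} and~\ref{lem_app_03B} instead establish the relations $\las\phi_2,f_3\phi\ras=\las\phi_2,f_5\phi\ras=\las\phi_3,f_2\phi\ras=\las\phi_3,f_4\phi\ras=0$ by the parity operators $\pca,\pcb,\pcn$ (which requires no $SO(3)$ hypothesis), reserving the $SO(3)$/Schur argument for the single case $\las\phi_2,f_4\phi\ras=0$ in Lemma~\ref{lem_app_02}. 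Your attribution of the $SO(3)$ argument to Lemmas~\ref{lem_app_03A} and~\ref{lem_app_03B} is therefore slightly off — the content you describe is really that of Lemma~\ref{lem_app_02} extended to all cases — but under Condition~2) the $SO(3)$ route proves everything needed, so the proof goes through.
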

\begin{proof}
By Condition 2) of Theorem~\ref{thm_pre_02}, for each $\ia\prec\alpha$, the functions in $P^\ia\Wt$ transform according to the $\ell=0$ degree irreducible representation of $SO(3)$. In particular the one electron densities are spherically symmetric with respect to their associated nucleus, see \cite{ioannis}. Due to mutual orthogonality of the spaces $P^\ia\Wt$ for different $\ia$, all functions in $\Wt$ have this property. 
Applying Newton's theorem (\cite[Theorem 9.7]{lieb2001analysis}) we get
\begin{equation}
|\gamA|^2\las\tilde  \phiA,I_\beta^o \tilde \phiA\ras = 0.
\end{equation}
For the second term of \eqref{eq_below_37}, by Lemma \ref{lem_app_01} and Lemma~\ref{lem_app_06} we get
\begin{equation}
\begin{split}
\las \tilde \phiB,I_\beta^o \tilde \phiA\ras &\geq |D|^{-3}\las \tilde \phiB, (\U^*f_2) \tilde \phiA\ras +|D|^{-4}\las \tilde \phiB, (\U^* f_3)\tilde \phiA\ras+ |D|^{-5}\las  \tilde \phiB ,(\U^* f_4)\tilde \phiA\ras \\
&\quad+|D|^{-6} \las \tilde \phiB,(\U^*f_5) \tilde \phiA\ras-C\Big|\Big\las \tilde  \phiB, \frac{ \big(d_\beta( \cdot)\big)^6}{\D^7} \tilde \phi\Big\ras\Big|.
\end{split}
\end{equation}
Notice that for $l=2,3,4,5$ we have
\begin{equation}\label{eq_app_22}
\las \tilde \phiB , (\U^* f_l) \tilde \phi\ras = \las \U^* \phiB , (\U^* f_l) (\U^* \phi)\ras = \las \phiB,f_l \phi\ras.
\end{equation}
We will use the following orthogonality relations between $\phiB$ and $f_l\phi,\ l=3,4,5$ from Lemma~\ref{lem_app_03B}:
\begin{equation}\label{eq_below_17}
\las  \phiB, f_3\phiA\ras = \las  \phiB, f_4 \phiA\ras= \las  \phiB,  f_5 \phi\ras =0.
\end{equation}
This implies
\begin{equation}
\las \tilde \phiB,I_\beta^o\tilde \phiA\ras \geq \D^{-3} \las \phiB,f_2 \phi\ras-C\Big|\Big\las  \tilde \phiB, \frac{ \big(d_\beta( \cdot)\big)^6}{\D^7} \tilde \phi\Big\ras\Big|.
\end{equation}
Note that by Remark~\ref{rema_app_01} and due to exponential decay of the function $\phi$ we have
\begin{equation}
C \Big|\las  \tilde \phiB, \frac{\big(d_\beta( \cdot)\big)^6}{\D^7} \tilde \phi\ras\Big|\leq C |D|^{-7}\Vert \phi_2\Vert \Vert \phi\Vert.
\end{equation}
By definition of the semi--norm, see \eqref{eq_below_01}, $\las \phiB, f_2 \phi\ras =\Vert \phiB\Vert_1^2$ and since $ 2\re \gamA\overline \gamB\leq |\gamA|^2+|\gamB|^2$ we get
\begin{equation}\label{eq_below_23}
\frac{2\re \gamA \overline{\gamB}}{|D|^3}\las \tilde \phiB, I_\beta^o \tilde \phiA\ras\geq \frac{2 \re \gamA \overline{\gamB}}{|D|^6}\Vert \phiB \Vert_1^2-C\frac{|\gamA|^2+|\gamB|^2}{|D|^{10}}.
\end{equation}
Now we estimate the last term in \eqref{eq_below_37}. Since $\phi$ decays exponentially we can apply Corollary~\ref{cor_app_03} with $k=5$ and proceeding as in \eqref{eq_app_22} yields
\begin{equation}\label{eq_below_18}
\las\tilde  \phi_3,I_\beta^o\tilde  \phi\ras\geq |D|^{-3}\las \phiC,   f_2\phiA\ras +|D|^{-4}\las\phiC,  f_3 \phiA\ras + |D|^{-5}\las \phiC,   f_4\phiA\ras-C\D^{-6}\Vert \phi_2\Vert \ \Vert \phi\Vert.
\end{equation}
According to Lemma~\ref{lem_app_03B} the first and third summand of \eqref{eq_below_18} are zero and we get
\begin{equation}\label{eq_below_19}
\frac{\gagas}{|D|^4}\las \tilde \phiC,I_\beta^o \tilde \phiA\ras \geq  \frac{\gagas}{|D|^8}\Vert \phiC \Vert ^2_1-C\frac{|\gamA|^2+|\gamC|^2}{|D|^{10}}.
\end{equation}
\end{proof}
%
%
\begin{prop}[Estimate of $B_2$]\label{prop_below_08} There exists a constant $C>0$ such that
\begin{equation}
B_2\geq -C\frac{|\gamB|^2+|\gamC|^2}{|D|^{10}}
\end{equation}
\end{prop}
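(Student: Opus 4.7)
The plan is to treat each of the three terms in $B_2$ from \eqref{eq_below_38} in analogy with Proposition~\ref{prop_below_07}: Taylor-expand $I^o_\beta$ via Corollary~\ref{cor_app_03} to an appropriate order, invoke orthogonality relations from Lemma~\ref{lem_app_03B} to kill the leading low-order contributions, and absorb whatever remains into the target bound $-C(|\gamB|^2+|\gamC|^2)/\D^{10}$. The exponential decay of $\phiB$ and $\phiC$ established in Corollary~\ref{cor_exp_01} is exactly what is needed to apply Corollary~\ref{cor_app_03} throughout. The novel feature relative to $B_1$ is that the orthogonalities must now be invoked with $\phiB$ or $\phiC$ sitting in both slots of the inner product.

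\emph{First diagonal piece.} For $\frac{|\gamB|^2}{\D^6}\las\tilde\phiB,I^o_\beta\tilde\phiB\ras$, I would apply Corollary~\ref{cor_app_03} with $\varphi_1=\varphi_2=\phiB$ and $k=5$ to obtain
\[
\las \tilde\phiB,I^o_\beta\tilde\phiB\ras=\sum_{n=2}^{4}\frac{\las \phiB,f_n\phiB\ras}{\D^{n+1}}+\mathcal{O}(\D^{-6}).
\]
The term $\las\phiB,f_3\phiB\ras$ vanishes by global parity: $\phiB$ is even under simultaneous inversion of all electron coordinates (because $f_2\phi$ is), while $f_3$ is odd. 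The more delicate term $\las\phiB,f_2\phiB\ras$ also vanishes; the cleanest way to see this is via cluster-wise parity. By Condition~2) the ground state factors as $\phi=\phi^{(1)}\otimes\phi^{(2)}$ with each atomic factor of even parity, so $|\phiB|^2$ is invariant under independent inversion in either cluster, whereas $f_2$ flips parity in both clusters; the cluster-$1$ integration therefore returns zero. This is exactly one of the orthogonality statements encoded in Lemma~\ref{lem_app_03B}. The surviving term $\las\phiB,f_4\phiB\ras/\D^5$ is bounded by $C/\D^5$ via polynomial growth of $f_4$ and exponential decay of $\phiB$, so this piece altogether contributes $\mathcal{O}(|\gamB|^2/\D^{11})$.

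\emph{Off-diagonal and last diagonal pieces.} For $\frac{2\re\gamB\overline{\gamC}}{\D^7}\las\tilde\phiC,I^o_\beta\tilde\phiB\ras$, Corollary~\ref{cor_app_03} with $k=4$ yields
\[
\las \tilde\phiC,I^o_\beta\tilde\phiB\ras=\frac{\las\phiC,f_2\phiB\ras}{\D^3}+\frac{\las\phiC,f_3\phiB\ras}{\D^4}+\mathcal{O}(\D^{-5}),
\]
in which the first term vanishes by global parity ($\phiC$ odd, $f_2\phiB$ even) and the second is bounded; together with the prefactor $\D^{-7}$ and $2|\gamB\overline{\gamC}|\leq|\gamB|^2+|\gamC|^2$, this contributes $\mathcal{O}((|\gamB|^2+|\gamC|^2)/\D^{11})$. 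For $\frac{|\gamC|^2}{\D^8}\las\tilde\phiC,I^o_\beta\tilde\phiC\ras$ no orthogonality is needed: Corollary~\ref{cor_app_03} with $k=3$ gives $|\las\tilde\phiC,I^o_\beta\tilde\phiC\ras|\leq C/\D^3$, so this contributes $\mathcal{O}(|\gamC|^2/\D^{11})$. Summing the three contributions yields the claimed estimate.

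\emph{Main obstacle.} The technical heart is the orthogonality $\las\phiB,f_2\phiB\ras=0$, which global parity alone does not deliver: one really must use the cluster-wise tensor structure of $\phi$ together with the definite parity of each atomic ground state. This is precisely what Lemma~\ref{lem_app_03B} is designed to supply. All other vanishings in the argument either reduce to global parity or follow from the angular-momentum arguments already used for Lemma~\ref{lem_app_02}.
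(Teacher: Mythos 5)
Your proposal is correct and uses essentially the same strategy as the paper: Taylor-expand $I^o_\beta$ via Corollary~\ref{cor_app_03}, kill the leading term of the $\phiB$--$\phiB$ piece by a parity orthogonality, and absorb the rest. Two minor points are worth flagging. First, you over-engineer: the paper reaches the stated bound with $k=3$ for the $\phiB$--$\phiB$ term and $k=2$ (i.e.\ the crude estimate $|\las\tilde\phiC,I^o_\beta\tilde\phiB\ras|+|\las\tilde\phiC,I^o_\beta\tilde\phiC\ras|\leq C\D^{-3}$) for the other two, so the only orthogonality actually required is $\las\phiB,f_2\phiB\ras=0$; your extra vanishings $\las\phiB,f_3\phiB\ras=0$ and $\las\phiC,f_2\phiB\ras=0$ are correct $\pcn$-parity facts but unnecessary. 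Second, the crucial identity $\las\phiB,f_2\phiB\ras=0$ is Lemma~\ref{lem_app_08}, not Lemma~\ref{lem_app_03B}, and the proof is the $\pca$-parity argument of Lemma~\ref{lem_app_11}/Lemma~\ref{lem_app_08} (an odd number of $\pca$-odd factors $f_2$ in the inner product), applied sector by sector over the induced representations $\alpha'_\beta\prec\alpha$; it does not rely on a literal tensor factorization $\phi=\phi^{(1)}\otimes\phi^{(2)}$, which need not hold when the atomic ground state spaces are degenerate.
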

\begin{proof}
Recall that
\begin{equation}\label{eq_app_08}
B_2= \frac{|\gamB|^2}{|D|^6}\las \tilde \phiB,I_\beta^o \tilde \phiB\ras+\frac{2 \re \gamB \overline{\gamC}}{|D|^7}\las \tilde \phiC,I_\beta^o \tilde \phiB\ras +\frac{|\gamC|^2}{|D|^8}\las \tilde \phiC,I_\beta^o \tilde \phiC\ras. 
\end{equation}
For the first term on the r.h.s. of \eqref{eq_app_08}, since $\phiB$ decays exponentially (see Corollary~\ref{cor_exp_01}), we can use Corollary~\ref{cor_app_03} with $k=3$ and the analogous to \eqref{eq_app_22} we get
\begin{equation}\label{eq_app_101}
\las \tilde \phiB, I_\beta^o \tilde \phiB \ras \geq \D^{-3} \las \phiB, f_2 \phiB\ras-C\D^{-4}\Vert \phiB\Vert^2 .
\end{equation}
By Lemma~\ref{lem_app_02} we have $\las \phiB, f_2 \phiB\ras =0$ which implies
\begin{equation}
\frac{|\gamma_2|^2}{\D^6} \las\tilde  \phiB, I_\beta^o \tilde \phiB \ras\geq -C\frac{|\gamma_2|^2}{\D^{10}}.
\end{equation}
To bound the second and third term on the r.h.s. of \eqref{eq_app_08} we apply Corollary~\ref{cor_app_03} with $k=2$ to get
\begin{equation}\label{eq_below_24}
\frac{2 \re \gamB \overline{\gamC}}{|D|^7}\las \tilde \phiC,I_\beta^o \tilde \phiB\ras \geq -C\frac{|\gamB|^2+|\gamC|^2}{|D|^{10}}
\end{equation}
and
\begin{equation}\label{eq_below_27}
\begin{split}
\frac{|\gamC|^2}{|D|^8}\las \tilde \phiC, I_\beta^o \tilde \phiC\ras 
&\geq -C\frac{|\gamC|^2}{|D|^{11}}.
\end{split}
\end{equation}
\end{proof}
%
%
\begin{prop}[Estimate of $B_3$]\label{prop_below_09}For any fixed $\delta>0$ there exist $C>0$ and $D_0>0$ such that for $\D>D_0$ we have
\begin{equation}\label{eq_below_39}
 B_3  \geq -C\frac{|\gamA|^2}{|D|^{10}}-C\frac{|\gamB|^2}{\D^{12}}-C\frac{|\gamC|^2}{\D^{14}}-\frac{\delta}{2}\Vert g\Vert^2.
\end{equation}
\end{prop}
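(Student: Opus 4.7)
The plan is to bound each of the three terms making up $B_3$ separately, using the Taylor expansion of $I_\beta^o$ provided by Corollary~\ref{cor_app_03} together with Young's inequality in a splitting $2ab\le \varepsilon a^2+\varepsilon^{-1}b^2$ that keeps the $\|g\|^2$ contributions as small as we wish. The key structural fact is that the first term benefits from the orthogonality relations built into the definition of $g$, which allows the two leading contributions in $|D|$ to be eliminated. The other two terms do not share this feature, but there the prefactors $|D|^{-3}$ and $|D|^{-4}$ are already small enough to absorb a direct Cauchy--Schwarz bound.

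For the first term, I would apply Corollary~\ref{cor_app_03} with $k=4$ to get
\begin{equation*}
\Big|\las\tilde g,I_\beta^o\tilde\phi\ras-|D|^{-3}\las g,f_2\phi\ras-|D|^{-4}\las g,f_3\phi\ras\Big|\le C|D|^{-5}\|g\|,
\end{equation*}
using that $\phi$ is exponentially decaying (Corollary~\ref{cor_exp_01}). Since Lemma~\ref{lem_app_03A} ensures that $f_\ell\phi$ is orthogonal to the eigenspace for $\ell=2,3$, one has $(\tilde H_\bmin-\mu^\alpha)\phi_\ell=f_\ell\phi$, hence $\las g,f_\ell\phi\ras=\las g,\phi_\ell\ras_1$, and by the consecutive projections in Section~\ref{sec_below_02} defining $g$ these vanish. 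Therefore $|\las\tilde g,I_\beta^o\tilde\phi\ras|\le C|D|^{-5}\|g\|$, and Young's inequality with parameter $\delta/6$ yields
\begin{equation*}
|2\re\gamA\las\tilde g,I_\beta^o\tilde\phi\ras|\le \frac{\delta}{6}\|g\|^2+C\frac{|\gamA|^2}{|D|^{10}}.
\end{equation*}

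For the remaining two terms I would use the crudest case $k=2$ of Corollary~\ref{cor_app_03}, combined with the exponential decay of $\phi_2$ and $\phi_3$ from Corollary~\ref{cor_exp_01}. This gives $|\las\tilde g,I_\beta^o\tilde\phi_l\ras|\le C|D|^{-3}\|g\|$ for $l=2,3$, and after dividing by the prefactors $|D|^3$ and $|D|^4$ and applying Young's inequality with parameter $\delta/6$ one obtains
\begin{equation*}
\Big|\tfrac{2\re\gamB}{|D|^3}\las\tilde g,I_\beta^o\tilde\phi_2\ras\Big|\le \tfrac{\delta}{6}\|g\|^2+C\tfrac{|\gamB|^2}{|D|^{12}},\qquad \Big|\tfrac{2\re\gamC}{|D|^4}\las\tilde g,I_\beta^o\tilde\phi_3\ras\Big|\le \tfrac{\delta}{6}\|g\|^2+C\tfrac{|\gamC|^2}{|D|^{14}}.
\end{equation*}
Summing the three bounds gives \eqref{eq_below_39} for $|D|$ sufficiently large.

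The only delicate step is the first term, where the naive bound $|\las\tilde g,I_\beta^o\tilde\phi\ras|\le C|D|^{-3}\|g\|$ would combine with the factor $|\gamA|$ to produce a non-absorbable $|\gamA|^2/|D|^{6}$ contribution, destroying the whole variational scheme. Everything therefore hinges on transferring the $L^2$-pairings $\las g,f_\ell\phi\ras$ into the weighted pairings $\las g,\phi_\ell\ras_1$, which requires $g$ to lie in the form domain of $\tilde H_\bmin-\mu^\alpha$ (inherited from $\psi\in\Hfa$) and the resolvent identity from Lemma~\ref{lem_app_03A}; once this is justified, the cancellation of both the $|D|^{-3}$ and the $|D|^{-4}$ contributions is automatic from the construction of $g$ as the remainder after projecting onto $\phi$, $\phi_2$, $\phi_3$ in the appropriate inner products.
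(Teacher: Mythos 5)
Your proof is correct and follows essentially the same route as the paper's: apply Corollary~\ref{cor_app_03} with $k=4$ for the $\phi$-term (exploiting the orthogonality $\las g,\phi_2\ras_1=\las g,\phi_3\ras_1=0$ built into the construction of $g$ to kill the $|D|^{-3}$ and $|D|^{-4}$ contributions) and with $k=2$ for the $\phi_2,\phi_3$-terms, then close with Young's inequality to absorb the small-$\delta$ multiple of $\|g\|^2$. Your identification of the resolvent identity $(\tilde H_\bmin-\mu^\alpha)\phi_\ell=f_\ell\phi$ as the transfer mechanism between $\las g,f_\ell\phi\ras$ and $\las g,\phi_\ell\ras_1$ is exactly the step the paper uses implicitly.
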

\begin{proof}
Recall 
\begin{equation}\label{eq_app_15}
B_3=2\re \gamA\las \tilde g,I_\beta^o \tilde \phiA\ras + \frac{2\re \gamB}{\D^3} \las \tilde g,I_\beta^o \tilde \phiB\ras + \frac{2 \re \gamC}{\D^4} \las \tilde g, I_\beta^o \tilde \phiC\ras.
\end{equation}
For the first term, by Corollary~\ref{cor_app_03} with $k=4$ and the analogous to \eqref{eq_app_22} we get
\begin{equation}\label{eq_below_20}
2\re \gamA\las \tilde g,I_\beta^o \tilde \phiA\ras \geq  2\re \gamA|D|^{-3} \las g,  f_2 \phiA\ras +2\re \gamA|D|^{-4}\las g, f_3\phiA\ras -C|\gamA| \D^{-5}\Vert g \Vert  \Vert \phi\Vert
\end{equation}
where by definition of $g$ we have
\begin{equation}
\las g, f_2 \phiA\ras = \las g,\phiB\ras _1=0
\end{equation}
and
\begin{equation}
\las g, f_3\phiA\ras=\las g,\phi_3\ras_1 =0.
\end{equation}
This implies
\begin{equation}\label{eq_below_25}
2\re \gamA\las \tilde g,I_\beta^o \tilde \phiA\ras\geq -C|\gamA| \D^{-5}\Vert g \Vert  \Vert\phi\Vert.
\end{equation}
By Corollary~\ref{cor_app_03} with $k=2$ we get
\begin{equation}
\frac{2\re \gamB}{|D|^3} \las \tilde g,I_\beta^o \tilde \phiB\ras \geq -C|\gamB|\D^{-6}\Vert g \Vert \Vert \phiB\Vert
\end{equation}
and
\begin{equation}\label{eq_below_26}
\frac{2\re\gamC }{|D|^4}\las \tilde g,I_\beta^o \tilde \phiC\ras \geq -C|\gamC| \D^{-7}\Vert g \Vert  \Vert\phiC\Vert.
\end{equation}
Applying Young's inequality for products in \eqref{eq_below_25}-\eqref{eq_below_26} yields the result.
\end{proof}
%
%
%
%
\section{Orthogonality relations}\label{sec_app2} 
In this section we prove several orthogonality relations, which follow from the symmetry properties of functions in $\Wt$. Let $\mathcal P ^{(i)}:\ltn\rightarrow\ltn$ such that
\begin{equation}\label{eq_app_20}
(\mathcal P ^{(i)}\varphi)(x):=\varphi(x_1,\cdots,x_{i-1},-x_i,x_{i+1},\cdots,x_N)
\end{equation}
and define $$\pca:=\prod_{i\in \cc_1}\mathcal P ^{(i)}.$$ 
As usual we say that a function $\varphi\in \ltn$ is $\pca$-\textit{even} iff $\pca\varphi=\varphi$. A function $\varphi\in \ltn$ is called $\pca$-\textit{odd} iff $\pca \varphi=-\varphi$. Similarly, we define the operator $\pcb$ and set $$\pcn:=\pca\pcb.$$
%
%
\begin{lem}\label{lem_app_11}
Let $\ia\prec\alpha$ an irreducible representation of $S_\beta$ be such that $P^\ia \Wt\neq \emptyset$. For $\mathcal{P}_{\bullet}=\pca,\pcb,\pcn$ we have
\begin{center}
either: all functions $\phi\in P^\ia \Wt$ are $\mathcal{P}_{\bullet}$--even
\\\hspace{12mm}or: all functions $\phi\in P^\ia \Wt$ are $\mathcal{P}_{\bullet}$--odd.
\end{center}
\end{lem}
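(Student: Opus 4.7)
The plan is to reduce the lemma to Schur's lemma via Condition~2 of Theorem~\ref{thm_pre_02}. Set $V:=P^{\ia} \Wt$ for a fixed induced irreducible representation $\ia\prec\alpha$ of $S_\bmin$ with $V\neq\emptyset$, and write $\mathcal{P}_\bullet$ for any of $\pca,\pcb,\pcn=\pca\pcb$. I would proceed in three steps.

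First, I would verify that each of $\pca,\pcb,\pcn$ is an involution on $\ltn$ that commutes with $\tilde H_\bmin$ and with the entire group $S_\bmin=S(\cc_1)\times S(\cc_2)$. Commutation with $\tilde H_\bmin=\tilde H_{\cc_1}^{Z_1}+\tilde H_{\cc_2}^{Z_2}$ is direct from the definitions in~\eqref{eq_int_02} and~\eqref{def_hbeta}: the pseudo-relativistic kinetic energies $\sqrt{p_i^2+1}-1$, the nucleus–electron attractions $e^2Z_k/|x_i|$, and the electron–electron repulsions $e^2/|x_i-x_j|$ appearing in $\tilde H_{\cc_k}^{Z_k}$ all depend only on norms that are invariant under the simultaneous sign flip $x_i\mapsto -x_i$ for $i\in\cc_k$. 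Commutation with $S_\bmin$ is automatic, since the sign flip is applied uniformly to all coordinates in a cluster. Consequently $\mathcal{P}_\bullet$ preserves the ground-state eigenspace $\Wt$ of $\tilde H_\bmin^\alpha$ and commutes with $P^{\ia}$, so it restricts to an endomorphism of $V$.

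Second, I would invoke Condition~2 to deduce that $V$ is an \emph{irreducible} $S_\bmin$-module of type $\ia$. Indeed, $V=P^{\ia}\Wt$ is by construction $\ia$-isotypic, so as an $S_\bmin$-module it decomposes as $m\cdot\ia$ for some multiplicity $m\in\N$; since $\dim V=\dim\ia$ by Condition~2, necessarily $m=1$ and $V\cong\ia$ as $S_\bmin$-modules.

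Third, since $\mathcal{P}_\bullet\big|_V:V\to V$ is $S_\bmin$-equivariant and $V$ is an irreducible complex $S_\bmin$-representation, Schur's lemma forces $\mathcal{P}_\bullet\big|_V=c_\bullet \mathbb{1}_V$ for a scalar $c_\bullet\in\C$. The relation $\mathcal{P}_\bullet^2=I$ then yields $c_\bullet\in\{+1,-1\}$, which is exactly the stated dichotomy. Applying this separately to $\bullet=\cc_1,\cc_2,\cc_1\cc_2$ completes the proof.

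There is no serious obstacle here: once one observes that $\mathcal{P}_\bullet$ respects both the Hamiltonian symmetry and the permutation symmetry, Schur's lemma does all the work, with Condition~2 providing the irreducibility input. The only point that requires a moment of care is the commutation of $\pca$ with $\tilde H_{\cc_2}^{Z_2}$ (and symmetrically for $\pcb$), which is trivial because $\pca$ acts as the identity on the $\cc_2$-coordinates on which $\tilde H_{\cc_2}^{Z_2}$ depends.
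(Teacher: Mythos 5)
Your proof is correct and follows essentially the same route as the paper: establish that $\mathcal{P}_\bullet$ commutes with the relevant symmetries, then use Condition~2 to conclude that $P^{\ia}\Wt$ is irreducible and hence cannot split into two nontrivial $\mathcal{P}_\bullet$-eigenspaces. The paper phrases this tersely as ``invariant subspaces'' without explicitly naming Schur's lemma or spelling out that the $\pm 1$ eigenspaces of $\mathcal{P}_\bullet$ are invariant under $S_\bmin$ (not just under $\tilde H_\bmin$); your version makes that step, and the reduction to $V\cong\ia$ as an $S_\bmin$-module, fully explicit, which is a sound and slightly more careful rendering of the same argument.
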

\begin{proof}
From the definition of $\Ht_\beta$ it is apparent that $\mathcal{P}_{\bullet}\Ht_\beta\mathcal{P}_{\bullet}=\Ht_\beta$. Consequently the $\mathcal{P}_{\bullet}$--even and the $\mathcal{P}_{\bullet}$--odd  functions are invariant subspaces of $\Ht_\beta$. By Condition 2) we have $\dim(P^\ia \Wt)=\dim \ia$ and since $\ia$ is irreducible it can not contain nontrivial invariant subspaces, so either all functions in $P^\ia \Wt$ are $\mathcal{P}_{\bullet}$--even or all functions in $P^\ia \Wt$ are $\mathcal{P}_{\bullet}$--odd.
\end{proof}
%
%
\begin{lem}\label{lem_app_03A}
For any $\phi\in \Wt$ we have

\begin{equation}\label{eq_app_23}
 \las \phi,f_2 \phi\ras =\las \phi,f_3\phi\ras=0
\end{equation}
\end{lem}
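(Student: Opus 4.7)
The strategy is a simple parity argument based on the operators $\pca$ and $\pcn$ introduced in \eqref{eq_app_20} together with Lemma~\ref{lem_app_11}.

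First I would reduce to a single irreducible component. Since $f_2$ and $f_3$ are invariant under permutations in $S_\beta$, they commute with every projection $P^{\alpha'_\beta}$, and different irreducible types are mutually $L^2$-orthogonal. Decomposing
\begin{equation*}
\phi = \sum_{\alpha'_\beta \prec \alpha} \phi_{\alpha'_\beta}, \qquad \phi_{\alpha'_\beta} \in P^{\alpha'_\beta} \tilde{\mathcal{W}}_\beta^\alpha,
\end{equation*}
yields $\las\phi, f_l \phi\ras = \sum_{\alpha'_\beta} \las \phi_{\alpha'_\beta}, f_l \phi_{\alpha'_\beta}\ras$ for $l=2,3$, so it suffices to establish the vanishing for $\phi\in P^{\alpha'_\beta}\tilde{\mathcal{W}}_\beta^\alpha$ (with this subspace non-trivial). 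For such $\phi$, Lemma~\ref{lem_app_11} asserts that $\phi$ is an eigenvector of each of $\pca$, $\pcb$ and $\pcn$ with eigenvalue $\pm 1$, so in particular $|\phi|^2$ is invariant under each of these parity operators.

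Next I would verify the parity behavior of $f_2$ and $f_3$ directly from the defining formulas \eqref{eq:def-fA} and \eqref{eq:def-fB}. Each monomial in $f_2$ is of the form $(x_i\cdot e_D)(x_j\cdot e_D)$ or $x_i\cdot x_j$ with $i\in\cc_1$, $j\in\cc_2$, i.e.\ of degree one in the $\cc_1$-variables, so $\pca f_2 \pca = -f_2$. For $f_3$ the same inspection shows that every monomial has total degree three in the joint variables $(x_i)_{i\in\cc_1\cup\cc_2}$: the terms $(x_i-x_j)\cdot e_D\,(x_i\cdot x_j)$ and $(x_i-x_j)\cdot e_D\,(x_i\cdot e_D)(x_j\cdot e_D)$ have degree $1+2=3$, while $|x_i|^2(x_j\cdot e_D)$ and $|x_j|^2(x_i\cdot e_D)$ have degree $2+1=3$. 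Hence $\pcn f_3 \pcn = -f_3$. (Note that $f_3$ does \emph{not} have a clean parity under $\pca$ alone because of the $|x_i|^2(x_j\cdot e_D)$ terms, so $\pcn$ is really needed here.)

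Finally I would conclude by change of variables. Writing $\las\phi, f_2\phi\ras = \int f_2 |\phi|^2 dx$ and substituting $x \mapsto \pca x$ (Jacobian $1$) gives
\begin{equation*}
\las\phi, f_2\phi\ras = \int (\pca f_2)(x)\, |\pca \phi(x)|^2\, dx = -\int f_2(x) |\phi(x)|^2\, dx = -\las\phi, f_2\phi\ras,
\end{equation*}
hence $\las\phi, f_2\phi\ras = 0$. The identical argument using $\pcn$ in place of $\pca$ yields $\las\phi, f_3\phi\ras = 0$. The only piece that needs genuine care is the verification that every monomial of $f_3$ has odd total degree in $(x_i)_{i\in\cc_1\cup\cc_2}$, which is a purely bookkeeping step; the conceptual input, namely the $\mathcal{P}_\bullet$-invariance of $|\phi|^2$, is already encoded in Lemma~\ref{lem_app_11}.
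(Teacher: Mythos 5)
Your proposal is correct and follows essentially the same route as the paper: decompose $\phi$ into its $\alpha'_\beta$-components (using that $f_2, f_3$ commute with $P^{\alpha'_\beta}$ and the components are mutually orthogonal), invoke the dichotomy of Lemma~\ref{lem_app_11} to conclude that $|P^{\alpha'_\beta}\phi|^2$ is $\mathcal{P}_\bullet$-invariant, and exploit that $f_2$ is $\pca$-antisymmetric while $f_3$ is $\pcn$-antisymmetric. The only difference is cosmetic: you spell out the change-of-variables step and the failure of $f_3$ to have a clean $\pca$-parity, both of which the paper leaves implicit.
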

\begin{proof}
Recall the definitions
\begin{equation}
f_2(x)=\sumij -e^2\big(3(x_i \cdot e_D)(x_j\cdot e_D)-x_i\cdot x_j\big)
\end{equation}

and
\begin{equation}\label{eq_app_07}
\begin{split}
f_3(x) =\sumij & \frac{e^2}{2} \Big(  3(x_i-x_j)\cdot e_D\big[2(x_i\cdot x_j)-5(x_i\cdot e_D)(x_j\cdot e_D)\big]\\
 & \qquad+ 3|x_i|^2(x_j\cdot e_D)-3|x_j|^2(x_i\cdot e_D)\Big).
 \end{split}
\end{equation}
It is easy to see that $f_2$ is $\pca$--even and $\pcb$--odd. Note that $f_2$ is invariant under permutations  in $S_\beta$ which preserve the cluster decomposition $\beta$. Hence multiplication by $f_2$ commutes with the projection $P^\ia$. Since the spaces $P^\ia \Wt$ are mutually orthogonal for different $\ia$, for all $\phi\in \Wt$ we have
\begin{equation}\label{eq_app_05}
\las \phi,f_2 \phi\ras = \sum_{\ia\prec\alpha}\las P^\ia \phi,f_2 P^\ia\phi\ras.
\end{equation}
Since $|P^\ia \phi|^2$ is $\pca$--even and $f_2$ is $\pca$--odd we get
\begin{equation}
\las P^\ia \phi,f_2 P^\ia \phi\ras =0.
\end{equation}
Similarly, from the explicit expression of $f_3$ in \eqref{eq_app_07} follows
\begin{equation}
\pcn f_3=-f_3
\end{equation}
which yields 
\begin{equation}
\las \phi,f_3\phi\ras=\sum_{\ia\prec\alpha}\las P^\ia\phi,f_3 P^\ia\phi\ras = \sum_{\ia\prec\alpha}\las P^\ia\phi,(\pcn f_3 )P^\ia\phi\ras =0.
\end{equation}
\end{proof}
%
%
\begin{corr}\label{cor_app_01}
For any $\phi\in \Wt$ the functions
\begin{equation}
\phi_k:=(\tilde H_\beta^\alpha-\mu^\alpha)^{-1} f_k\phi, \quad k=2,3
\end{equation}
are well defined.
\end{corr}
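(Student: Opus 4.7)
\textbf{Proof plan for Corollary~\ref{cor_app_01}.} The strategy is to interpret $(\tilde H_\beta^\alpha-\mu^\alpha)^{-1}$ as the pseudo-inverse living on the orthogonal complement of the eigenspace $\tilde{\mathcal{W}}_\beta^\alpha$ inside $\mathcal{H}^\alpha$, and then to check that $f_k\phi$ lies in that complement. Concretely, I would first record that by Theorem~\ref{thm_app_01} the number $\mu^\alpha$ is a discrete eigenvalue of $\tilde H_\beta^\alpha$, and by Theorem~\ref{thm_hvz_01} the essential spectrum of the cluster Hamiltonian sits strictly above $\mu^\alpha$. Consequently $\mu^\alpha$ is isolated with a spectral gap $\kappa>0$, and the spectral theorem provides a bounded operator $R$ on $(\tilde{\mathcal{W}}_\beta^\alpha)^\perp\cap\mathcal{H}^\alpha$ with $\|R\|\leq \kappa^{-1}$ inverting $\tilde H_\beta^\alpha-\mu^\alpha$ there; this is what the notation $(\tilde H_\beta^\alpha-\mu^\alpha)^{-1}$ refers to throughout the paper.

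Next I would verify that $f_k\phi$ is a legitimate argument of $R$. The multiplier $f_k$ is a polynomial of degree $k$ in the coordinates $x_i$, while $\phi$ decays exponentially by Proposition~\ref{prop_int_02} applied with $\tilde\Gamma=0$; hence $f_k\phi\in L^2(\mathbb{R}^{3N})$. The remark after Theorem~\ref{thm_pre_02} states that $f_k$ commutes with the permutation projectors $P^{\alpha'_\beta}$ for every $\alpha'_\beta\prec\alpha$, so $P^\alpha f_k\phi=f_k P^\alpha\phi=f_k\phi$ and thus $f_k\phi\in\mathcal{H}^\alpha$.

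The key step, and really the only non-routine one, is to upgrade the diagonal vanishing from Lemma~\ref{lem_app_03A}, namely $\langle\psi,f_k\psi\rangle=0$ for every $\psi\in\tilde{\mathcal{W}}_\beta^\alpha$, to the full orthogonality $\langle\psi',f_k\phi\rangle=0$ for all $\psi',\phi\in\tilde{\mathcal{W}}_\beta^\alpha$. Since $f_k$ is real-valued, hence symmetric as a multiplication operator, and $\tilde{\mathcal{W}}_\beta^\alpha$ is a complex linear subspace, I would invoke the polarization identity
\begin{equation*}
4\langle\psi',f_k\phi\rangle=\sum_{j=0}^{3}i^{-j}\bigl\langle\psi'+i^j\phi,\,f_k(\psi'+i^j\phi)\bigr\rangle,
\end{equation*}
all four terms of which vanish by Lemma~\ref{lem_app_03A} because $\psi'+i^j\phi\in\tilde{\mathcal{W}}_\beta^\alpha$. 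This shows $f_k\phi\perp\tilde{\mathcal{W}}_\beta^\alpha$, and combined with the spectral-gap step places $f_k\phi$ in the domain of $R$, so $\phi_k:=Rf_k\phi$ is a well-defined element of $\mathcal{H}^\alpha$.

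I expect no serious obstacle: the main content is the polarization argument together with the exponential decay of $\phi$ that guarantees $f_k\phi\in L^2$. The only minor point of care is the logical order — one needs the exponential decay of $\phi$ itself (via Proposition~\ref{prop_int_02}) before invoking Corollary~\ref{cor_exp_01}, whose full statement depends on $\phi_2$ and $\phi_3$ being defined in the first place.
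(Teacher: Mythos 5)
Your proof is correct and in fact more careful than the paper's. The paper's own argument is a single line: that $f_k\phi$ is orthogonal to $\phi$ by Lemma~\ref{lem_app_03A}. As you rightly point out, orthogonality to $\phi$ alone does not place $f_k\phi$ in the domain of the reduced resolvent; one needs $f_k\phi$ orthogonal to the \emph{whole} eigenspace $\tilde{\mathcal{W}}_\beta^\alpha$. Your polarization argument correctly upgrades the diagonal vanishing $\langle\psi,f_k\psi\rangle=0$ (valid for every $\psi\in\tilde{\mathcal{W}}_\beta^\alpha$, since Lemma~\ref{lem_app_03A} is stated for arbitrary elements of the eigenspace) to the off-diagonal identity $\langle\psi',f_k\phi\rangle=0$, and so fills this gap. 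An equivalent route closer to the paper's mechanism is to observe that the symmetry argument in the proof of Lemma~\ref{lem_app_03A} is already off-diagonal: by Lemma~\ref{lem_app_11} the components $P^{\alpha'}\psi'$ and $P^{\alpha'}\phi$ have the same $\mathcal{P}_{\bullet}$-parity, so $\overline{P^{\alpha'}\psi'}\,P^{\alpha'}\phi$ is $\mathcal{P}_{\bullet}$-even while $f_k$ is $\mathcal{P}_{\bullet}$-odd, whence each term in $\sum_{\alpha'\prec\alpha}\langle P^{\alpha'}\psi',f_k P^{\alpha'}\phi\rangle$ vanishes. Your explicit reliance on Theorems~\ref{thm_hvz_01} and~\ref{thm_app_01} to produce the spectral gap, and your care to invoke Proposition~\ref{prop_int_02} (rather than Corollary~\ref{cor_exp_01}, to avoid circularity) for the exponential decay of $\phi$, are both sound and are clarifications of points the paper leaves implicit.
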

\begin{proof} This is an immediate consequence of Lemma~\ref{lem_app_03A}, since it states that $f_k\phi$ is orthogonal to $\phi$.
\end{proof}
%
%
\begin{corr} \label{cor_app_02}
For any $\phi\in \Wt$ we have
\begin{equation}
\quad \las \phi,\phi_2\ras_1=\las \phi,\phi_3\ras_1 = 0.
\end{equation}
\end{corr}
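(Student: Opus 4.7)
The corollary is essentially a one-line computation once one unpacks the definition of $\phi_k$ and uses self-adjointness of $\tilde H_\beta$. The plan is to write
\begin{equation*}
\las \phi,\phi_k\ras_1 = \las \phi, (\tilde H_\beta - \mu^\alpha)\phi_k\ras
\end{equation*}
and then shift the operator onto the left factor. Since $\phi_k\in \ran(P^\alpha)$ (because $f_k$ commutes with every $P^{\alpha'_\beta}$ with $\alpha'_\beta\prec\alpha$, hence with $P^\alpha$, and $(\tilde H_\beta^\alpha-\mu^\alpha)^{-1}$ preserves $\ran(P^\alpha)$), one has $\tilde H_\beta\phi_k = \tilde H_\beta^\alpha \phi_k$. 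By Corollary~\ref{cor_app_01} the vector $\phi_k$ lies in the orthogonal complement of $\tilde{\mathcal{W}}_\beta^\alpha$ inside $\Hfa$, so
\begin{equation*}
(\tilde H_\beta-\mu^\alpha)\phi_k = f_k\phi.
\end{equation*}

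The key move is then self-adjointness of $\tilde H_\beta$ on its form domain: for $\phi\in \tilde{\mathcal{W}}_\beta^\alpha\subset H^{1/2}(\R^{3N})$ and $\phi_k\in H^{1/2}(\R^{3N})$ (which holds since $f_k\phi\in L^2$ by the exponential decay in Corollary~\ref{cor_exp_01}, so $\phi_k$ is in the operator domain of $\tilde H_\beta^\alpha$), we may write
\begin{equation*}
 \las \phi,(\tilde H_\beta-\mu^\alpha)\phi_k\ras = \las (\tilde H_\beta-\mu^\alpha)\phi,\phi_k\ras = 0,
\end{equation*}
using that $\tilde H_\beta \phi = \mu^\alpha \phi$ because $\phi$ is a ground state. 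This gives $\las \phi,\phi_k\ras_1=0$ for both $k=2$ and $k=3$.

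The only point that deserves some attention is making sure Corollary~\ref{cor_app_01} really does place $\phi_k$ in the domain where we may legitimately move $\tilde H_\beta$ from one side of the inner product to the other; this follows because $f_k\phi$ is in $L^2$ and orthogonal to $\tilde{\mathcal{W}}_\beta^\alpha$ (the parity argument in Lemma~\ref{lem_app_03A} extends by the identical $\mathcal P_{\cc_1}$/$\mathcal P_{\cc_2}$/$\mathcal P_{\cc_1\cc_2}$-parity computation to the full bilinear pairing $\las \phi',f_k\phi\ras$ with $\phi,\phi'\in \tilde{\mathcal W}_\beta^\alpha$), so the pseudo-inverse $(\tilde H_\beta^\alpha-\mu^\alpha)^{-1}$ genuinely produces an element of the operator domain. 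With this in place, the computation above is the entire proof. There is no real obstacle — the statement is a direct consequence of Lemma~\ref{lem_app_03A} combined with the observation that $\phi$ itself is annihilated by $\tilde H_\beta-\mu^\alpha$.
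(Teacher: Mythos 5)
Your proof is correct, though it goes by a slightly different route than what the paper implicitly intends. Placed immediately after Corollary~\ref{cor_app_01}, the intended one-liner is to cancel the reduced resolvent directly: $\las\phi,\phi_k\ras_1 = \las\phi,(\tilde H_\beta-\mu^\alpha)\phi_k\ras = \las\phi,f_k\phi\ras = 0$, the last equality by the parity argument of Lemma~\ref{lem_app_03A}, so there the parity of $f_k$ is doing the work. You instead transfer $(\tilde H_\beta-\mu^\alpha)$ onto the left slot by self-adjointness and use that $\phi$ is annihilated by it. This is equally short and arguably cleaner: it exhibits that $\las\phi,\psi\ras_1$ vanishes for \emph{every} $\psi$ in the operator domain of $\tilde H_\beta$, which is exactly why $\Vert\cdot\Vert_1$ is only a semi-norm, and nothing specific to $f_2$ or $f_3$ enters the final step; the price is the modest domain bookkeeping that lets you move the operator across the inner product, which you do address. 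One correction: the parenthetical claim that $f_k$ commutes with $P^\alpha$ "because $f_k$ commutes with every $P^{\alpha'_\beta}$" is a false inference — $P^\alpha$ is built from all of $S_N$ whereas $f_k$ is only $S_\beta$-invariant, so $f_k$ need not commute with $P^\alpha$ at all. This does not damage your argument, since the self-adjointness step only requires $\phi,\phi_k\in D(\tilde H_\beta)$ and not $\phi_k\in\ran(P^\alpha)$, but you should drop that justification.
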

\begin{lem}\label{lem_app_08}
For any $\phi\in \Wt$ we have
\begin{equation}
\las \phiB,  f_2 \phiB\ras =0 .
\end{equation}
\end{lem}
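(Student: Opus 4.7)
The plan is to repeat the parity argument used for Lemma~\ref{lem_app_03A}, lifted from $\phi$ to $\phiB$. The key observations are:

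\textbf{Step 1: Parity of $f_2$ and $\tilde H_\beta$ under $\pca$.} Each summand of $f_2$ is bilinear in $x_i$ (with $i\in\cc_1$) and $x_j$ (with $j\in\cc_2$), so replacing every $x_i$, $i\in\cc_1$, by $-x_i$ flips its sign. Thus $\pca f_2 \pca = -f_2$, i.e.\ $f_2$ is $\pca$-antisymmetric. On the other hand, $\tilde H_\beta = \tilde H_{\cc_1}^{Z_1} + \tilde H_{\cc_2}^{Z_2}$; each summand depends only on moduli $|x_i|$ and $|x_i-x_j|$ within its own cluster, so $\tilde H_\beta$ commutes with $\pca$. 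Consequently $(\tilde H_\beta-\mu^\alpha)^{-1}$ commutes with $\pca$ on the orthogonal complement of $\Wt$ (where it is well-defined).

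\textbf{Step 2: Decompose by induced representations.} Write $\phi=\sum_{\ia_\beta\prec\alpha}P^{\ia_\beta}\phi$. Since $f_2$ is $S_\beta$-invariant and $\tilde H_\beta$ commutes with all $P^{\ia_\beta}$, so do $f_2$ and $(\tilde H_\beta-\mu^\alpha)^{-1}$; therefore
\begin{equation*}
P^{\ia_\beta}\phiB = (\tilde H_\beta-\mu^\alpha)^{-1} f_2\, P^{\ia_\beta}\phi.
\end{equation*}
By Lemma~\ref{lem_app_11} each non-zero component $P^{\ia_\beta}\phi$ has definite $\pca$-parity $\sigma_1(\ia_\beta)\in\{\pm 1\}$. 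Combining with Step~1 shows that $P^{\ia_\beta}\phiB$ has parity $-\sigma_1(\ia_\beta)$, i.e.\ also definite parity under $\pca$.

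\textbf{Step 3: Vanishing of $\las \phiB,f_2\phiB\ras$.} Expand
\begin{equation*}
\las \phiB,f_2\phiB\ras = \sum_{\ia_\beta,\ia_\beta^{\prime\prime}} \las P^{\ia_\beta}\phiB,\, f_2\, P^{\ia_\beta^{\prime\prime}}\phiB\ras .
\end{equation*}
For $\ia_\beta\neq\ia_\beta^{\prime\prime}$ the cross term is zero because $f_2\, P^{\ia_\beta^{\prime\prime}}\phiB = P^{\ia_\beta^{\prime\prime}} f_2\,\phiB$ lies in the range of $P^{\ia_\beta^{\prime\prime}}$, which is orthogonal to the range of $P^{\ia_\beta}$. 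For the diagonal term, since $P^{\ia_\beta}\phiB$ has definite $\pca$-parity, the density $|P^{\ia_\beta}\phiB(x)|^2$ is $\pca$-symmetric, while $f_2$ is $\pca$-antisymmetric; hence the integrand $|P^{\ia_\beta}\phiB|^2 f_2$ is $\pca$-antisymmetric, so
\begin{equation*}
\las P^{\ia_\beta}\phiB,\, f_2\, P^{\ia_\beta}\phiB\ras = 0.
\end{equation*}

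There is no substantial obstacle; the only bookkeeping point is Step~2, namely that $f_2$ and $(\tilde H_\beta-\mu^\alpha)^{-1}$ both commute with the projections $P^{\ia_\beta}$, which is needed to reduce to components of definite parity.
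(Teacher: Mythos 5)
Your proof is correct and follows essentially the same approach as the paper. The paper's own proof is a one-liner — decompose by $\ia_\beta$, note that $(\tilde H_\beta-\mu^\alpha)^{-1}$ commutes with $\pca$ while $f_2$ anticommutes, so the odd number of $f_2$ factors flips the sign under $\pca$ — and your Steps 1–3 simply spell out the same argument, in particular the decomposition by induced representations and the appeal to Lemma~\ref{lem_app_11} that the paper leaves implicit by referring back to the proof of Lemma~\ref{lem_app_03A}.
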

\begin{proof}
By the same argument used in Lemma~\ref{lem_app_03A}, since $f_2$ appears three times in the expression
\begin{equation}
\begin{split}
\las \phiB, f_2 \phiB\ras=\las \resbt f_2 \phi,f_2 \resbt f_2 \phi\ras
\end{split}
\end{equation}
applying $\pca$ results in a change of sign which yields the result.
\end{proof}
%
%
\begin{lem}\label{lem_app_03B}
For any $\phi\in \Wt$ and with $\phi_2,\ \phi_3$ defined in Corollary~\ref{cor_app_01} we have
\begin{equation}
\begin{split} 
& i) \quad\las \phi_2,\phi_3\ras=\las \phi_2,f_3\phi\ras = \las \phi_2,f_5\phi\ras = 0 \\
&ii) \quad \las \phi_3,f_2 \phi\ras=\las \phi_3,f_4 \phi\ras = 0.
 \end{split}
\end{equation}
\end{lem}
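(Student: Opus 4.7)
The plan is to mimic the symmetry argument in Lemmas~\ref{lem_app_03A} and~\ref{lem_app_08}, but with the total parity $\mathcal{P}_{\cc_1\cc_2}$ (inversion of all electron coordinates) replacing the partial parities $\mathcal{P}_{\cc_1}$ and $\mathcal{P}_{\cc_2}$, because $\mathcal{P}_{\cc_1\cc_2}$ acts with a definite sign on every $f_n$. Since each $f_n$ and the resolvent $(\tilde H_\beta - \mu^\alpha)^{-1}$ are $S_\beta$-invariant, both commute with every $P^{\alpha'_\beta}$; the inner products in the statement therefore split orthogonally,
\[
\langle \phi_n, f_m\phi\rangle = \sum_{\alpha'_\beta\prec\alpha}\langle P^{\alpha'_\beta}\phi_n, P^{\alpha'_\beta} f_m\phi\rangle, \qquad \langle \phi_n, \phi_m\rangle = \sum_{\alpha'_\beta\prec\alpha}\langle P^{\alpha'_\beta}\phi_n, P^{\alpha'_\beta}\phi_m\rangle,
\]
so it is enough to treat each $\alpha'_\beta$-block separately. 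By Lemma~\ref{lem_app_11} we may then assume $\mathcal{P}_{\cc_1\cc_2}\phi = \epsilon\phi$ for some $\epsilon \in \{-1,+1\}$.

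The substantive step will be to verify $\mathcal{P}_{\cc_1\cc_2} f_n = (-1)^n f_n$ for $n=2,3,4,5$, which I would read off directly from the decomposition $f_n = -e^2 Z_2\,\mathcal{F}_n^{(1)} - e^2 Z_1\,\mathcal{F}_n^{(2)} + e^2\,\mathcal{F}_n^{(3)}$ in \eqref{eq_app_04}. In each summand the modulus factor is inversion-invariant, while the argument of the Legendre polynomial ($x_i/|x_i|\cdot e_D$, $-x_j/|x_j|\cdot e_D$, or $(x_i-x_j)/|x_i-x_j|\cdot e_D$) is an odd function under $\mathcal{P}_{\cc_1\cc_2}$; since $P_n$ is a polynomial of pure parity $(-1)^n$, each term — and hence all of $\mathcal{F}_n^{(1)}$, $\mathcal{F}_n^{(2)}$, $\mathcal{F}_n^{(3)}$ and $f_n$ — picks up exactly $(-1)^n$. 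The only place that deserves a careful look is $\mathcal{F}_n^{(3)}$, where one must observe that the joint inversion of both $x_i$ and $x_j$ still sends $x_i-x_j$ to $-(x_i-x_j)$, so the Legendre polynomial contributes the full factor $(-1)^n$ and there is no partial cancellation. Since the pseudo-relativistic kinetic energy together with the one- and two-body Coulomb potentials appearing in $\tilde H_\beta$ is inversion-even, $\mathcal{P}_{\cc_1\cc_2}$ commutes with $\tilde H_\beta$ and hence with $(\tilde H_\beta - \mu^\alpha)^{-1}$. Combined with $\mathcal{P}_{\cc_1\cc_2}\phi = \epsilon\phi$ this yields
\[
\mathcal{P}_{\cc_1\cc_2}(f_n\phi) = (-1)^n\epsilon\, f_n\phi, \qquad \mathcal{P}_{\cc_1\cc_2}\phi_n = (-1)^n\epsilon\, \phi_n.
\]

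To conclude, I would use that for any two vectors $u,v$ of definite $\mathcal{P}_{\cc_1\cc_2}$-parities $\epsilon_u,\epsilon_v$, unitarity of the involution forces $\langle u,v\rangle = \epsilon_u\epsilon_v\,\langle u,v\rangle$, so the pairing vanishes whenever $\epsilon_u\epsilon_v = -1$. Applied to the five inner products in the statement, the combined parity equals $(-1)^{m+n}\epsilon^2 = (-1)^{m+n}$ with $(m,n)$ running through $(2,3), (2,3), (2,5), (3,2), (3,4)$; each of these index sums is odd, so every one of the listed inner products equals its own negative and therefore vanishes. No genuine obstacle is expected: once the $\mathcal{P}_{\cc_1\cc_2}$-parity of each $f_n$ has been identified, the lemma reduces to this parity count.
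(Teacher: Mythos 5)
Your proof is correct and takes essentially the same approach as the paper: both hinge on the identity $\pcn f_n = (-1)^n f_n$ (from $P_n(-z) = (-1)^n P_n(z)$) together with the fact that $\pcn$ commutes with $\tilde H_\bmin$ and hence with the resolvent, so that the listed pairings carry odd total parity and vanish. You are slightly more explicit than the paper in invoking the $\alpha'_\beta$-block decomposition and Lemma~\ref{lem_app_11} to ensure $\phi$ has definite $\pcn$-parity on each block — a step the paper leaves implicit but that is indeed needed for the change-of-variables argument to close.
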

\begin{proof}
Notice that the Legendre polynomials fulfill
\begin{equation}
P_n(-z)=(-1)^nP_n(z).
\end{equation}
In particular for $h,D\in \R^3$ we get
\begin{equation}
P_n \left(\frac{-h}{|h|}\cdot \frac{D}{|D|}\right)=(-1)^n P_n \left(\frac{h}{|h|}\cdot \frac{D}{|D|}\right)
\end{equation}
and thus
\begin{equation}
\big(\pcn f_n\big)(x)=(-1)^n f_n(x).
\end{equation}
Hence
\begin{equation}\label{eq_app_10}
\las f_2\phi,f_3\phi\ras = \big\las\big( \pcn f_2\big) \phi,\big(\pcn f_3\big)\phi\big\ras=-\las f_2\phi,f_3\phi\ras=0.
\end{equation}
Analogously
\begin{equation}\label{eq_app_12}
\las f_2 \phi,f_5\phi\ras =\las f_3 \phi,f_4\phi\ras=0.
\end{equation}
Since $\tilde H_\bmin$ commutes with $\pcn$, so do $(\tilde H_\bmin-\mu^\alpha)$ and $(\tilde H_\bmin-\mu^\alpha)^{-1}$.  Hence by the same argument we also get 
\begin{equation}
\las \phi_2,\phi_3\ras=\las \phi_2, f_3\phi\ras = \las \phi_2, f_5\phi\ras=\las \phi_3, f_2 \phi\ras=\las \phi_3,f_4 \phi\ras = 0.
\end{equation}
\end{proof}
%
%
In the next lemma we will use the $SO(3)$ symmetry of the system. 
%
%
\begin{lem}\label{lem_app_02}
For any $\phi\in \Wt$ and $\phi_2=( \tilde H_\bmin-\mu^\alpha)^{-1}f_2\phi$ we have
\begin{equation}\label{eq_app_02}
\las \phi_2, f_4 \phi\ras =0.
\end{equation}
\end{lem}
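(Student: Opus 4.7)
The proof uses the $SO(3)$-invariance of $\tilde H_\beta$ together with Condition~2) of Theorem~\ref{thm_pre_02}. Let $T_R\colon L^2(\R^{3N}) \to L^2(\R^{3N})$ denote the diagonal rotation operator
\begin{equation*}
(T_R \psi)(x_1,\ldots,x_N) := \psi(R^{-1}x_1,\ldots,R^{-1}x_N), \qquad R \in SO(3).
\end{equation*}
The Hamiltonian $\tilde H_\beta = \tilde H_{\cc_1}^{Z_1} + \tilde H_{\cc_2}^{Z_2}$ commutes with every $T_R$, and hence so does the resolvent $(\tilde H_\beta - \mu^\alpha)^{-1}$. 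Moreover, the argument given in the third Remark after Theorem~\ref{thm_pre_02} shows that Condition~2) forces $\phi$ to transform according to the trivial ($\ell = 0$) irreducible representation of $SO(3)$, that is, $T_R \phi = \phi$ for every $R \in SO(3)$.

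The key step will be to identify the $SO(3)$-transformation type of the functions $f_n$ from \eqref{eq_app_04}. Each summand of $\mathcal{F}_n^{(1)}$, $\mathcal{F}_n^{(2)}$, $\mathcal{F}_n^{(3)}$ has the form $|h|^n P_n(\hat h \cdot e_D)$ with $h \in \{x_i,\,-x_j,\,x_i - x_j\}$, and the addition formula for Legendre polynomials rewrites this as
\begin{equation*}
|h|^n P_n(\hat h \cdot e_D) = \frac{4\pi}{2n+1}\sum_{m=-n}^{n}\overline{Y_n^m(e_D)}\,|h|^n Y_n^m(\hat h),
\end{equation*}
a linear combination (with coefficients depending on $e_D$) of solid spherical harmonics of degree $n$ in $h$. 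Under the diagonal action $h \mapsto R^{-1}h$, the span of $\{|h|^n Y_n^m(\hat h)\}_{m=-n}^{n}$ is the $(2n+1)$-dimensional $\ell = n$ irreducible representation of $SO(3)$, and summing over electron indices preserves the representation type. Consequently $f_n$, regarded as a multiplication operator, lies entirely in the $\ell = n$ isotypic subspace.

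Combining these facts, since $T_R \phi = \phi$, the function $f_2 \phi$ lies in the $\ell = 2$ isotypic subspace of $L^2(\R^{3N})$; and since $(\tilde H_\beta - \mu^\alpha)^{-1}$ commutes with $T_R$, its image $\phi_2 = (\tilde H_\beta - \mu^\alpha)^{-1} f_2 \phi$ lies in the same subspace. By the same reasoning, $f_4 \phi$ lies in the $\ell = 4$ isotypic subspace. Schur orthogonality, applied via Haar integration of the unitary action $T_R$, implies that isotypic subspaces of $L^2(\R^{3N})$ corresponding to distinct irreducible representations of $SO(3)$ are mutually orthogonal, which yields
\begin{equation*}
\las \phi_2, f_4 \phi \ras = 0.
\end{equation*}
The only nontrivial point is the verification that $f_n$ is of pure type $\ell = n$ under the diagonal $SO(3)$-action — particularly the pair terms $|x_i - x_j|^n P_n(\widehat{x_i - x_j} \cdot e_D)$, which requires the observation above that the relative coordinate is a genuine vector on which the diagonal rotation still acts by $R^{-1}$. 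Once this is in place, the orthogonality is a direct consequence of standard representation theory.
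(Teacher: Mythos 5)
Your proof is correct and takes essentially the same route as the paper's: both rely on Condition~2) to place $\phi$ in the $\ell=0$ sector of the $SO(3)$-action, identify $f_n\phi$ with the $\ell=n$ isotypic component, use rotational invariance of $\tilde H_\beta$ to transfer this to $\phi_2$, and then invoke orthogonality of distinct isotypic subspaces. The only difference is that you spell out the spherical-harmonic addition formula to justify that $f_n$ is of pure type $\ell=n$, whereas the paper states this fact and cites Hamermesh.
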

\begin{proof}
As a first step we notice that the functions $f_2$ and $f_4$ are the sums of Legendre polynomials of degrees $2$ and $4$ respectively. For the Legendre polynomials $P_k$ of order $k$ and the spherical harmonics $Y^m_\ell$ we have
$$
P_\ell(\cos \theta )=\sqrt{\frac{4\pi}{(2\ell+1)}}Y^0_\ell(\theta, \varphi).
$$
Note that in \eqref{eq_app_16}-\eqref{eq_app_18}, leading to the definition of $f_n$ in \eqref{eq_app_04}, for $\mathcal F^{(1)}_n$ we have $\cos \theta = \frac{x_i}{|x_i|}\cdot \frac{D}{\D}$, for $\mathcal F^{(2)}_n$ we have $\cos \theta = \frac{-x_j}{|x_j|}\cdot \frac{D}{\D}$ and in $\mathcal F^{(3)}_n$ we have $\cos \theta = \frac{x_i-x_j}{|x_i-x_j|}\cdot \frac{D}{\D}$ respectively.
Consequently the Legendre polynomials of order $\ell$ are transformed according to the irreducible representation of degree $\ell$ under the actions of the $SO(3)$ group, see \cite{hamermesh1962}.\par
By Condition~2) of Theorem~\ref{thm_pre_02}, the state $\phi$ belongs to the irreducible representation of degree $\ell=0$ of the group $SO(3)$. Thus the products $f_2\phi$ and $f_4\phi$ are transformed according to the representations of degree $\ell=2$ and $\ell=4$ respectively.\par 
By rotational invariance of the operator $\tilde H_\beta$, the function $\resbt f_2 \phi$ has the same symmetry a $f_2 \phi$, namely it transforms according to the irreducible representation of degree $\ell=2$.
\par 
But functions belonging to two different irreducible representations are orthogonal. This proves the lemma.
\end{proof}
%
%
%
\section{Remark on actions of the permutation group}\label{sec_sym}
Let $g\in L^2(\R^{3(m+n)})$ be a function depending on position vectors of $(m+n)$ particles. Let $A$ be an operator on $L^2(\R^{3m})$ and $g\in \mathcal{D}(A\otimes \mathbbm{1}^{3n})$, so that $A\otimes \mathbbm{1}^{3n}$ acts on $g$ as a function of the first $m$ position vectors.\par 
\begin{lem}\label{lem_sym} Assume that for some $R>0$ we have $\supp (g) \subset \{\xi \in \R^{3(m+n)}, |\xi_i|<R \ i=1,\cdots,m, |\xi_j|\geq 2R \ j\geq m+1\}$. Let $S_{m+n}$ be the permutation group of $(m+n)$ particles and $\pi\in S_{m+n}$ such that $\pi\notin S_m\otimes S_n$. In other words $\pi$ exchanges at least one of the first $m$ particles with a particle labelled  by $j\geq m+1$. Then
\begin{equation}\label{eq_app_21}
\supp\big((A\otimes \mathbbm{1}^{3n} )g\big) \cap \supp\big(\mathcal T _\pi g \big)=\emptyset
\end{equation}
where $\T _\pi g(\xi)=g(\xi_{\pi^{-1}(1)},\cdots ,\xi_{\pi^{-1}(m+n)})$.
\end{lem}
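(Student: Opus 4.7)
The strategy is to identify a coordinate index on which the supports of $(A\otimes \mathbbm{1}^{3n})g$ and $\T_\pi g$ impose incompatible size constraints, deduced directly from the product structure of $\supp(g)$ and the fact that $A$ acts only in the first $m$ coordinates.

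First I would unpack the definitions. By hypothesis, $\supp(g)\subset \{\xi:|\xi_i|<R, i\le m\}\cap\{\xi:|\xi_j|\ge 2R, j\ge m+1\}$. For $\T_\pi g$, the support condition translates coordinate-wise as follows: $\xi\in\supp(\T_\pi g)$ forces $|\xi_l|<R$ whenever $\pi(l)\le m$, and $|\xi_l|\ge 2R$ whenever $\pi(l)>m$. For $(A\otimes \mathbbm{1}^{3n})g$, since the operator acts trivially on the last $n$ coordinates, one has $(A\otimes \mathbbm{1}^{3n})g(x,y)=0$ as soon as $g(\cdot,y)\equiv 0$; consequently $\supp((A\otimes \mathbbm{1}^{3n})g)$ is still contained in $\{\xi:|\xi_j|\ge 2R \text{ for all } j\ge m+1\}$.

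Next I would use the hypothesis $\pi\notin S_m\times S_n$. Since $\pi$ is a bijection on $\{1,\ldots,m+n\}$, the fact that at least one of the first $m$ labels is sent to $\{m+1,\ldots,m+n\}$ forces, by counting, the existence of some $j_0\ge m+1$ with $\pi(j_0)\le m$. Plugging this $j_0$ into the two support descriptions above, membership in $\supp(\T_\pi g)$ requires $|\xi_{j_0}|<R$ (because $\pi(j_0)\le m$), while membership in $\supp((A\otimes \mathbbm{1}^{3n})g)$ requires $|\xi_{j_0}|\ge 2R$ (because $j_0\ge m+1$). These two conditions cannot hold simultaneously, which yields \eqref{eq_app_21}.

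There is essentially no serious obstacle here; the only mildly delicate point is the observation that applying $A$ to the first $m$ variables cannot enlarge the support in the remaining $n$ variables, which follows from the tensor structure $A\otimes \mathbbm{1}^{3n}$ acting as $A$ on each fiber $g(\cdot,y)$. Once that is noted, the rest is a one-line support disjointness argument based on the gap between $R$ and $2R$.
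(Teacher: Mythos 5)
Your proof is correct and follows essentially the same argument as the paper: both identify an index $j_0\geq m+1$ with $\pi(j_0)\leq m$ and observe that this forces $|\xi_{j_0}|\geq 2R$ on $\supp\big((A\otimes\mathbbm{1}^{3n})g\big)$ but $|\xi_{j_0}|<R$ on $\supp(\T_\pi g)$. Your write-up simply spells out the counting argument for the existence of $j_0$ and the tensor-structure observation more explicitly than the paper does.
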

\begin{proof}
For local operators $A$ this relation was first used by Sigalov and Zhislin to prove existence of an eigenvalue
of atoms with arbitrary types of rotational and permutational symmetry \cite{SigalovZhislin}. If the operator is local, \eqref{eq_app_21} can be rewritten as
\begin{equation}
\supp(g) \cap \supp \big(\mathcal T _\pi g\big) = \emptyset.
\end{equation}
If $A$ is a non-local operator, \eqref{eq_app_21} is still true, because for at least one particle $i_0\geq m+1$ we have
\begin{equation}
|\xi_{i_0}|>2R\quad  \text{ on } \supp\big((A\otimes \mathbbm{1}^{3n})g\big)
\end{equation}
and
\begin{equation}
|\xi_{i_0}|<R\quad  \text{ on } \supp\big(\mathcal T_\pi g\big).
\end{equation}
\end{proof}
\vspace{1cm}
\textbf{Acknowledgements:} Semjon Vugalter thanks the University of Toulon and Jean-Marie Barbaroux thanks the Karlsruhe Institute of Technology for their hospitality and financial supports for their  visits, during which a part of this work was done. 

Dirk Hundertmark and Semjon Vugalter are funded by the Deutsche Forschungsgemeinschaft (DFG, German Research Foundation) -- Project-ID 258734477 -- SFB 1173. 

We would also like to thank the Centre International de Rencontres Math\'ematiques (CIRM) in Luminy, whose \textit{REB program}  
enabled a scientific exchange at an important stage of the work and Ioannis Anapolitanos for discussions on the van der Waals--London asymptotic. 
\vspace{1cm}
%
%
\bibliographystyle{plain}
\bibliography{bibfile}
\end{document}